\theoremstyle{definition}
\newtheorem{definition}[thm]{Definition}
\theoremstyle{plain}
\newtheorem{lemma}[thm]{Lemma}
\newtheorem{theorem}[thm]{Theorem}
\renewcommand{\a}{\alpha}
\newcommand{\s}{\sigma}
\newcommand{\e}{\epsilon}
\newcommand{\rr}{0}
\newcommand{\bb}{1}
\renewcommand{\gg}{2} 
\newcommand{\yy}{3} 
\newcommand{\ww}{4} 
\newcommand{\Aa}{\mathcal{A}}
\newcommand{\Ll}{\mathcal{L}}
\newcommand{\Ss}{\mathcal{S}}
\newcommand{\xra}{\xrightarrow}
\newcommand{\incl}{\subseteq}
\newcommand{\prfx}{\sqsubseteq_{\mathsf {pr}}}
\newcommand{\nprfx}{\not\sqsubseteq_{\mathsf {pr}}}
\newcommand{\pprfx}{\sqsubset_{\mathsf {pr}}}
\newcommand{\sfx}{\sqsubseteq_{\mathsf {sf}}}
\newcommand{\NP}{\mathsf{NP}}
\newcommand{\out}{\operatorname{Out}}
\newcommand{\outp}{\overline{\operatorname{Out}}}
\newcommand{\lsfx}[1]{\sqsubseteq_{\scriptscriptstyle
    \mathsf{lsf}}^{\scriptscriptstyle #1}}
\renewcommand{\e}{\varepsilon}
\renewcommand{\epsilon}{\e}
\newcommand{\spath}[3]{\ensuremath{\mathsf{SP}(#1 \rightsquigarrow #2,
#3)}}
\newcommand{\spaths}[2]{\ensuremath{\mathsf{SP}(#1, #2)}}
\begin{document}

\title{Deterministic Suffix-reading Automata}

\author[R. Keerthan]{R Keerthan\lmcsorcid{0009-0003-8278-0951}}[a]
\address{Tata Consultancy Services Innovation Labs \\ Pune, India \& Chennai Mathematical Institute, India}
\email{keerthan@cmi.ac.in}

\author[B. Srivathsan]{B Srivathsan\lmcsorcid{0000-0003-2666-0691}}[b]
\address{Chennai Mathematical Institute, India \& CNRS IRL 2000, ReLaX, Chennai, India}
\email{sri@cmi.ac.in}

\author[R. Venkatesh]{R Venkatesh\lmcsorcid{0009-0007-7747-4457}}[c]

\author[S. Verma]{Sagar Verma\lmcsorcid{0009-0000-9987-4151}}[c]
\address{Tata Consultancy Services Innovation Labs \\
Pune, India}
\email{venkatesh.rv@gmail.com, verma.sagar2@tcs.com}

\begin{abstract}
  We introduce deterministic suffix-reading automata (DSA), a new
  automaton model over finite words. Transitions in a DSA are labeled
  with words. From a state, a DSA triggers an outgoing transition on seeing
  a word \emph{ending} with the transition's label. Therefore,
  rather than moving along an input word letter by letter, a DSA can
  jump along blocks of letters, with each block ending in a suitable
  suffix. This feature allows DSAs to recognize regular languages more
  concisely, compared to DFAs. In this work, we focus on questions
  around finding a ``minimal'' DSA for a regular language. In this context, the number
  of states is not a faithful measure of the size of a DSA, since the
  transition-labels contain strings of arbitrary length. Hence, we
  consider total-size (number of states + number of edges + total
  length of transition-labels) as the size measure of DSAs.

  We start by formally defining the model and providing a DSA-to-DFA
  conversion that allows to compare the expressiveness and
  succinctness of DSA with related automata models.  Our main
  technical contribution is a method to \emph{derive} DSAs from a
  given DFA: a DFA-to-DSA conversion. 
  A surprising
  observation is that the smallest DSA derived from the canonical DFA of
  a regular language $L$ need not be a minimal DSA for $L$. This
  observation leads to a fundamental bottleneck in deriving a minimal
  DSA for a regular language. In fact, we prove that
  given a DFA and a number $k \ge 0$, the problem of deciding if there
  exists an equivalent DSA of total-size $\le k$ is NP-complete. On the other hand, we impose a restriction on the DSA model and show that our derivation procedure can be adapted to produce a minimal automaton within this class of restricted DSAs, starting from the canonical DFA.

\end{abstract}

\maketitle

\section{Introduction}

Deterministic Finite Automata (DFA) are fundamental to many areas in
Computer Science. Apart from being the cornerstone in the study of
regular languages, automata have been applied in several contexts:
such as text processing~\cite{DBLP:journals/tcs/MohriMW09}, 
model-checking~\cite{DBLP:books/daglib/0007403-2}, software
verification~\cite{DBLP:conf/cav/BouajjaniJNT00,
  DBLP:conf/cav/BouajjaniHV04, 989841}, and formal specification
languages~\cite{DBLP:journals/scp/Harel87}. The popularity of automata has also led to the development
of several automata handling libraries~\cite{Awali2.2}.

A central challenge in the application of automata, in almost all of these
contexts, is the large size of the DFA involved. 
Although automata are an intuitive way to describe computations using states and transitions, their descriptions are at a low level, resulting in automata of large size for non-trivial applications. Our objective in this paper is to investigate an automaton model with a more concise representation of the computation.  
Non-determinism is one way that is well-known to give exponential succinctness. However, a
deterministic model is useful in formal specifications and automata
implementations. The literature offers several ways to tackle the problem
of large DFA size, while staying within the realm of determinism. In this
work, we propose a new solution to this problem.

The size of a DFA could be large due to various reasons. One of them is simply the size of the alphabet. For
instance, consider the alphabet of all ASCII characters. Having a
transition for each letter from each state blows up the size of the
automata. \emph{Symbolic automata}~\cite{DBLP:conf/popl/VeanesHLMB12,
  DBLP:conf/cav/DAntoniV17} have been proposed to handle large
alphabets. Letters on the edges are replaced by formulas, which club
together several transitions between a pair of states into one
symbolic transition. If the domain
is the set of natural numbers, a transition
$q \xra{\operatorname{odd(x)}} q'$ with a predicate
$\operatorname{odd(x)}$ is a placeholder for all odd
numbers~\cite{loris-page}. Symbolic automata have been implemented in many
tools and have been widely applied (see \cite{loris-page} for a list
of tools and applications).

In text processing, automata are used as an index for a large set $U$
of strings. Given a text, the goal is to find if it contains one of
the strings from the set. A na\"ive DFA that recognizes the set $U$ is
typically large. \emph{Suffix automata} (different from our model of \emph{suffix-reading} automata) are DFA that accept the set of all
suffixes of words present in $U$. These automata are used as more
succinct indices to represent $U$ and also make the pattern matching
problem more efficient~\cite{DBLP:journals/tcs/MohriMW09}.

Another dimension for reducing the DFA size is to consider
transitions over a block of letters. \emph{Generalized automata} (GA) are
extensions of non-deterministic finite automata (NFAs) that can
contain strings, instead of letters, on transitions. A word $w$ is
accepted if it can be broken down as $w_1 w_2 \dots w_k$ such that
each segment is read by a transition. This model was defined by
Eilenberg~\cite{DBLP:books/lib/Eilenberg74}, and later
Hashiguchi~\cite{DBLP:conf/icalp/Hashiguchi91} proved that for every
regular language $L$ there is a minimal GA in which the edge labels
have length at most a polynomial function in $m$, where $m$ is the size of the syntactic monoid of $L$.

Giammarresi \emph{et al.}~\cite{giammarresi1999deterministic}
consider \emph{Deterministic Generalized Automata} (DGA) and propose an
algorithm to generate a minimal DGA (in terms of the number of states)
in which the edges have labels of length at most the size of the minimal
DFA. The algorithm uses a method to suppress states of a DFA and create longer labels. The key observation is that minimal DGAs can be derived from
the canonical DFA by suppressing states.
The paper also throws open
a more natural question of minimality in the total-size of the
automaton, which includes the number of states, edges and the sum of
label lengths.

A natural extension from strings on transitions to more complex objects is to consider regular
expressions. The resulting automaton model is called \emph{Expression automata}
in~\cite{DBLP:conf/wia/HanW04}. Expression Automata were already considered
in~\cite{DBLP:journals/tc/BrzozowskiM63} to convert automata to
regular expressions. Every DFA can be converted to a two state
expression automaton with a regular expression connecting them. 
A model of \emph{Deterministic Expression automata} (DEA) was proposed
in~\cite{DBLP:conf/wia/HanW04}. To get determinism, every pair of
outgoing transitions from a state is required to have disjoint regular languages, and
moreover, each expression in a transition needs to be prefix-free: for
each $w$ in the language, no proper prefix of $w$ is present in the
language. This restriction makes deterministic expression automata
less expressive than DFAs: it is shown that DEA languages are
prefix-free regular languages. An algorithm to convert a DFA to a DEA,
by repeated state elimination, is proposed
in~\cite{DBLP:conf/wia/HanW04}. The resulting DEA is minimal in the
number of states.  The issue with generic Expression automata is the high expressivity of
the transition condition, that makes states almost irrelevant. On the
other hand, DEA have restrictions on the syntax that make the model less
expressive than DFAs. We consider an intermediate model.

\paragraph*{Our model.} In this work, we introduce \emph{Deterministic
  Suffix-reading Automata (DSAs)}.  We continue to work with strings
on transitions, as in DGA. However, the meaning of transitions is
different. A transition $q \xra{abba} q'$ is enabled if at $q$, a word
$w$ \emph{ending} with $abba$ is seen, and moreover no other
transition out of $q$ is enabled at any prefix of $w$. Intuitively, the
automaton tracks a finite set of pattern strings at each state. It
stays in a state until one of them appears as the \emph{suffix} of the
word read so far, and then makes the appropriate transition.
We start
with a motivating example. Consider a model for out-of-context \texttt{else} statements, in
relation to \texttt{if} and \texttt{endif} statements in a programming
language. Assume a suitable alphabet $\Sigma$ of characters.  Let
$L_{\texttt{else}}$ be the set of all strings over the alphabet where
(1) there are no nested \texttt{if} statements, and (2) there is an
\texttt{else} which is not between an \texttt{if} and an
\texttt{endif}.  
A DFA for this language performs string matching to detect the
\texttt{if}, \texttt{else} and \texttt{endif}. The DSA is shown in
Figure~\ref{fig:if-else}: at $s_0$, it passively reads letters until
it first sees an \texttt{if} or an \texttt{else}. If it is an
\texttt{if}, the automaton transitions to $s_1$. For instance, on a
word \texttt{abf4fgif} the automaton goes to $s_1$, since it ends with
\texttt{if} and there is no \texttt{else} seen so far. Similarly, at
$s_1$ it waits for one of the patterns \texttt{if} or an
\texttt{endif}. If it is the former, it goes to $s_3$ and rejects,
otherwise it moves to $s_0$, and so on. (Note: in case of multiple matches, the longest match is selected. This avoids non-determinism when reading \texttt{endif})

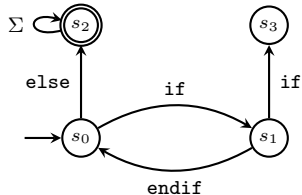
\begin{figure}[t]
  \centering
  \begin{tikzpicture}[state/.style={circle, draw, thick, inner sep =
      2pt}]
    \begin{scope}[every node/.style={state}]
      \node (0) at (0,0) {\tiny $s_0$}; \node (1) at (2.5, 0) {\tiny
        $s_1$}; \node [double] (2) at (0, 1.5) {\tiny $s_2$}; \node
      (3) at (2.5, 1.5) {\tiny $s_3$};
    \end{scope}
    \begin{scope}[->, >=stealth, thick, auto]
      \draw (-0.75, 0) to (0); \draw (0) to node {\scriptsize
        $\mathtt{else}$ } (2); \draw (0) to [bend left=30] node
      {\scriptsize $\mathtt{if}$} (1); \draw (1) to [bend left=30]
      node {\scriptsize $\mathtt{endif}$} (0); \draw (2) to [loop
      left] node {\scriptsize $\Sigma$} (2); \draw (1) to node [right]
      {\scriptsize $\mathtt{if}$} (3);
    \end{scope}
  \end{tikzpicture}
  \caption{DSA for out-of-context \texttt{else}}
  \label{fig:if-else}
\end{figure}

Suffix-reading automata have the ability to wait at a state, reading
long words until a matching pattern is seen. This results
in an arguably more readable specification for languages which are
``pattern-intensive''. This representation is orthogonal to the
approaches considered so far. Symbolic automata club together
transitions between a pair of states, whereas DSA can do this clubbing
across several states and transitions. DGA have this facility of
clubbing across states, but they cannot ignore intermediate letters,
which results in extra states and transitions.

\paragraph*{Overview of results.}
We formally present deterministic suffix-reading automata and its
semantics, quantify its size in comparison to an equivalent DFA, and
study an algorithm to construct a DSA starting from a DFA. This is in
the same spirit as in DGAs, where smaller DGAs are obtained by suppressing
states. For automata models with strings on transitions, the number of
states is not a faithful measure of the size of a DSA. As described
in~\cite{giammarresi1999deterministic} for DGAs, we consider the total-size of
a DSA which includes the number of states, edges, and the sum of label
lengths. The key contributions of this paper are:

\paragraph*{Introduction to the DSA model} We formally present deterministic suffix-reading automata (Section~\ref{sec:new-automaton-model}), and compare its size to DFAs and DGAs (Section~\ref{sec:comparison-with-dfa}). 
We prove that DSAs
accept regular languages, and nothing more.
  Every complete DFA can be seen as a DSA. For the converse, we prove
  that for every DSA of size $k$, there is a DFA with size at most
  $2k \cdot (1 + 2 |\Sigma|)$, where $\Sigma$ is the alphabet
  (Lemma~\ref{lem:tracking-dfa-language-equivalent},
  Theorem~\ref{thm:comparing-dsa-with-dfa-dga}).
  This answers the question of how small DSAs can be in comparison to
  DFAs for a certain language : if $n$ is the size of the minimal DFA
  for a language $L$, minimal DSAs for $L$ cannot be smaller than
  $\frac{n}{2 \cdot (1 + 2
    |\Sigma|)}$. 
  When the alphabet is large, one could expect smaller sized DSAs. We
  describe a family of languages $L_n$, with alphabet size $n$, for
  which the size of the minimal DFA is quadratic in $n$, whereas size of
  the corresponding DSA is a linear function of $n$ (Lemma~\ref{lem:dsa-small}).

\paragraph*{Complexity of minimization} We focus on the minimization problem for DSAs in this paper. Minimization for DFAs can be done efficiently in polynomial-time, thanks to the Myhill-Nerode characterization. For DGAs, \cite{giammarresi1999deterministic} provides a polynomial-time method to find DGAs with smallest number of states, starting from the canonical DFA. For DSAs we prove the following problem to be $\NP$-complete in Section~\ref{sec:complexity}: given a DFA and a number $k$, does there exist a language equivalent DSA with total size at most $k$? This result exhibits an inherent difficulty in minimizing DSAs. 

\paragraph*{A method to derive small DSAs from a given DFA} Similar in spirit to the work on DGAs, we would like to be able to get small DSAs starting from a given DFA. Our main technical constribution is a procedure to derive language equivalent DSAs from a given complete DFA. Of course, the given complete DFA itself can be considered as a DSA. However, our goal is to generate DSAs with as small a total size as possible. 
\begin{itemize}
\item The DFA-to-DSA
  derivation method is presented in Section~\ref{sec:suffix-tracking-sets}. In a nutshell,
  the derivation procedure selects subsets of DFA-states, and adds
  transitions labeled with (some of) the acyclic paths between
  them. The technical challenge lies in identifying sufficient conditions on
  the selected subset of states, so that the derivation procedure
  preserves the language (Theorem~\ref{def:suffix-tracking-set}).

\item We remark that minimal DSAs need not be unique, and make a
  surprising observation: the smallest DSA that we derive from the
  canonical DFA of a language $L$ need not be a minimal DSA. We find this
  surprising because (1) firstly, our derivation procedure is
  surjective: every DSA (satisfying some natural assumptions) can be
  derived from some corresponding DFA (Proposition~\ref{prop:dsa-derivable-from-tracking-dfa}), and in particular, a minimal
  DSA can be derived from some DFA; (2) the observation suggests that
  one may need to start with a bigger DFA in order to derive a minimal
  DSA -- so, starting with a bigger DFA may result in a smaller DSA
  (Section~\ref{sec:minim-some-observ}).

\item Inspired by the above observation, we present a restriction to the definition of DSA, called \emph{strong DSA} (Section~\ref{sec:strong}). We show that minimal strong DSAs can in fact be derived from the canonical DFA, through our derivation procedure. This provides more insights into our derivation procedure and in some sense explains better the capabilities of our procedure. 
\end{itemize}

This introductory work on DSAs opens several threads for future research. We discuss a few of them in the Conclusion (Section~\ref{sec:conclusion}). A shorter version of this work appeared in the conference proceedings~\cite{DBLP:journals/corr/abs-2410-22761}. In the current version, Section~\ref{sec:strong} is completely new and does not appear in the conference proceedings.

\paragraph*{Related work.} The closest to our work
is~\cite{giammarresi1999deterministic} which introduces DGAs, and
gives a procedure to derive DGAs from DFAs. The focus however is on
getting DGAs with as few states as possible. The observations presented in
Section~\ref{sec:minim-some-observ} of our work, also apply for
state-minimality: the same example shows that in order to get a DSA with fewer
states, one may have to start with a bigger DFA.  This is in sharp
contrast to the DGA setting, where the derivation procedure of
\cite{giammarresi1999deterministic} yields a minimal DGA (in the
number of states) when applied on the canonical DFA. The problem of
deriving DGAs with minimal total-size was left open
in~\cite{giammarresi1999deterministic}, and continues to remain so, to
the best of our knowledge.
Expression automata~\cite{DBLP:conf/wia/HanW04} allow regular
expressions as transition labels.  This model was already considered
in~\cite{DBLP:journals/tc/BrzozowskiM63} to convert automata to
regular expressions. Every DFA can be converted to a two state
expression automaton with a regular expression connecting them.  A
model of deterministic Expression automata (DEA) was proposed
in~\cite{DBLP:conf/wia/HanW04} with restrictions that limit the
expressive power.  An algorithm to convert a DFA to a DEA, by repeated
state elimination, is proposed in~\cite{DBLP:conf/wia/HanW04}. The
resulting DEA is minimal in the number of states.  
Minimization of NFAs was studied
in~\cite{DBLP:journals/siamcomp/JiangR93} and shown to be
hard. Succinctness of models with different features, like alternation,
two-wayness, pebbles, and a notion of concurrency, has been studied
in~\cite{DBLP:journals/tcs/GlobermanH96}.

Here are some works that are related to the spirit of finding more readable specifications. \cite{fernau2009algorithms} has used the model of deterministic GA to
develop a learning algorithm for some simple forms of regular
expressions, with
applications in learning DTD specifications for XML code. In this
paper, the author talks about readability of specifications. They
claim that regular expressions (REs) are arguably the best way to
specify regular 
languages. They also claim that the translation algorithms from DFAs
to REs give unreadable REs. In the paper, they consider specific
language classes and generate algorithms to learn simple looking REs.

Another work that talks about readability of specifications
is~\cite{DBLP:conf/icse/ZimmermanLL02}. It considers automata based
specification languages and performs extensive
experiments to determine which choice of syntax gives better
readability. The authors remark that hierarchical specifications are easier,
since not every transition needs to be specified. Once again, we
observe that there is an effort to remove transitions. Another
formalism is proposed in~\cite{DBLP:conf/date/VenkateshSKA14} which
uses a tabular notation where each cell contains patterns, which is
then compiled into a usual automaton for analysis.


\section{Preliminaries}
\label{sec:preliminaries}

We fix a finite alphabet $\Sigma$. Following standard convention, we
write $\Sigma^*$ for the set of all words (including $\e$) over
$\Sigma$, and $\Sigma^+ = \Sigma^* \setminus \{ \e\}$. For
$w \in \Sigma^*$, we write $|w|$ for the length of $w$, with $|\e|$
considered to be $0$.  A word $u$ is a \emph{prefix} of word $w$ if
$w = u v$ for some $v \in \Sigma^*$; it is a \emph{proper-prefix} if
$v \in \Sigma^+$. Observe that $\e$ is a prefix of every word. A set
of words $W$ is said to be a \emph{prefix-free set} if no word in $W$
is a prefix of another word in $W$. A word $u$ is a
\emph{suffix} (resp. \emph{proper-suffix}) of $w$ if $w = vu$ for some
$v \in \Sigma^*$ (resp. $v \in \Sigma^+$). A word $u$ is a \emph{factor} of word $w$ if it is a suffix of a proper-prefix of
$w$, that is $w = v_1 u v_2$ for some $v_1 \in \Sigma^*, v_2 \in \Sigma^+$.

A \emph{Deterministic Finite Automaton (DFA)} $M$ is a tuple
$(Q, \Sigma, q^{init}, \delta, F)$ where $Q$ is a finite set of
states, $q^{init} \in Q$ is the initial state, $F \incl Q$ is a set of
accepting states, and $\delta: Q \times \Sigma \to Q$ is a partial
function describing the transitions. If $\delta$ is complete, the
automaton is said to be a complete DFA. Else, it is called a trim
DFA. The run of DFA $M$ on a word $w = a_1 a_2 \dots a_n$ (where
$a_i \in \Sigma$) is a sequence of transitions
$(q_0, a_1, q_1) (q_1, a_2, q_2) \dots (q_{n-1}, a_n, q_n)$ where $\delta(q_i, a_{i+1}) = q_{i+1}$ for each $0 \le i < n$, and
$q_0 = q^{init}$, the initial state of $M$. The run is accepting if
$q_n \in F$. If the DFA is complete, every word has a unique run. On a
trim DFA, each word either has a unique run, or it has no run. The
language $\Ll(M)$ of DFA $M$, is the set of words for which $M$ has an
accepting run.

We will now recall some useful facts about minimality of DFAs. Here,
by minimality, we mean DFAs with the least number of states. Every
complete DFA $M$ induces an equivalence $\sim_M$ over words:
$u \sim_M v$ if $M$ reaches the same state on reading both $u$ and $v$
from the initial state.  In the case of trim DFAs, this equivalence
can be restricted to set of prefixes of words in $\Ll(M)$. For a
regular language $L$, we have the Nerode equivalence: $u \approx_L v$
if for all $w \in \Sigma^*$, we have $uw \in L$ iff $v w \in L$. By
the well-known Myhill-Nerode theorem (see
\cite{DBLP:books/daglib/0016921} for more details), there is a
canonical DFA $M_L$ with the least number of states for $L$, and
$\sim_{M_L}$ equals the Nerode equivalence $\approx_L$. Furthermore,
every DFA $M$ for $L$ is a \emph{refinement} of $M_L$: $u \sim_M v$
implies $u \sim_{M_L} v$. If two words reach the same state in $M$,
they reach the same state in $M_L$.

  A \emph{Deterministic Generalized Automaton (DGA)}~\cite{giammarresi1999deterministic} $H$ is given by
  $(Q, \Sigma, q^{init}, E, F)$ where $Q, q^{init}, F$ mean the same
  as in DFA, and $E \incl Q \times \Sigma^+ \times Q$ is a finite set
  of edges labeled with words from $\Sigma^+$. For every state $q$,
  the set $\{ \alpha \mid (q, \alpha, q') \in E \}$ is a prefix-free
  set.
A run of DGA $H$ on a word $w$ is a sequence of edges
$(q_0, \alpha_1, q_1) (q_1, \alpha_2, q_2) \dots (q_{n-1}, \a_n, q_n)$
such that $w = \alpha_1 \alpha_2 \dots \a_n$, with $q_0$ being the
initial state. As usual, the run is accepting if $q_n \in F$. Due to
the property of the set of outgoing labels being a prefix-free set,
there is at most one run on every word. The language $\Ll(H)$ is the
set of words with an accepting run. Figure~\ref{fig:dfa-dga-examples}
gives examples of DFAs and corresponding DGAs.

\begin{figure}
  \centering
  \begin{tikzpicture}[state/.style={circle, draw, thick, inner sep =
      2pt}, scale=0.8]
    \begin{scope}[every node/.style={state}]
      \node [double] (0) at (0,0) {\scriptsize $q_0$}; \node (1) at
      (2,0) {\scriptsize $q_1$};
    \end{scope}
    \begin{scope}[->,>=stealth, thick, auto]
      \draw (-0.8, 0) to (0); \draw (0) to [bend left=20] node
      {\scriptsize $a$} (1); \draw (1) to [bend left=20] node
      {\scriptsize $b$} (0);
    \end{scope}
    \node [left] at (-0.7, 0) {\scriptsize $DFA \quad M_1:$};

    \begin{scope}[yshift=-2.5cm]
      \begin{scope}[every node/.style={state}]
        \node [double] (0) at (0,0) {\scriptsize $q_0$};
      \end{scope}
      \begin{scope}[->,>=stealth, thick, auto]
        \draw (-0.8, 0) to (0); \draw (0) to [loop right] node
        {\scriptsize $ab$} (0);
      \end{scope}
      \node [left] at (-0.7, 0) {\scriptsize $DGA \quad H_1:$};
    \end{scope}

    \draw (2.8,-3) to (2.8,1);
    
    \begin{scope}[xshift = 6cm]
      \begin{scope}[every node/.style={state}]
        \node (0) at (0,0) {\scriptsize $q_0$}; \node (1) at (1.8,0)
        {\scriptsize $q_1$}; \node (2) at (3.6,0) {\scriptsize $q_2$};
        \node [double] (3) at (5.4,0) {\scriptsize $q_3$};
      \end{scope}

      \begin{scope}[->,>=stealth, thick, auto]
        \draw (-0.8, 0) to (0); \draw (0) to [loop above] node
        {\scriptsize $b$} (0); \draw (0) to node {\scriptsize $a$}
        (1); \draw (1) to node {\scriptsize $a$} (2); \draw (2) to
        [loop above] node {\scriptsize $a$} (2); \draw (2) to node
        {\scriptsize $b$} (3); \draw (3) to [bend left=20] node
        {\scriptsize $a$} (1); \draw (3) to [bend left=35] node
        {\scriptsize $b$} (0);
       
      \end{scope}
      \node [left] at (-0.7, 0) {\scriptsize $DFA \quad M_2:$};

      \begin{scope}[yshift=-2.5cm]
        \begin{scope}[every node/.style={state}]
          \node (0) at (0,0) {\scriptsize $q_0$}; \node (2) at (3.6,0)
          {\scriptsize $q_2$}; \node [double] (3) at (5.4,0)
          {\scriptsize $q_3$};
        \end{scope}
        \begin{scope}[->,>=stealth, thick, auto]
          \draw (-0.8, 0) to (0); \draw (0) to [loop above] node
          {\scriptsize $b$} (0); \draw (0) to node {\scriptsize $aa$}
          (2); \draw (2) to [loop above] node {\scriptsize $a$} (2);
          \draw (2) to node {\scriptsize $b$} (3); \draw (3) to [bend
          left=20] node {\scriptsize $aa$} (2); \draw (3) to [bend
          left=40] node {\scriptsize $b$} (0);
       
      \end{scope}
      \node [left] at (-0.7, 0) {\scriptsize $DGA \quad H_2:$};
    \end{scope}
  \end{scope}
 
\end{tikzpicture}
\caption{Examples of DFAs and corresponding DGAs, over alphabet
  $\{a, b\}$.}
\label{fig:dfa-dga-examples}
\end{figure}
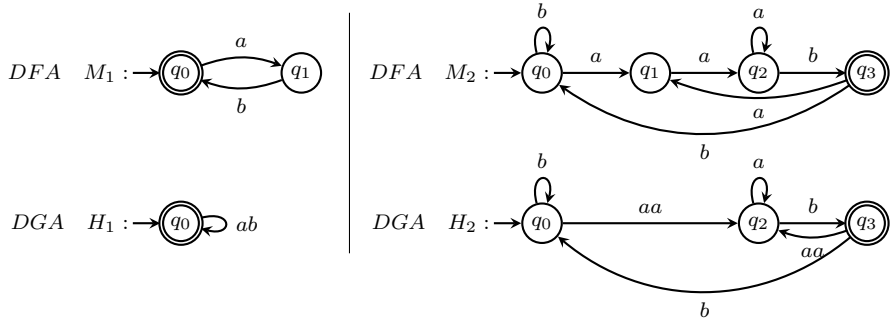

It was
shown in~ \cite{giammarresi1999deterministic} that there is no unique smallest DGA. The paper defines an operation
to suppress states and create longer labels. A state of a DGA is
called \emph{superflous} if it is neither the initial nor final state,
and it has no self-loop. For example, in
Figure~\ref{fig:dfa-dga-examples}, in $M_1$ and $M_2$, state $q_1$ is
superfluous. Such states can be removed, and every pair
$p \xra{\alpha} q$ and $q \xra{\beta} r$ can be replaced with
$p \xra{\alpha \beta} r$.  This operation is extended to a set of
states: given a DGA $H$, a set of states $S$, a DGA $\Ss(H, S)$ is
obtained by suppressing states of $S$, one after the other, in any
arbitrary order. For correctness, there should be no cycle in the
induced subgraph of $H$ restricted to $S$.
The paper proves that minimal DGAs (in number of
states) can be derived by suppressing states, starting from the
canonical DFA.


\section{A new automaton model -- DSA}
\label{sec:new-automaton-model}

We have seen an example of a deterministic suffix automaton in
Figure~\ref{fig:if-else}. A DSA consists of a set of states, and a
finite set of outgoing labels at each state. On an input word $w$, the
DSA finds the earliest prefix which ends with an outgoing label of the
initial state, erases this prefix and goes to the target state of the
transition with the matching label. Now, the DSA processes the rest of
the word from this new state in the same manner. In this section, we
will formally describe the syntax and semantics of DSA.

We start with some more examples. Figure~\ref{fig:example-aab} shows a
DSA for $L_2 = \Sigma^* aab$, the same language as the automata $M_2$
and $H_2$ of Figure~\ref{fig:dfa-dga-examples}. At $q_0$, DSA $\Aa_2$
waits for the first occurrence of $aab$ and as soon as it sees one, it
transitions to $q_3$. Here, it waits for further occurrences of
$aab$. For instance, on the word $abbaabbbaab$, it starts from $q_0$
and reads until $abbaab$ to move to $q_3$. Then, it reads the
remaining $bbaab$ to loop back to $q_3$ and accepts. On a word
$baabaa$, the automaton moves to $q_3$ on $baab$, and continues
reading $aa$, but having nowhere to move, it makes no transition and
rejects the word. Consider another language
$L_3 = \Sigma^*ab\Sigma^*bb$ on the same alphabet $\Sigma$. A similar
machine (as $\Aa_2$) to accept $L_3$ would look like $\Aa_3$ depicted
in Fig.~\ref{fig:example-ab-bb}. For example, on the word $abbbb$, it
would read until $ab$ and move from $q_0$ to $q_1$, read further until
$bb$ and move to $q_2$, then read $b$ and move back to $q_2$ to
accept. We can formally define such machines as automata that
transition on suffixes, or suffix-reading automata.
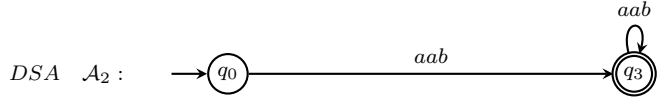
\begin{figure}
  \centering
  \begin{tikzpicture}[state/.style={circle, draw, thick, inner sep =
      2pt}]
    \begin{scope}[every node/.style={state}]
      \node (0) at (0,0) {\scriptsize $q_0$}; \node [double] (1) at
      (5.4,0) {\scriptsize $q_3$};
    \end{scope}
    \begin{scope}[->, thick, >=stealth, auto]
      \draw (-0.75, 0) to (0); \draw (0) to node {\scriptsize $aab$}
      (1); \draw (1) to [loop above] node {\scriptsize $aab$} (1);
      
    \end{scope}
    
    \node [left] at (-1.25, 0) {\scriptsize $DSA \quad \Aa_2:$};
  \end{tikzpicture}
  \caption{DSA $\Aa_2$ accepts $L_2 = \Sigma^*aab$, with
    $\Sigma = \{a, b\}$.}
  \label{fig:example-aab}
\end{figure}

\begin{definition}[DSA]\label{def:suff-reading-aut}
  A \emph{deterministic suffix-reading automaton (DSA)} $\Aa$ is given by a
  tuple $(Q, \Sigma, q^{init}, \Delta, F)$ where $Q$ is a finite set
  of states, $\Sigma$ is a finite alphabet, $q^{init} \in Q$ is the
  initial state, $\Delta \incl Q \times \Sigma^+ \times Q$ is a finite
  set of transitions, $F \incl Q$ is a set of accepting states.  For a
  state $q \in Q$, we define
  $\out(q) := \{ \a \mid (q, \a, q') \in \Delta \text{ for some } q'
  \in Q \}$ for the set of labels present in transitions out of
  $q$. No state has two outgoing transitions with the same label:
  if $(q, \alpha, q') \in \Delta$ and $(q, \alpha, q'') \in \Delta$,
  then $q' = q''$.
  
  The (total) size $|\Aa|$ of DSA $\Aa$ is defined as the sum of
  the number of states, the number of transitions, and the size
  $|\out(q)|$ for each $q \in Q$, where
  $|\out(q)| := \sum_{\a \in \out(q)} |\a|$.

\end{definition}

\begin{figure}[t]
  \centering
  \begin{tikzpicture}[state/.style={circle, draw, thick, inner sep =
      2pt}]
    \begin{scope}[every node/.style={state}]
      \node (0) at (0,0) {\scriptsize $q_0$}; \node (1) at (2,0)
      {\scriptsize $q_1$}; \node [double] (2) at (4,0) {\scriptsize
        $q_2$};
    \end{scope}
    \begin{scope}[->, thick, >=stealth, auto]
      \draw (-0.75, 0) to (0); \draw (0) to node {\scriptsize $ab$ }
      (1); \draw (1) to [bend left=30] node {\scriptsize $bb$} (2);
      \draw (2) to [loop right] node {\scriptsize $b$} (2); \draw (2)
      to [bend left=30] node {\scriptsize $a$} (1);
    \end{scope}
    \node at (-1.75, 0) {\scriptsize $DSA \quad \Aa_3:$};

    \begin{scope}[xshift=8cm]
      \begin{scope}[every node/.style={state}]
        \node (0) at (0,0) {\scriptsize $q_0$}; \node [double] (1) at
        (2,0) {\scriptsize $q_1$};
      \end{scope}
      \begin{scope}[->, thick, >=stealth, auto]
        \draw (-0.75, 0) to (0); \draw (0) to node {\scriptsize $ab$}
        (1); \draw (0) to [loop above] node {\scriptsize $ba$} (0);
      
      \end{scope}

      \node at (-1.75, 0) {\scriptsize $DSA \quad \Aa_4:$};
    \end{scope}
  \end{tikzpicture}
  \caption{$\Aa_3$ accepts $L_3=\Sigma^* ab \Sigma^*bb$ and $\Aa_4$
    accepts $L_4=(b^*ba)^*a^*ab$.}
  \label{fig:example-ab-bb}
\end{figure}
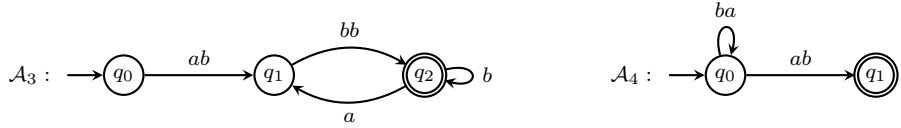

As mentioned earlier, at a state $q$ the automaton waits for a word
that ends with one of its outgoing labels. If more than one label matches,
then the transition with the longest label is taken.  For example, consider the
DSA in Figure~\ref{fig:if-else}. At
state $s_1$ on reading $fghendif$, both the \texttt{if} and
\texttt{endif} transitions match. The longest match is \texttt{endif}
and therefore the DSA moves to $s_0$. This gives a deterministic
behaviour to the DSA. More precisely: at a state $q$, it reads $w$ to
fire $ (q, \a, q') $ if $\a$ is the longest word in $\out(q)$ which is
a suffix of $w$, and no proper prefix of $w$ has any label in
$\out(q)$ as suffix.  We call this a `move' of the DSA. For example,
consider $\Aa_4$ of Figure~\ref{fig:example-ab-bb} as a DSA. Let us
denote $t := (q_0, ab, q_1)$ and $t':= (q_0, ba, q_1)$. We have moves
$(t, ab)$, $(t, aab)$, $(t, aaab)$, and $(t', ba)$, $(t', bba)$,
etc. In order to make a move on $t$, the word should end with $ab$ and
should have neither $ab$ nor $ba$ in any of its proper prefixes.

\begin{definition}\label{def:DSA-moves}
  A \emph{move} of DSA $\Aa$ is a pair $(t, w)$ where
  $t = (q, \a, q') \in \Delta$ is a transition of $\Aa$ and
  $w \in \Sigma^+$ such that
  \begin{itemize}
  \item 
    $\a$ is the longest word in $\out(q)$ which is a suffix of $w$,
    and
  \item 
    no proper prefix of $w$ has a label in $\out(q)$ as
    suffix (no label is a factor of $w$).
  \end{itemize}
  A move $(t, w)$ denotes that at state $q$, transition $t$ gets
  triggered on reading word $w$. We will also write
  $q \xra[\a]{~w~} q'$ for the move $(t, w)$.
\end{definition}

Whether a word is accepted or rejected is determined by a `run' of the
DSA on it. 

\begin{definition}
  A run of $\Aa$ on word $w$, starting from a state $q$, is a sequence
  of moves that consume the word $w$, until a (possibly empty) suffix
  of $w$ remains for which there is no move possible: 
  
  formally, a run
  is a sequence of moves $q_i \xra[\a_i]{~w_i~} q_{i+1}$ for $0\le i\le m$ and $q_m \xra{w_m}$ where $q=q_0$ and $m$ is the index of the last state in the sequence i.e.
  $q = q_0 \xra[\a_0]{~w_0~} q_1 \xra[\a_1]{~w_1~} \cdots
  \xra[\a_{m-1}]{~w_{m-1}~} q_m \xra{w_m}$ such that
  $w=w_0 w_1 \dots w_{m-1} w_m$, and $q_m \xra{w_m}$ denotes that
  there is no move using any outgoing transition from $q_m$ on $w_m$
  or any of its prefixes. 
  
  The run is accepting if $q_m \in F$ and
  $w_m = \e$ (no dangling letters in the end). The language $\Ll(\Aa)$
  of $\Aa$ is the set of all words that have an accepting run starting
  from the initial state $q^{init}$.
\end{definition}

Naturally, the set of words with accepting runs gives the
language of the DSA. Moreover, due to our ``move'' semantics, there is
a unique run for every word.

\section{Comparison with DFA and DGA}
\label{sec:comparison-with-dfa}

Every complete DFA can be seen as an equivalent DSA --- since
$\out(q) = \Sigma$ for every state, the equivalent DSA is forced to
move on each letter, behaving like the DFA that we started off
with. For the DSA-to-DFA direction, we associate a specific DFA to
every DSA, as follows. The idea is to replace transitions of a DSA
with a string-matching-DFA for $\out(q)$ at each
state. Figure~\ref{fig:dsa-to-dfa-eg} gives an example. The
intermediate states correspond to proper prefixes of words in
$\out(q)$.

\begin{definition}[Tracking DFA for a DSA.]\label{def:tracking-dfa}
  For a DSA $\Aa = (Q^\Aa, \Sigma, q_{in}^\Aa, \Delta^\Aa, F^\Aa)$, we
  give a DFA $M_{\Aa}$, called its \emph{tracking DFA}. For
  $q \in Q^\Aa$, let $\outp(q)$ be the set of all prefixes of words in
  $\out(q)$.  States of $M_{\Aa}$ are given by:
  $Q^M = \bigcup_{q \in Q^\Aa} \{ \{ (q, \beta) \mid \beta \in \outp(q)\} \cup
  \{(q,\overline{\epsilon})\} \}$, where $\overline{\epsilon}$ is used as a special symbol to create a copy of each $(q,\epsilon)$. 

  The initial state is $(q^\Aa_{in}, \epsilon)$ and final states are
  $\{ (q, \epsilon) \mid q \in F^\Aa\}$. Transitions are as below: for
  every $q \in Q^\Aa, \beta \in \outp(q), a \in \Sigma$, let $\beta'$ be the
  longest word in $\outp(q)$ s.t $\beta'$ is a suffix of $\beta
  a$. 
  \begin{itemize}
  \item $(q, \beta) \xra{a} (q', \epsilon)$ if $\beta' \in \out(q)$ and  
    $(q,\beta', q') \in \Delta^\Aa$, 
  \item $(q, \beta) \xra{a} (q, \beta')$ if $\beta' \notin \out(q)$ and
    $\beta' \neq \epsilon$,
  \item $(q, \beta) \xra{a} (q,\overline{\epsilon})$ if $\beta' = \epsilon$,
  \item $(q,\overline{\epsilon}) \xra{a} s$, if $(q, \epsilon) \xra{a} s$ according
    to the above (same outgoing transitions).
  \end{itemize}

\end{definition}

\begin{figure}
  \centering
  \begin{tikzpicture}[state/.style={circle, draw, thick, inner sep =
      2pt}]
    \begin{scope}[every node/.style={state}]
      \node (0) at (0,0) {\tiny $q$}; \node [double] (1) at (2,0)
      {\tiny $q'$};
    \end{scope}
    \begin{scope}[->, >=stealth, thick, auto]
      \draw (0) to [bend left=30] node {\scriptsize $abaa$} (1); \draw
      (0) to [bend right=30] node [below] {\scriptsize $baaa$} (1);
    \end{scope}

    \begin{scope}[xshift=3.5cm]
      \begin{scope}[every node/.style={state}]
        \node (0) at (0,0) {\tiny $q$}; \node (a) at (1,1) {\tiny
          $a$}; \node (b) at (1,-1) {\tiny $b$}; \node (ab) at (2, 1)
        {\tiny $ab$}; \node (ba) at (2,-1) {\tiny $ba$}; \node (aba)
        at (3, 1) {\tiny $aba$}; \node (baa) at (3,-1) {\tiny $baa$};
        \node [double] (1) at (4, 0) {\tiny $q'$};
      
      \end{scope}
      \begin{scope}[->,>=stealth, thick, auto]
        \draw (0) to node {\tiny $a$} (a); \draw (0) to node {\tiny
          $b$} (b); \draw (a) to node {\tiny $b$} (ab); \draw (b) to
        node [below] {\tiny $a$} (ba); \draw (ab) to node {\tiny $a$}
        (aba); \draw (ba) to node [below] {\tiny $a$} (baa); \draw
        (aba) to node {\tiny $a$} (1); \draw (baa) to node [below]
        {\tiny $a$} (1); \draw (b) to [loop below] node {\tiny $b$}
        (0); \draw (a) to [loop above] node {\tiny $a$} (a); \draw
        (ba) to node {\tiny $b$} (ab); \draw (baa) to node {\tiny $b$}
        (ab); \draw (ab) to [bend left=10] node [left] {\tiny $b$}
        (b); \draw (aba) to [bend right=60] node [above] {\tiny $b$}
        (ab);
      \end{scope}

    \end{scope}

  \end{tikzpicture}
  \caption{A DSA on the left, and the corresponding DFA for matching
    the strings $abaa$ and $baaa$.}
  \label{fig:dsa-to-dfa-eg}
\end{figure}
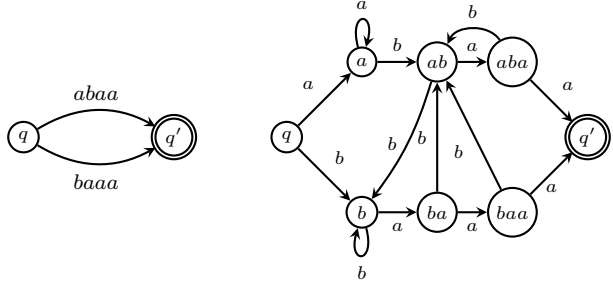

Intuitively, the tracking DFA implements the transition semantics of
DSAs. Starting at $(q, \e)$, the tracking DFA moves along states
marked with $q$ as long as no label of $\out(q)$ is seen as a
suffix. For all such words, the tracking DFA maintains the longest
word among $\outp(q)$ seen as a suffix so far. For instance, in
Figure~\ref{fig:dsa-to-dfa-eg}, at $q$ on reading word $aab$, the DFA
on the right is in state $ab$ (which is the equivalent of $(q, ab)$ in
the tracking DFA definition).

In the rest of the document, we will make use of the following notation:

\begin{definition}[Notation]
  \label{def:notation}\hfill
  
\begin{itemize}
 \item For two words $w_1, w_2$, we write $w_1 \sfx w_2$ if $w_1$
   is a suffix of $w_2$.
 \item For a set of words $U \incl \Sigma^*$ and $\a, \beta \in \Sigma^*$,
 $\a \lsfx{U} \beta$ if $\a$ is the longest word in $U$ which is a
 suffix of $\beta$, i.e. for all $\a' \in U$ if $\a' \sfx \beta$ then
 $|\a'| \leq |\a|$.

\end{itemize}
\end{definition}

\begin{restatable}{lemma}{trackingDFAEquivalent}
  \label{lem:tracking-dfa-language-equivalent}
  For every DSA $\Aa$, the language $\Ll(\Aa)$ equals the language
  $\Ll(M_\Aa)$ of its tracking DFA.
\end{restatable}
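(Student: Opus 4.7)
The plan is to set up a step-by-step simulation between computations of $\Aa$ and $M_\Aa$, and then to conclude that both recognise the same language.

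I would formulate the following invariant on reading a word $v \in \Sigma^*$ from the initial state of $M_\Aa$. Let the (unique) run of $\Aa$ on $v$ from $q^{init}$ be $q^{init} = q_0 \xra[\a_0]{~w_0~} q_1 \xra[\a_1]{~w_1~} \cdots \xra[\a_{k-1}]{~w_{k-1}~} q_k$ with $v = w_0 w_1 \cdots w_{k-1} v'$, where $v'$ is the residual that is not yet long enough to enable another move from $q_k$. The claim is that $M_\Aa$ on reading $v$ from $(q^{init}, \e)$ reaches $(q_k, \beta)$ where $\beta \lsfx{\outp(q_k)} v'$ and $\beta \neq \e$; reaches $(q_k, \e)$ if $v' = \e$; and reaches the ``copy'' state associated with $q_k$ if $v' \neq \e$ but no non-empty element of $\outp(q_k)$ is a suffix of $v'$.

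I would prove the invariant by induction on $|v|$, the base case $v = \e$ being immediate. For the inductive step, assume the invariant for $v$ and consider $v a$. Let $\beta'$ be the longest element of $\outp(q_k)$ that is a suffix of $v' a$. The three sub-cases in Definition~\ref{def:tracking-dfa} line up with three possibilities. If $\beta' \in \out(q_k)$, then by the induction hypothesis no proper prefix of $v' a$ suffix-matches any element of $\out(q_k)$, and $\beta'$ is the longest label in $\out(q_k)$ matching $v'a$ as a suffix, so by Definition~\ref{def:DSA-moves} the DSA completes the move to $q_{k+1}$, and $M_\Aa$ simultaneously transitions to $(q_{k+1}, \e)$. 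If $\beta' \notin \out(q_k)$ and $\beta' \neq \e$, the DSA is still waiting at $q_k$ and $M_\Aa$ updates to $(q_k, \beta')$. If $\beta' = \e$, the DSA is waiting at $q_k$ with no partial match, and $M_\Aa$ moves to the copy state; from the copy state the invariant is preserved on subsequent letters because its outgoing transitions mirror those of $(q_k, \e)$ by the last clause of Definition~\ref{def:tracking-dfa}.

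The final step is to apply the invariant at acceptance. A word $v$ lies in $\Ll(\Aa)$ iff the run of $\Aa$ ends with $v' = \e$ at a final state $q_k \in F^\Aa$; by the invariant this happens iff $M_\Aa$ on $v$ ends at $(q_k, \e)$ with $q_k \in F^\Aa$, which is precisely the set of final states of $M_\Aa$. Hence $\Ll(\Aa) = \Ll(M_\Aa)$. The main obstacle is stating the invariant precisely enough to handle the subtle difference between a ``freshly entered'' state $(q_k, \e)$ and the copy state (same DSA state, empty tracked prefix, but some letters already consumed without any partial match), and aligning the longest-suffix-match rule of Definition~\ref{def:tracking-dfa} with the ``no proper prefix containing a label'' condition of Definition~\ref{def:DSA-moves}; once this is pinned down the rest is routine case analysis.
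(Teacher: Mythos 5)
Your proof is correct and rests on the same underlying mechanism as the paper's --- an induction on word length showing that $M_\Aa$ implements the move semantics of $\Aa$ --- but it is organized differently. The paper states four local, one-directional invariants about run segments between successive states of the form $(q,\e)$ (two establishing $\Ll(\Aa) \incl \Ll(M_\Aa)$, two for the converse) and composes them move by move; you instead prove a single global invariant that, for every input $v$, exactly pins down the state $M_\Aa$ reaches as a function of the unique run of $\Aa$ on $v$ and its residual $v'$. Since both machines are deterministic, this one bidirectional characterization yields both inclusions at once, which is arguably tidier. Your version also buys something concrete on the corner cases: you distinguish explicitly between $(q_k,\e)$ and the copy state, which is exactly what makes acceptance come out right when the DSA run ends in an accepting state with a non-empty dangling suffix; the paper's invariants, as stated, say runs ``remain in states of the form $(q,\beta)$'' and thus silently gloss over visits to $q_{copy}$. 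The one step you flag but do not discharge --- that the longest $\outp(q_k)$-suffix of $v'a$ coincides with the longest $\outp(q_k)$-suffix of $\beta a$, where $\beta \lsfx{\outp(q_k)} v'$, so that the transition rule of Definition~\ref{def:tracking-dfa} (computed from $\beta a$) matches your case analysis (phrased on $v'a$) --- does need a line of proof: any non-empty $\gamma \in \outp(q_k)$ with $\gamma \sfx v'a$ has the form $\gamma_0 a$ with $\gamma_0 \in \outp(q_k)$ (by prefix-closedness of $\outp(q_k)$) and $\gamma_0 \sfx v'$, hence $\gamma_0 \sfx \beta$ by maximality of $\beta$, so $\gamma \sfx \beta a$; the converse direction is immediate from $\beta a \sfx v'a$. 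With that observation inserted your case analysis closes, and the level of detail matches --- indeed exceeds --- the paper's own proof, which leaves all four inductions to the reader.
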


\begin{proof}
  The tracking DFA satisfies the following invariants on every state
  $(q, \epsilon)$: (we make use of notations from
  Definition~\ref{def:notation}) 
  \begin{itemize}
  \item Let $w$ be a word that has no $\a \in \out(q)$ as its
    factor. Then, the run of $w$ starting at $(q, \e)$, remains in
    states of the form $(q, \beta)$ and ends in state $(q, \beta_w)$
    where $\beta_w \lsfx{\outp(q)} w$, when $\beta_w \neq \epsilon$; if $\beta_w = \epsilon$, then it ends in state $(q, \overline{\epsilon})$. 
  \item Let $w$ be a word such that $q \xra[\a]{w} q'$ is a move of
    $\Aa$. Then the run of $M_\Aa$ on $w$ starting at $(q, \e)$
    remains in states of the form $(q, \beta)$ and ends in $(q', \e)$.
  \item Let $w$ be a word such that the run of $M_\Aa$ starting at
    $(q, \e)$ remains in states of the form $(q, \beta)$ and ends in a
    state $(q, \beta_w)$. If $\beta_w \neq \overline{\epsilon}$, then $\beta_w \lsfx{\outp(q)} w$. If $\beta_w = \epsilon$, then there is no suffix of $w$ in $\outp(q)$. 
  \item Let $w$ be a word such that the run of $M_\Aa$ starting at
    $(q, \e)$ ends in $(q', \e)$, and all intermediate states are of
    the form $(q, \beta)$. Then there is a move $q \xra[\a]{w} q'$ in
    $\Aa$ where $\a \lsfx{\out(q)} w$.
  \end{itemize}
  All these invariants can be proved by induction on the length of the
  word $w$ and making use of the way the transitions have been defined
  in the tracking DFA. The first two invariants show that every
  accepting run of $\Aa$ has a corresponding accepting run in $M_\Aa$,
  thereby proving $\Ll(\Aa) \incl \Ll(M_\Aa)$. The next two invariants
  show that every accepting run of $M_\Aa$ corresponds to an accepting
  run of $\Aa$, thereby showing $\Ll(M_\Aa) \incl \Ll(\Aa)$.
\end{proof}

Lemma~\ref{lem:tracking-dfa-language-equivalent} and the fact that
every complete DFA is also a DSA, prove that DSAs recognize regular
languages. We will now compare succinctness of DSA wrt DFA and DGA,
starting with a family of languages for which DSAs are concise.

\begin{restatable}{lemma}{dsaSmall}
  \label{lem:dsa-small}
  Let $\Sigma = \{a_1, a_2, \dots, a_n\}$ for some $n \ge 1$. Consider
  the language $L_n = \Sigma^* a_1 a_2 \dots a_n$. There is a DSA for
  this language with size $(4 + 2 |\Sigma|)$.  Any trim DFA for $L_n$ has size at
  least ${|\Sigma|}^2$. (Note: size is counted as the sum of the number of
  states, edges and length of edge labels, in all the automata) 
\end{restatable}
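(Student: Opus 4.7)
The plan is to exhibit a compact DSA directly and match it with a Myhill--Nerode lower bound on any DFA for $L_n$. For the upper bound I would use two states: an initial non-final state $q_0$ and a final state $q_1$, with just two transitions, both labeled $a_1 a_2 \cdots a_n$, namely $q_0 \to q_1$ and a self-loop on $q_1$. Starting at $q_0$, the DSA stalls until the first occurrence of the pattern $a_1 \cdots a_n$ as a suffix, then moves to $q_1$, after which it loops on successive occurrences. An accepting run requires the last move to land at $q_1$ with no dangling letters, which happens iff $w$ ends in $a_1 \cdots a_n$, so $\Ll(\Aa) = L_n$. The total size is $2 + 2 + (n + n) = 4 + 2n$.

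For the DFA lower bound I would establish two facts: any DFA for $L_n$ has at least $n + 1$ states, and every reachable state has all $n$ outgoing letter-transitions defined. The first comes from the Myhill--Nerode theorem applied to the words $u_i := a_1 a_2 \cdots a_i$ for $i = 0, 1, \ldots, n$: for $i < j$ the suffix $w = a_{j+1} \cdots a_n$ distinguishes them, because $u_j w = a_1 \cdots a_n \in L_n$ while $|u_i w| = i + (n - j) < n$ forces $u_i w \notin L_n$. The second fact follows because from any reachable state, reached by some prefix $u$, and any letter $a$, the word $u a a_1 a_2 \cdots a_n$ lies in $L_n$; if the transition on $a$ were undefined, this word would have no run and be rejected. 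Combining the two facts, the DFA has at least $n + 1$ states and at least $n(n+1)$ transitions, and since every letter-label has length one, the total size is at least $(n+1) + 2 n (n+1) = (n+1)(2n+1) \geq n^2$ for all $n \geq 1$.

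The main subtlety is the choice of distinguishing suffix for the Nerode argument. The first-guess suffix $w = a_{i+1} \cdots a_n$ (which drives $u_i$ into acceptance) fails, because $u_j w$ for $j > i$ also ends in $a_1 \cdots a_n$ and hence lies in $L_n$ as well, so $w$ does not distinguish the two classes. Using $w = a_{j+1} \cdots a_n$ instead makes $u_i w$ strictly shorter than $n$, which immediately puts it outside $L_n$ while keeping $u_j w = a_1 \cdots a_n$ inside. Everything else is straightforward bookkeeping.
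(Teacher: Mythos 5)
Your proof is correct and follows essentially the same route as the paper: the identical two-state DSA of size $4+2n$, a Myhill--Nerode argument on the prefixes $a_1\cdots a_i$, and the observation that every reachable state must have all $n$ outgoing letter-transitions, giving at least $n^2$ total size. One correction to your side remark, though it does not affect your proof: the ``first-guess'' suffix $a_{i+1}\cdots a_n$ is exactly what the paper uses and it does work for all pairs $1 \le i < j$ --- since the letters are distinct, the length-$n$ suffix of $u_j\,a_{i+1}\cdots a_n$ begins with $a_{j-i+1} \neq a_1$, so the word is \emph{not} in $L_n$ --- and it fails only for pairs involving $u_0 = \varepsilon$, which you include (yielding $n+1$ distinguished states) but the paper avoids by starting its list of prefixes at $a_1$ (yielding $n$ states, still enough for the $n^2$ bound).
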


\begin{proof}
  Consider the DSA with states $q_0, q_1$ and transitions
  $q_0 \xra{a_1 \dots a_n} q_1$, $q_1 \xra{a_1 \dots a_n} q_1$, with
  $q_0$ the initial state and $q_1$ the accepting state. This DSA
  accepts $L_n$.

  For any pair $a_1 a_2 \dots a_i$ and $a_1 a_2 \dots a_j$ with
  $i \neq j$, there is a distinguishing suffix:
  $a_1 a_2 \dots a_i \cdot a_{i+1} \dots a_n \in L_n$, but
  $a_1 a_2 \dots a_j \cdot  a_{i+1} \dots a_n \notin
  L_n$. Therefore, the strings $a_1$, $a_1a_2$, $\dots$,
  $a_1 \dots a_n$ go to different states in the canonical DFA. This
  shows there are at least $n$ states in the minimal DFA. Now, from
  $a_1 a_2 \dots a_j$, there is a transition of every letter: clearly,
  there needs to be a transition on $a_{j+1}$; if there is no
  transition on, say $a_\ell \neq a_{j+1}$, then the word
  $a_1 a_2 \dots a_j a_\ell a_1 a_2 \dots a_n$ will be rejected. A
  contradiction. Hence, there are $n$ transitions from each state.
\end{proof}

We make a remark about DGAs for the language family in
Lemma~\ref{lem:dsa-small}. Notice that suppressing superfluous states
of the minimal DFA leads to a strict increase in the label length,
which only strictly increases the (total) size. For instance, suppose
$q_1 \xra{a_2} q_2 \xra{a_3} q_3$ is a sequence of transitions in the
minimal DFA for $L_n$. There are additional transitions
$q_2 \xra{a_1} q_1$ and $q_2 \xra{\Sigma \setminus \{a_1, a_3\}} q_0$
(where $q_0$ is the initial state). Suppressing $q_2$ leads to edges
with labels $a_2a_3$, $a_2a_1$, $a_2 c$ for each
$c \in \Sigma \setminus \{a_1, a_3\}$. Notice that $a_2$ gets copied
$|\Sigma|$ many times. Therefore suppressing states from the minimal
DFA is not helpful in getting a smaller DGA. However, using this
argument, we cannot conclude that the minimal DGA has size at least
$n^2$. Since a characterization of minimality of DGAs in terms of
total size is not known, we leave it at this remark.

We now state the final result of this section, which summarizes the
size comparison between DSAs, DFAs, DGAs. For the comparison to DFAs,
we use the fact that every DSA of size $k$ can be converted to its
tracking DFA, which has atmost $2k$ states. Therefore, size of the
tracking DFA is
bounded by $2k$ (states) $ + 2k \cdot |\Sigma|$ (edges)
$ + 2k \cdot |\Sigma|$ (label length), which comes to
$2k ( 1+ 2|\Sigma|)$.

For a regular language $L$, let
  $n_F^{cmp}(L), n_F^{trim}(L), n_G^{trim}(L), n_S(L)$ denote the size of the
  minimal complete DFA, minimal trim DFA, minimal trim DGA and minimal
  DSA respectively, where size is counted as the sum of the number of
  states, edges and length of edge labels, in all the automata. 

\begin{restatable}{theorem}{comparison}\label{thm:comparing-dsa-with-dfa-dga}
  We have:
  \begin{enumerate}
  \item For all $L$, we have $\dfrac{n_F^{cmp}(L)}{2 (1 + 2|\Sigma|)} \le n_S(L) \le n_F^{cmp}(L)$
  \item 
  There is a language for which $n_S(L)$ is the smallest, and another
    language for which $n_S(L)$ is the largest of the three.
  \end{enumerate}

\end{restatable}

\begin{proof}
  A complete DFA can be seen as a DSA accepting the same
  language. This gives us $n_S(L) \le n_F^{cmp}(L)$. For the inequality
  $\dfrac{n_F^{cmp}(L)}{2 (1 + 2|\Sigma|)} \le n_S(L)$, we show that the
  tracking DFA (Definition~\ref{def:tracking-dfa}) of a DSA has size
  at most $n_S(L) \cdot 2 (1 + 2|\Sigma|)$. The number of states of the
  tracking DFA is atmost $2 n_S(L)$ (recall that $n_S(L)$ denotes the total size of the DSA, which includes the label lengths). For each state there are $\Sigma$
  transitions. Therefore, total size is $2n_S(L)$ (states) + $2n_S(L) \cdot
 |\Sigma|$ (edges) + $2 n_S(L) \cdot |\Sigma|$ (label lengths), which equals
 $2 n_S(L) (1 + 2 |\Sigma|)$.

 For the second part of the proof, Lemma~\ref{lem:dsa-small} gives an
 example where $n_S(L)$ is the smallest. For the other direction,
 consider the language $L_1 = (ab)^*$. A trim DFA for this language has two
 states $q_0, q_1$ (with $q_0$ accepting), and two transitions $q_0 \xra{a} q_1$, and $q_1
 \xra{b} q_0$. Therefore $n_F^{trim}(L_1) \le 2 + 2 + 2 = 6$. A trim DGA
 for this language is $q_0 \xra{ab} q_0$. Therefore $n_G^{trim}(L_1) \le 2
 + 1 + 2 = 5$. We first claim that any DSA for this language needs to
 maintain the following information: (1) the initial state which is
 accepting, as $\e$ is in the 
 language; (2) there is a path from the initial state on
 $ab$ to an accepting state (otherwise $ab$ will not be accepted); (3)
 on reading $aa$ or $b$ from the initial state, the DSA has to make 
 some transitions and go to a sink state, which is a non-accepting state
 --- otherwise,  the words $aab$ or $bab$ will get accepted. This shows that any
 DSA has at least two states, at least three transitions, and ${ab}$, $b$ and ${aa}$ split across some transition labels. This gives either: $n_S(L_1) \ge 2 + 2 + 4 = 8$ for the DSA with two states and $\xra{ab}$, $\xra{b}$ and $\xra{aa}$ as transitions, which is bigger than
 the corresponding trim DFA and DGA, or the min DFA viewed as DSA, with 3 states, and transitions on each of $a$ and $b$ from each of the states giving $n_S \ge3+6+6 = 15$
\end{proof}


\section{Suffix-tracking sets -- obtaining DSA from DFA}
\label{sec:suffix-tracking-sets}

In this section, we are interested in the following task: given a DFA, extract a language equivalent DSA with as small a total size as possible. For instance, given the DFA on the right of Figure~\ref{fig:dsa-to-dfa-eg}, how do we get the DSA in the left of the same figure? In some sense, our method would be to reverse engineer the tracking DFA construction that was introduced in Section~\ref{sec:comparison-with-dfa}, culminating in Definition~\ref{def:tracking-dfa}. 

For DGAs, a
method to derive smaller DGAs by suppressing states was 
recalled in Section~\ref{sec:preliminaries}. The DSA model creates
new challenges. Suppressing states may not always lead to
smaller automata (in total
size). Figure~\ref{fig:minim-challenges-suppr-states} illustrates an
example where suppressing states leads to an exponentially larger
automaton, due to the exponentially many
paths created. But, suppressing states may sometimes
indeed be useful: in Figure~\ref{fig:suppressing-states-reduces-size},
the DFA on the left is performing a string matching to deduce the
pattern $ab$. On seeing $ab$, it accepts. Any extension is
rejected. This is succinctly captured by the DSA on the right. Notice
that the DSA is obtained by suppressing states $q_1$ and $q_3$. So,
suppressing states may sometimes be useful and sometimes not. In~\cite{giammarresi1999deterministic}, the focus was on
getting a DGA with minimal number of states, and hence suppressing
states was always useful.

More importantly, when can we suppress states? DGAs cannot ``ignore''
parts of the word. This in particular leads to the requirement that a
state with a self-loop cannot be suppressed. DSAs have a more
sophisticated transition semantics. Therefore, the procedure to
suppress states is not as simple. This is the subject of this
section. We deviate from the DGA setting in two ways:  we will select a subset of good states from which we can
construct a DSA (essentially, this means the rest of the states are
suppressed); secondly, our starting point will be complete DFA, on
which we make the choice of states (in DGAs, one could start
with any DGA and suppress states). Our procedure can be broken down
into two steps: (1) Start from a complete DFA, select a subset of states and
    build an induced DSA by connecting states using
    acyclic paths between them; (2) Remove some redundant
    transitions.

The plan for this section is as follows.
\begin{description}
\item[Section~\ref{sec:induced-dsa}] We present the core technical concept of suffix-tracking sets~(Definition~\ref{sec:suffix-tracking-sets}): when a set $S$ of states of a given complete DFA $M$ is suffix-tracking, a language equivalent DSA can be induced from $M$. For instance, in Figure~\ref{fig:dsa-to-dfa-eg}, the set $\{q, q'\}$ would be suffix-tracking for the DFA on the right. The DSA induced from $\{q, q'\}$ would contain simple paths from $q$ to $q'$ in $M$. 
\item[Section~\ref{sec:useless}] The induced DSA from suffix-tracking sets can contain useless transitions. For instance, in the same Figure~\ref{fig:dsa-to-dfa-eg}, in addition to simple paths $abaa$ and $baaa$, there are paths $abbaaa$, $babaa$ that move across different branches in the DFA. These are useless transitions, since there are smaller suffixes $baaa$ and $abaa$ respectively that match these patterns. One cannot always simply remove them. Some care needs to be taken while removing transitions. We discuss these observations in this section.
\end{description}
\begin{figure}
  \centering
  
    \begin{tikzpicture}[state/.style={circle, draw, thick, inner sep =
      2pt}, scale=0.8]
    \begin{scope}[every node/.style={state}]
      \node (0) at (0,0) {\tiny $q_0$};
      \node (1) at (2,0) {\tiny $q_1$};
      \node (2) at (4,0)  {\tiny $q_2$};
      \node [double] (3) at (6,0) {\tiny $q_3$};
    \end{scope}
    \begin{scope}[->, >=stealth, thick]
      \draw (-0.8,0) to (0);
      \draw (0) to [bend left] node [above] {\scriptsize $a$} (1); 
      \draw (0) to [bend right] node [below] {\scriptsize $b$} (1);
       \draw (1) to [bend left] node [above] {\scriptsize $a$} (2); 
       \draw (1) to [bend right] node [below] {\scriptsize $b$} (2);
        \draw (2) to [bend left] node [above] {\scriptsize $a$} (3); 
      \draw (2) to [bend right] node [below] {\scriptsize $b$} (3);
    \end{scope}

    \begin{scope}[xshift=8cm]
       \begin{scope}[every node/.style={state}]
      \node (0) at (0,0) {\tiny $q_0$};
      \node [double] (3) at (4,0) {\tiny $q_3$};
    \end{scope}

    \begin{scope}[->, >=stealth, thick]
      \draw (-0.8, 0) to (0);
      \draw (0) to (3);
    \end{scope}

    \node at (2, 0.4) {\scriptsize $aaa, aab, aba, abb$};
    \node at (2, -0.4) {\scriptsize $baa, bab, bba, bbb$};
    
    \end{scope}
  \end{tikzpicture}
  \caption{Suppressing states can add exponentially many labels and
    increase total size.}
  \label{fig:minim-challenges-suppr-states}
\end{figure}
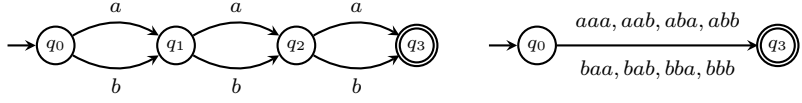

\begin{figure}
  \centering
  \begin{tikzpicture}[state/.style={circle, draw, thick, inner sep =
      2pt}, scale=0.8]
    \begin{scope}[every node/.style={state}]
      \node (0) at (0,0) {\scriptsize $q_0$};
      \node (1) at (1.5,0) {\scriptsize $q_1$};
      \node [double] (2) at (3,0) {\scriptsize $q_2$};
      \node (3) at (4.5,0) {\scriptsize $q_3$};
    \end{scope}
    \begin{scope}[->,>=stealth, thick, auto]
      \draw (-0.8, 0) to (0);
      \draw (0) to [loop above] node {\scriptsize $\Sigma
        \setminus \{a\}$} (0);
      \draw (0) to node {\scriptsize $a$} (1);
      \draw (1) to [bend left = 60] node [below] {\scriptsize $\Sigma
        \setminus \{a, b\}$} (0);
      \draw (1) to [loop above] node {\scriptsize $a$} (1);
      \draw (1) to node {\scriptsize $b$} (2);
      \draw (2) to node {\scriptsize $\Sigma$} (3);
      \draw (3) to [loop above] node {\scriptsize $\Sigma$} (3);
    
    \end{scope}

    \begin{scope}[xshift=8cm]
      \begin{scope}[every node/.style={state}]
        \node (0) at (0,0) {\scriptsize $q_0$};
        \node [double] (2) at (3,0) {\scriptsize $q_2$};
      \end{scope}
      \begin{scope}[->,>=stealth, thick, auto]
        \draw (-0.8, 0) to (0);
        \draw (0) to node {\scriptsize $ab$} (2);
      \end{scope}
    \end{scope}
  \end{tikzpicture}
  \caption{Suppressing states can sometimes reduce total size}
  \label{fig:suppressing-states-reduces-size}
\end{figure}
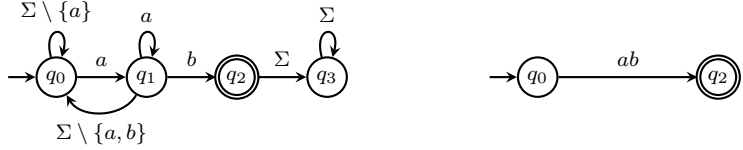

\subsection{Building an induced DSA}
\label{sec:induced-dsa}

We start with an illustrative example. Consider DFA $M$ in
Figure~\ref{fig:induced-eqv}. 
The DSA on the right of the figure shows such an induced DSA obtained
by marking states $\{q_0, q_2\}$ and connecting them using simple
paths. Notice that the language of the induced DSA and the original
DFA are same in this case. Intuitively, all words that end with an $a$
land in $q_1$. Hence, $q_1$ can be seen to ``track'' the suffix $a$.
Now, consider Figure~\ref{fig:induced-not-eqv}. We do the same trick,
by marking states $\{q_0, q_2\}$ and inducing a DSA. Observe that the
DSA does not accept $aba$, and hence is not language equivalent.  When
does a subset of states induce a language equivalent DSA? Roughly,
this is true when the states that are suppressed track ``suitable
suffixes'' (a reverse engineering of the tracking DFA construction of
Definition~\ref{def:tracking-dfa}). As we will see, the suitable
suffixes will be the simple paths from the selected states to the
suppressed states. We begin by formalizing these ideas and then
present sufficient conditions that ensure language equivalence of the
resulting DSA.

\begin{figure}[t]
  \centering
  \begin{tikzpicture}[state/.style={circle, draw, thick, inner sep =
      2pt},scale=0.8]
    \begin{scope}[every node/.style={state}]
      \node (0) at (0,0) {\scriptsize $q_0$}; \node (1) at (2,0)
      {\scriptsize $q_1$}; \node [double] (2) at (4,0) {\scriptsize
        $q_2$};
    \end{scope}
    \begin{scope}[->, thick, >=stealth, auto]
      \draw (-0.75, 0) to (0); \draw (0) to node {\scriptsize $a$ }
      (1); \draw (1) to node {\scriptsize $b$} (2); \draw (0) to [loop
      above] node {\scriptsize $b$} (0); \draw (1) to [loop above]
      node {\scriptsize $a$} (1); \draw (2) to [loop above] node
      {\scriptsize $a,b$} (2);
    \end{scope}
    \node at (-1.25, 0) {\scriptsize $M:$};

    \begin{scope}[xshift=8cm]
      \begin{scope}[every node/.style={state}]
        \node (0) at (0,0) {\scriptsize $q_0$}; \node [double] (1) at
        (2,0) {\scriptsize $q_2$};
      \end{scope}
      \begin{scope}[->, thick, >=stealth, auto]
        \draw (-0.75, 0) to (0); \draw (0) to node {\scriptsize $ab$}
        (1); \draw (1) to [loop above] node {\scriptsize $a,b$} (1);
        \draw (0) to [loop above] node {\scriptsize $b$} (0);
      
      \end{scope}

      \node at (-1.25, 0) {\scriptsize $\Aa_S:$};
    \end{scope}
  \end{tikzpicture}
  \caption{DFA $M$ and an equivalent DSA $\Aa_S$ `induced' with
    $S = \{q_0, q_2\}$.}
  \label{fig:induced-eqv}
\end{figure}
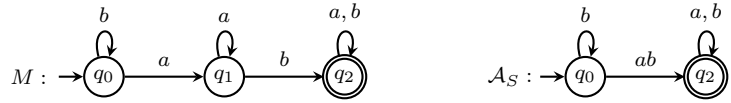

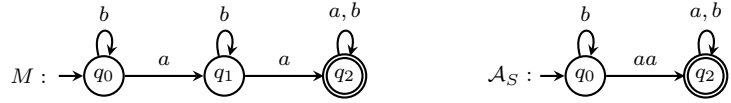
\begin{figure}[t]
  \centering
  \begin{tikzpicture}[state/.style={circle, draw, thick, inner sep =
      2pt}, scale=0.8]
    \begin{scope}[every node/.style={state}]
      \node (0) at (0,0) {\scriptsize $q_0$}; \node (1) at (2,0)
      {\scriptsize $q_1$}; \node [double] (2) at (4,0) {\scriptsize
        $q_2$};
    \end{scope}
    \begin{scope}[->, thick, >=stealth, auto]
      \draw (-0.75, 0) to (0); \draw (0) to node {\scriptsize $a$ }
      (1); \draw (1) to node {\scriptsize $a$} (2); \draw (0) to [loop
      above] node {\scriptsize $b$} (0); \draw (1) to [loop above]
      node {\scriptsize $b$} (1); \draw (2) to [loop above] node
      {\scriptsize $a,b$} (2);
    \end{scope}
    \node at (-1.25, 0) {\scriptsize $M:$};

    \begin{scope}[xshift=8cm]
      \begin{scope}[every node/.style={state}]
        \node (0) at (0,0) {\scriptsize $q_0$}; \node [double] (1) at
        (2,0) {\scriptsize $q_2$};
      \end{scope}
      \begin{scope}[->, thick, >=stealth, auto]
        \draw (-0.75, 0) to (0); \draw (0) to node {\scriptsize $aa$}
        (1); \draw (1) to [loop above] node {\scriptsize $a,b$} (1);
        \draw (0) to [loop above] node {\scriptsize $b$} (0);
      
      \end{scope}

      \node at (-1.25, 0) {\scriptsize $\Aa_S:$};
    \end{scope}
  \end{tikzpicture}
  \caption{DFA $M$ and DSA $\Aa_S$ `induced' with $S = \{q_0,
    q_2\}$. Not equivalent.}
  \label{fig:induced-not-eqv}
\end{figure}

\begin{definition}[Simple words]\label{def:simple-words}
  Consider a complete DFA $M = (Q, \Sigma, q^{init}, \Delta, F)$. Let
  $S \incl Q$ be a subset of states, and $p, q \in Q$. We define
  $\spath{p}{q}{S}$, the \emph{simple words from $p$ to $q$ modulo
    $S$}, as the set of all words $a_1 a_2 \dots a_n \in \Sigma^+$
  such that there is a path:
  $p = p_0 \xra{a_1} p_1 \xra{a_2} \cdots p_{n-1} \xra{a_n} p_n = q$
  in $M$ where
  \begin{itemize}
  \item no intermediate state belongs to $S$:
    $\{ p_1, \dots, p_{n-1}\} \incl Q \setminus S$, and
  \item there is no intermediate cycle: if $p_i = p_j$ for some
    $0 \le i < j \le n$, then $p_i = p_0$ and $p_j = p_n$.
  \end{itemize}
  We write $\spaths{p}{S}$ for $\bigcup_{q \in Q} \spath{p}{q}{S}$,
  the set of all simple words modulo $S$, emanating from $p$.
\end{definition}

For example, in Figure~\ref{fig:induced-eqv}, with $S = \{q_0, q_2\}$,
we have $\spath{q_0}{q_1}{S} = \{a\}$, $\spath{q_0}{q_0}{S} = \{b\}$
and $\spath{q_0}{q_2}{S} = \{ab\}$. These are the same in
Figure~\ref{fig:induced-not-eqv}, except $\spath{q_0}{q_2}{S} = aa$.

Here are some preliminary technical lemmas.
Due to determinism of $M$, we get the following 
property, which underlies several arguments that come later.

\begin{lemma}
  Let $S \incl Q$, and $p, q, r \in Q$ s.t. $q \neq r$. We have
  $\spath{p}{q}{S} \cap \spath{p}{r}{S} = \emptyset$.
\end{lemma}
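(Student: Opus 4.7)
The statement is an immediate consequence of determinism of $M$, so the plan is very short. My proof proposal is as follows.

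The plan is to argue by a direct contradiction using the fact that $M$ is deterministic. Suppose, for contradiction, that there exists a word $w = a_1 a_2 \cdots a_n \in \spath{p}{q}{S} \cap \spath{p}{r}{S}$. Then by Definition~\ref{def:simple-words}, there are two paths in $M$ starting at $p$ and both labelled $w$: one ending at $q$ and the other ending at $r$.

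Since $M$ is a (complete) DFA, its transition function $\delta$ is deterministic, so the sequence of states visited when reading $w$ from $p$ is uniquely determined by iterating $\delta$. In particular, the final state after reading $w$ is uniquely determined. Hence $q = r$, contradicting the assumption $q \neq r$.

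The only thing to check is that we are legitimately comparing final states across two paths that might a priori be different (e.g.\ one might wonder whether the ``no intermediate cycle'' or ``no intermediate state in $S$'' conditions create some subtlety). They do not: the defining clauses of $\spath{p}{\cdot}{S}$ only restrict which paths are considered, but any path in $M$ that starts at $p$ and reads $w$ must, by determinism, coincide with the unique run of $M$ on $w$ from $p$, so both paths must be this unique run and therefore end at the same state. No nontrivial step is needed; the whole proof is essentially one invocation of the determinism of $\delta$.
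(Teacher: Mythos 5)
Your proof is correct and matches the paper exactly: the paper states this lemma without a written proof, attributing it directly to the determinism of $M$, which is precisely the one-step argument you give (a word $w$ in both sets would yield two runs of $M$ on $w$ from $p$, but the run is unique, forcing $q = r$). Your extra remark that the restrictions in Definition~\ref{def:simple-words} only filter paths and cannot break uniqueness is a reasonable sanity check and does not deviate from the paper's reasoning.
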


The next lemma says that every transition either extends a simple path
or is a ``back-edge'' leading to an ancestor in the path.

\begin{lemma}\label{lem:spaths-property}
  Let $S \incl Q$ and $p, q, u \in Q ~ (q \notin S, p \ne q) $ such
  that $q \xra{a} u$ is a transition. For every
  $\s \in \spath{p}{q}{S}$, either $\s a$ belongs to $\spath{p}{u}{S}$, or some
  proper prefix of $\s a$ belongs to $\spath{p}{u}{S}$.
\end{lemma}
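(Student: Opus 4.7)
The plan is to do a case analysis on where the new endpoint $u$ sits in the path witnessing $\sigma \in \spath{p}{q}{S}$. Let $\sigma = a_1 \dots a_n$ and unfold the corresponding path $p = p_0 \xra{a_1} p_1 \xra{a_2} \cdots \xra{a_n} p_n = q$ in $M$. The hypothesis $\sigma \in \spath{p}{q}{S}$ with $p \ne q$ forces this path to have \emph{no} repeated states at all: the only repetition allowed by Definition~\ref{def:simple-words} would be $p_0 = p_n$, and that is ruled out by $p \ne q$. Moreover, every intermediate state $p_1, \dots, p_{n-1}$ avoids $S$, and by hypothesis $p_n = q$ also avoids $S$. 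These observations will be used throughout.

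First I would extend the path by the transition $q \xra{a} u$, producing $p_0, p_1, \dots, p_n, u$, and ask whether $u$ occurs among the $p_i$. If $u \notin \{p_0, p_1, \dots, p_n\}$, then the extended path has no repetitions and all of $p_1, \dots, p_n$ are outside $S$, so $\sigma a \in \spath{p}{u}{S}$ directly. If $u = p_0 = p$, then the only repetition in the extended path is $p_0 = p_{n+1}$, which is precisely the ``entire-path cycle'' permitted by the definition, and again $\sigma a \in \spath{p}{u}{S}$.

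Next I would treat the two cases where $u$ reappears strictly earlier. If $u = p_n = q$ (a self-loop), then $\sigma a$ is not simple because $p_n = p_{n+1}$ with $p_n \ne p_0$; however $\sigma$ itself is a proper prefix of $\sigma a$ that already lies in $\spath{p}{q}{S} = \spath{p}{u}{S}$. If $u = p_i$ for some $1 \le i \le n-1$, then again $\sigma a$ fails simplicity (the repetition $p_i = p_{n+1}$ with $i \ne 0$), so I would instead exhibit the proper prefix $a_1 \dots a_i$ of $\sigma a$. Its underlying path $p_0, p_1, \dots, p_i$ is a sub-path of the already repetition-free $\sigma$-path, its intermediate states $p_1, \dots, p_{i-1}$ avoid $S$, and (since $u = p_i \ne p_0$, otherwise we would be back in the previous case) it has no repetitions, so $a_1 \dots a_i \in \spath{p}{u}{S}$.

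These four cases are mutually exclusive and exhaustive thanks to the absence of repetitions in the $\sigma$-path, so the lemma follows. The main delicate point, and the only place where one must be careful, is distinguishing the ``loop back to $p_0$'' case (which is allowed by the simple-path definition and keeps $\sigma a$ simple) from the ``loop back to an interior $p_i$'' case (which forces a strict truncation); getting this boundary right is what makes the statement slightly subtle rather than immediate.
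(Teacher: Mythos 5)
Your proof is correct and takes essentially the same route as the paper's: a case analysis on whether $u$ occurs on the path witnessing $\sigma \in \spath{p}{q}{S}$, using $p \ne q$ to rule out all repetitions on that path and $q \notin S$ to legitimize the one-step extension. If anything, your version is slightly more careful than the paper's, which contains an off-by-one slip in the truncation case (it writes the prefix as $a_1 \dots a_{i-1}$ where it should be $a_1 \dots a_i$, exactly as you have it).
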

\begin{proof}
  Since $\sigma \in \spath{p}{q}{S}$, there is a path
  $p = p_0 \xra{a_1} p_1 \xra{a_2} \cdots p_{n-1} \xra{a_n} p_n = q$,
  with $\s = a_1 a_2 \dots a_n$, satisfying the conditions of
  Definition~\ref{def:simple-words}. If $u \neq p_i$ for
  $0 \le i \le n$, then $\s a \in \spath{p}{u}{S}$ since $q \notin
  S$. If $u = p_i$ for some $0 < i \le n$, then the prefix
  $a_1 a_2 \dots a_{i-1} \in \spath{p}{u}{S}$. If $u = p_0$, then we
  have $\s a \in \spath{p}{u}{S}$.
\end{proof}

Fix a complete DFA $M$ for this section. A DSA can be `induced' from
$M$ using $S$, by fixing states to be $S$ (initial and final states
retained) and transitions to be the simple words modulo $S$ connecting
them i.e. $p \xra{\s} q$ if $\s \in \spath{p}{q}{S}$ (Figure
\ref{fig:induced-eqv}).

\begin{definition}[Induced DSA]\label{def:induced-dsa}
  Given a DFA $M$ and a set $S$ of states in $M$ that contains the
  initial and final states, we define the induced DSA of $M$ (using
  $S$). The states of the induced DSA are given by $S$. The initial
  and final states are the same as in $M$. The transitions are given
  by the simple words modulo $S$ i.e. $p \xra{\s} q$ if
  $\s \in \spath{p}{q}{S}$, for every pair of states $p, q \in S$.
\end{definition}

The induced DSA may not be language-equivalent (Figure
\ref{fig:induced-not-eqv}); to ensure that, we need to check some
conditions. Here is a central definition.

\begin{definition}[Suffix-compatible transitions]\label{def:suffix-compatibility}
  Fix a subset $S \incl Q$. A transition $q \xra{a} u$ is
  suffix-compatible w.r.t. $S$ if either
  $q \in S$ or $u \in S~\textbf{OR}$ for all $p \in S$, and for every
  $\s \in \spath{p}{q}{S}$, there is an $\a \in \spath{p}{u}{S}$ s.t.:
  \begin{itemize}
  \item $\a$ is a suffix of $\s a$, and
  \item moreover, $\a$ is the longest suffix of $\s a$ among words in
    $\spaths{p}{S}$.
  \end{itemize}
\end{definition}

Note that a transition $q \xra{a} u$ is trivially suffix-compatible if
$q \in S$ or $u \in S$. The rest of the condition only needs to be
checked when both of $q,u \notin S$. In
Figure~\ref{fig:induced-not-eqv}, we find the self-loop at $q_1$ to
not be suffix-compatible: we have $S = \{q_0, q_2\}$, and
$\spath{q_0}{q_1}{S} = \{ a \}$, $\spaths{q_0}{S} = \{b, a, ab\}$; the
transition $q_1 \xra{b} q_1$ is not suffix-compatible since there is
no suffix of $ab$ in $\spath{q_0}{q_1}{S}$. Whereas in
Figure~\ref{fig:induced-eqv}, the loop is labeled $a$ instead of
$b$. The transition $q_1 \xra{a} q_1$ is suffix-compatible, since the longest suffix of $aa$ among $\spaths{q_0}{S}$ is $a$ and it is
present in $\spath{q_0}{q_1}{S}$. Let us take the DFA in the right of
Figure~\ref{fig:dsa-to-dfa-eg}, and let $S = \{q, q'\}$. Here are some
of the simple path sets: $\spath{q}{ab}{S} = \{ab, bab, baab\}$,
$\spath{q}{aba}{S} = \{aba, baba, baaba\}$. Consider the transition
$aba \xra{b} ab$. It can be verified that for every
$\s \in \spath{q}{aba}{S}$, the longest suffix of the extension
$\s b$, among simple paths out of $q$, indeed lies in the state $ab$.
In fact, all transitions satisfy suffix-compatibility w.r.t. the
chosen set $S$.

The suffix-compatibility condition is described using simple paths to
states. It requires that every transition take each simple word
reaching its source to the state tracking the longest suffix of its
one-letter extension. This condition on simple paths, transfers to all
words, that circle around the suppressed states. In
Figure~\ref{fig:dsa-to-dfa-eg}, this property can be verified by
considering the word $bbabab$ and its run: $q \xra{b} b \xra{b} b \xra{a} ba \xra{b} ab \xra{a} aba \xra{b} ab$.
At each step, the state reached corresponds to the longest suffix
among the simple words out of $q$.
In the next two lemmas, we prove this claim.

We will use a special
notation: for a state $p \in S$, we write $\out(p, S)$ for
$\bigcup_{r \in S} \spath{p}{r}{S}$; these are the simple words that
start at $p$ and end in some state $r$ of $S$. Notice that these are
the words that appear as transitions in the induced DSA. In
particular, $\out(p)$ in the induced DSA equals $\out(p, S)$. We also remark that $\out(p,S)$ is different from $\spaths{p}{S}$: the latter considers simple words from $p$ to all states in $Q$, whereas the former only considers words from $p$ to $S$.

\begin{restatable}{lemma}{sfxCompatibleDfaToWords}
  \label{lem:sfx-compatibility-dfa-paths-to-words}
  Let $S$ be a set of states such that every transition of $M$ is
  suffix-compatible w.r.t. $S$. Pick $p \in S$, and let
  $w \in \Sigma^+$ be a word with a run
  $p = p_0 \xra{w_1} p_1 \xra{w_2} p_2 \dots p_{n-1} \xra{w_n} p_n$
  such that the intermediate states $p_1, \dots, p_{n-1}$ belong to
  $Q \setminus S$. The state $p_n$ may or may not be in $S$. Then:
  \begin{itemize}
  \item no proper prefix of $w$ has any word from $\out(p,S)$ as
    suffix (no factor of $w$ is in $\out(p,S)$), and
  \item there is $\a \in \spath{p}{p_n}{S}$ such that $\a$ is the
    longest suffix of $w$ among words in $\spaths{p}{S}$.
  \end{itemize}
\end{restatable}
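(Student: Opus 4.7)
The plan is to prove both items jointly by induction on $n$. The base case $n = 1$ is immediate: the only proper prefix is $\e$, which has no non-empty suffixes, and $w_1 \in \spath{p}{p_1}{S}$ is trivially the longest suffix of $w_1$ in $\spaths{p}{S}$.

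For the inductive step, set $w' := w_1 \cdots w_{n-1}$ and apply the induction hypothesis to the length-$(n-1)$ prefix run ending at $p_{n-1}$ (which meets the hypothesis: the intermediate states $p_1,\ldots,p_{n-2}$ lie in $Q\setminus S$, and the endpoint $p_{n-1}$ also lies in $Q\setminus S$, being intermediate in the full run). This supplies $\alpha' \in \spath{p}{p_{n-1}}{S}$ that is the longest suffix of $w'$ in $\spaths{p}{S}$, together with the ``no-early-firing'' property for all proper prefixes of $w'$. The auxiliary fact I will use repeatedly is that dropping the last letter of any $\gamma \in \spath{p}{q}{S}$ of length at least two yields a word still in $\spaths{p}{S}$: the no-intermediate-cycle condition of Definition~\ref{def:simple-words} forces the intermediate vertices of a $\gamma$-path to be pairwise distinct (and distinct from the endpoints, up to a possibly trivial start-end match), so the truncated path is still a valid simple path to the penultimate state.

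For the second item, suffix-compatibility applies nontrivially to $p_{n-1} \xra{w_n} p_n$ since $p_{n-1} \notin S$; instantiating with $\sigma := \alpha'$ yields some $\alpha \in \spath{p}{p_n}{S}$ that is the longest suffix of $\alpha' w_n$ in $\spaths{p}{S}$. A hypothetical counterexample $\gamma$ --- a suffix of $w$ in $\spaths{p}{S}$ with $|\gamma| > |\alpha|$ --- either satisfies $|\gamma| \le |\alpha' w_n|$ (and then $\gamma$ is already a suffix of $\alpha' w_n$, contradicting the choice of $\alpha$), or $|\gamma| > |\alpha' w_n|$ (and then the drop-last-letter fact makes $\gamma$ minus its last letter a suffix of $w'$ in $\spaths{p}{S}$ longer than $\alpha'$, contradicting the inductive hypothesis).

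The first item requires a separate argument only for the new proper prefix $w'$; every shorter proper prefix of $w$ is already a proper prefix of $w'$ and is handled by the IH. Suppose for contradiction that $\beta \in \out(p,S)$ is a suffix of $w'$. The second item applied to $w'$ gives $|\beta| \le |\alpha'|$ with $\beta$ a suffix of $\alpha'$, and the disjointness of $\spath{p}{\cdot}{S}$ across targets forces $\beta \neq \alpha'$ (since $\alpha'$ reaches $p_{n-1} \notin S$ while $\beta$ reaches some $s_\beta \in S$). So $\beta$ is a strict suffix of $\alpha'$, and ruling this out is the main obstacle. I expect the remaining step to be a secondary induction along the path of $\alpha'$ from $p$, repeatedly invoking suffix-compatibility to track how the longest simple suffix evolves letter by letter, pinpointing an intermediate position at which the tracked simple-suffix coincides with $\beta$ and thus forces the corresponding intermediate state of the $\alpha'$-path into $S$, contradicting the fact that all such intermediate vertices lie in $Q \setminus S$.
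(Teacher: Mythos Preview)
Your argument for the second bullet is essentially the paper's: induct on $|w|$, obtain $\alpha' \in \spath{p}{p_{n-1}}{S}$ as the longest simple-word suffix of $w'$, argue via the drop-last-letter fact that the longest simple-word suffix of $w$ equals that of $\alpha' w_n$, and finish by suffix-compatibility. One nit: when $p_n \in S$, Definition~\ref{def:suffix-compatibility} declares suffix-compatibility trivially satisfied, so you need the direct observation that then $\alpha' w_n \in \spath{p}{p_n}{S}$ (since $p_n \in S$ cannot occur among the intermediate vertices of the $\alpha'$-path) and is therefore the longest suffix; the paper makes exactly this case split.

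Your plan for the first bullet, however, cannot succeed: under the lemma's hypotheses alone the first bullet is \emph{false}. The paper's own Figure~\ref{fig:well-formed-set} with $S=\{0,2,4\}$ is a witness. Every transition is suffix-compatible w.r.t.\ $S$ (as the text there states), yet for the run $0 \xra{a} 1 \xra{b} 3 \xra{a} 2$ on $w=aba$ the proper prefix $ab$ has $b \in \spath{0}{4}{S} \subseteq \out(0,S)$ as a suffix. Your secondary induction along $\alpha'=ab$ would track the longest simple suffixes $a$, then $ab$; the offending $\beta=b$ is never the tracked value, and no contradiction materialises. The paper's own inductive step in fact proves only the second bullet and is silent on the first; the lemma is invoked only inside Theorem~\ref{thm:suffix-tracking-is-correct}, where $S$ is additionally assumed well-formed (Definition~\ref{def:well-formed-set}). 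With well-formedness the first bullet follows in one line from the second: any $\beta \in \out(p,S)$ that is a suffix of $w'$ is, by the second bullet applied to $w'$, a suffix of $\alpha' \in \spath{p}{p_{n-1}}{S}$ with $p_{n-1}\notin S$ --- precisely the configuration well-formedness forbids. So the fix is not a secondary induction but an added hypothesis.
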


\begin{proof}
  We prove this by induction on the length of the word $w$. When
  $w = a$ for $a \in \Sigma$, we have the run $p \xra{a} q$. The first
  conclusion of the lemma is vacuously true, since there is no
  non-empty proper prefix of $a$. For the second conclusion, note
  that, by definition, we have $a \in \spath{p}{q}{S}$ (in both cases
  when $q \neq p$ and $q = p$). Moreover, clearly $a$ is the longest
  suffix of $a$.

  Now, let $w = w'a$ with a run $p \xra{w'} p' \xra{a} p_n$ such that
  $p'$ and the intermediate states while reading $w'$ belong to
  $Q \setminus S$. Assume the lemma holds for $w'$.  Therefore, the
  longest suffix $\s'$ of $w'$, among $\spaths{p}{S}$, belongs to
  $\spath{p}{p'}{S}$, that is: $\s' \lsfx{\spaths{p}{S}} w'$ and
  $\s' \in \spath{p}{p'}{S}$ (notation as in
  Definition~\ref{def:notation}). We claim that the longest suffix of
  $w'a$ is in fact the longest suffix of $\s' a$.  If not: there
  is $\s'' a$ with $|\s''| > |\s'|$ such that
  $\s'' a \lsfx{\spaths{p}{S}} w' a$. Therefore
  $\s'' \lsfx{\spaths{p}{S}} w'$. This contradicts
  $\s' \lsfx{\spaths{p}{S}} w'$. If $p_n \in Q \setminus S$, the
  longest suffix of $\sigma' a$ lies in $p_n$, by suffix-compatibility
  (Definition~\ref{def:suffix-compatibility}). If $p_n \in S$, we will
  have the exact word $\s' a \in \spath{p}{p_n}{S}$. 
\end{proof}

\begin{restatable}{lemma}{sfxCompatibleWordsToDfa}
  \label{lem:sfx-compatibility-words-to-paths}
  Let $S$ be a set of states such that every transition of $M$ is
  suffix-compatible w.r.t. $S$. Let $p \in S$, and $w \in \Sigma^+$ be
  a word such that no proper prefix of $w$ has a word from
  $\out(p,S)$ as suffix (no factor of $w$ is in $\out(p,S)$). Then:
  \begin{itemize}
  \item The run of $M$ starting from $p$, is of the form
    $p \xra{w_1} p_1 \xra{w_2} p_2 \dots p_{n-1} \xra{w_n} p_n$ where
    $\{p_1, \dots, p_{n-1}\} \incl Q \setminus S$ (notice that we have
    not included $p_n$, which may or may not be in $S$).
  \item the longest suffix of $w$, among $\spaths{p}{S}$ lies in
    $\spath{p}{p_n}{S}$.
  \end{itemize}

\end{restatable}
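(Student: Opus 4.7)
I would prove the lemma by induction on $|w|$. The base case $w = a$ is immediate: the run is $p \xra{a} p_1$ with no intermediate states, $a \in \spath{p}{p_1}{S}$ directly from Definition~\ref{def:simple-words} (a one-letter path has no intermediate vertices, and the endpoints may coincide), and $a$ is trivially the longest suffix of itself in $\spaths{p}{S}$.

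For the inductive step, I would write $w = w' a$. Every proper prefix of $w'$ is also a proper prefix of $w$, so the hypothesis on $w$ transfers and the induction hypothesis applies to $w'$: the run of $M$ on $w'$ ends at some $p'$ with intermediate states in $Q \setminus S$, and the longest suffix $\sigma'$ of $w'$ in $\spaths{p}{S}$ lies in $\spath{p}{p'}{S}$. A key observation is that $p' \notin S$: the word $w'$ is itself a proper prefix of $w$, so by hypothesis it has no suffix in $\out(p,S)$; but $\sigma'$ is a suffix of $w'$, so $\sigma' \notin \out(p,S)$, forcing $p' \notin S$. Appending the transition $p' \xra{a} p_n$ therefore produces a run on $w$ whose intermediate states (now including $p'$) all sit in $Q \setminus S$, giving the first bullet.

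For the second bullet I would split on whether $p_n \in S$. If $p_n \in S$, I argue directly that $\sigma' a \in \spath{p}{p_n}{S}$: all intermediate vertices on $\sigma'$'s path lie in $Q \setminus S$ while $p_n \in S$, so appending the edge $p' \xra{a} p_n$ cannot land on an intermediate vertex; the only possible repetition is the endpoint $p_n = p$, which is permitted. If $p_n \notin S$, both endpoints of $p' \xra{a} p_n$ lie outside $S$, and suffix-compatibility (Definition~\ref{def:suffix-compatibility}) applied with $\sigma' \in \spath{p}{p'}{S}$ directly yields $\alpha \in \spath{p}{p_n}{S}$ that is the longest suffix of $\sigma' a$ in $\spaths{p}{S}$. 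In either subcase I obtain $\alpha \in \spath{p}{p_n}{S}$ which is a suffix of $\sigma' a$, and hence of $w$, and is the longest suffix of $\sigma' a$ in $\spaths{p}{S}$.

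To promote this from ``longest suffix of $\sigma' a$'' to ``longest suffix of $w$,'' I would suppose for contradiction that some $\beta \in \spaths{p}{S}$ is a suffix of $w$ with $|\beta| > |\alpha|$. Since $|\beta| \ge 2$, I can write $\beta = \beta' a$ with $\beta'$ a non-empty suffix of $w'$; a brief check using Definition~\ref{def:simple-words} shows that stripping the last edge of a simple path yields a path with no repetitions, so $\beta' \in \spaths{p}{S}$. The induction hypothesis then gives $|\beta'| \le |\sigma'|$, so $\beta'$ is a suffix of $\sigma'$ and $\beta$ is a suffix of $\sigma' a$, contradicting the maximality of $\alpha$. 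I expect the main obstacle to be the $p_n \in S$ subcase, where suffix-compatibility is vacuous and one must instead verify the extension-to-simple-path by direct combinatorics; keeping the distinction between $\out(p, S)$ and $\spaths{p}{S}$ straight is the other point requiring care.
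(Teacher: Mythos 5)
Your proof is correct, and it follows the same inductive skeleton as the paper, but it is organized differently: the paper proves the second bullet in one line by citing the companion Lemma~\ref{lem:sfx-compatibility-dfa-paths-to-words} (whose second item, applied to the run on $w$ once the first bullet is established, immediately gives that the longest suffix of $w$ among $\spaths{p}{S}$ lies in $\spath{p}{p_n}{S}$), whereas you inline that lemma's content --- the case split on $p_n \in S$ versus $p_n \notin S$, the use of suffix-compatibility in the latter case, and the promotion from ``longest suffix of $\sigma' a$'' to ``longest suffix of $w$.'' Your self-contained version duplicates work that the paper deliberately factored out, but it also makes explicit two points the paper's own write-up glosses over: first, the paper asserts that the endpoint $p'$ of the run on $w'$ lies in $Q \setminus S$, even though the induction hypothesis only constrains the intermediate states; your observation that $p' \in S$ would place $\sigma' \in \out(p,S)$ as a suffix of the proper prefix $w'$, contradicting the hypothesis on $w$, is exactly the missing justification. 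Second, your check that stripping the last edge of a simple path in $\spaths{p}{S}$ again yields a simple path (using that intermediates avoid $S$ while $p \in S$) is needed for the maximality argument and is left implicit in the paper's proof of the companion lemma. So the trade-off is modularity versus explicitness: the paper's route is shorter given its lemma ordering, while yours is more careful at the two spots where the paper is terse.
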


\begin{proof}
  We prove this by induction on the length of the word $w$. Suppose
  $w = a$, for $a \in \Sigma$. Since $M$ is complete, there is a
  transition $p \xra{a} p_1$. The first item is vacuously
  true. Moreover, if there is $q \in S$ and $\a \in \spath{p}{q}{S}$
  such that $\a$ is a suffix of $a$, then clearly $\a = a$. Hence
  $q = p_1$, due to the determinism of the underlying automaton. If
  there is no such $q$, then it means $p_1\notin S$.

  Suppose $w = w'a$ satisfies the assumptions of the lemma. Assuming
  the lemma holds for $w'$, we have a run $p \xra{w'} p'$ such that
  $p'$ and all intermediate states lie in $Q \setminus S$. Let
  $p' \xra{a} p_n$ be the outgoing transition on $a$ from $p'$. This
  gives a path $p \xra{w'} p \xra{a} p_n$ satisfying the first part of
  the lemma. By the second item of
  Lemma~\ref{lem:sfx-compatibility-dfa-paths-to-words}, the longest
  suffix $\a'a$ of $w'a$ among $\spaths{p}{S}$ belongs to
  $\spath{p}{p_n}{S}$. 
\end{proof}

Suffix-compatibility alone does not suffice to preserve the
language. In Figure~\ref{fig:well-formed-set}, consider
$S = \{0, 2, 4\}$. Every transition is suffix-compatible
w.r.t. $S$. The DSA induced using $S$ is shown in the middle. Notice
that it is not language equivalent, due to the word $aba$ for
instance. The run of $aba$ looks as follows: $0 \xra{ab} 4 \xra{b}
4$. The expected run was $0 \xra{aba} 2$, but that does not happen
since there is a shorter prefix with a matching transition. Even
though, we have suffix-compatibility, we need to ensure that there are
no ``conflicts'' between outgoing patterns. This leads to the next
definition.

\begin{definition}[Well-formed set]\label{def:well-formed-set}
  A set of states $S \incl Q$ is well-formed if there is no
  $p \in S, q \in S$ and $q' \notin S$, with a pair of words
  $\a \in \spath{p}{q}{S}$ (simple word to a state in $S$) and
  $\beta \in \spath{p}{q'}{S}$ (simple word to a state not in $S$)
  such that $\alpha$ is a suffix of $\beta$.
\end{definition}

We observe that the set $S=\{0,2,4\}$ is not well-formed since
$b \in \spath{0}{4}{S}, ab \in \spath{0}{3}{S}$ and $b$ is a suffix of
$ab$. Whereas $S'=\{0,2,3,4\}$ is both suffix-tracking, and
well-formed, and induces an equivalent DSA. On the word $aba$, the run
on the DSA would be $0 \xra{ab} 3 \xra{a} 2$. The first move
$0 \xra{ab} 3$ applies the longest match criterion, and fires the $ab$ transition
since $ab$ is a longer suffix than $b$.  This was not possible before
since $3 \notin S$. It turns out that the two conditions --- suffix-compatibility and
well-formedness --- are sufficient to induce a language equivalent
DSA.

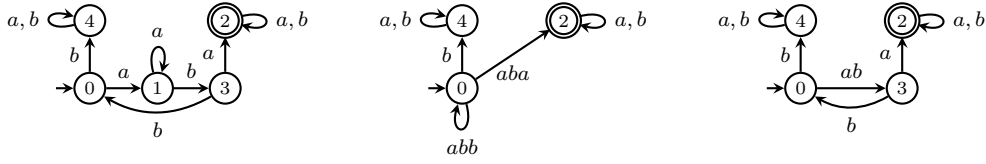
\begin{figure}[t]
  \centering
  \begin{tikzpicture}[state/.style={circle, draw, thick, inner sep =
      2pt}, scale=0.9]
    \begin{scope}[every node/.style={state}]
      \node (0) at (-0.5,0) {\tiny $\rr$}; \node (1) at (0.5, 0)
      {\tiny $\bb$}; \node (2) at (1.5, 0) {\tiny $\yy$}; \node
      [double] (3) at (1.5, 1) {\tiny $\gg$}; \node (4) at (-0.5, 1)
      {\tiny $\ww$};
    \end{scope}
    \begin{scope}[->, thick, >=stealth, auto]
      \draw (-1, 0) to (0); \draw (0) to node {\scriptsize $a$} (1);
      \draw (0) to node {\scriptsize $b$} (4); \draw (1) to [loop
      above] node {\scriptsize $a$} (1); \draw (1) to node
      {\scriptsize $b$} (2); \draw (2) to node {\scriptsize $a$} (3);
      \draw (4) to [loop left] node {\scriptsize $a, b$} (4); \draw
      (3) to [loop right] node {\scriptsize $a, b$} (3); \draw (2) to
      [bend left=30] node {\scriptsize $b$} (0);
    \end{scope}

    \begin{scope}[xshift=5 cm]
      \begin{scope}[every node/.style={state}]
        \node (0) at (0,0) {\tiny $\rr$}; \node [ double] (3) at (1.5,
        1) {\tiny $\gg$}; \node (4) at (0, 1) {\tiny $\ww$};
      \end{scope}
      \begin{scope}[->, thick, >=stealth, auto]
        \draw (-0.5, 0) to (0); \draw (0) to node [below] {\scriptsize
          $aba$} (3); \draw (0) to node {\scriptsize $b$} (4); \draw
        (0) to [loop below] node {\scriptsize $abb$} (0); \draw (4) to
        [loop left] node {\scriptsize $a, b$} (4); \draw (3) to [loop
        right] node {\scriptsize $a, b$} (3);
      
      \end{scope}

    \end{scope}

    \begin{scope}[xshift=10cm]
      \begin{scope}[every node/.style={state}]
        \node (0) at (0,0) {\tiny $\rr$}; \node [double] (3) at (1.5,
        1) {\tiny $\gg$}; \node (2) at (1.5, 0) {\tiny $\yy$}; \node
        (4) at (0, 1) {\tiny $\ww$};
      \end{scope}
      \begin{scope}[->, thick, >=stealth, auto]
        \draw (-0.5, 0) to (0); \draw (0) to node {\scriptsize $ab$}
        (2); \draw (0) to node {\scriptsize $b$} (4); \draw (2) to
        node {\scriptsize $a$} (3); \draw (4) to [loop left] node
        {\scriptsize $a, b$} (4); \draw (3) to [loop right] node
        {\scriptsize $a, b$} (3); \draw (2) to [bend left=30] node
        {\scriptsize $b$} (0);
      \end{scope}

    \end{scope}
  \end{tikzpicture}
  \caption{A DFA, a non-equivalent DSA and an equivalent induced DSA. }
  \label{fig:well-formed-set}
\end{figure}

\begin{definition}[Suffix-tracking sets]\label{def:suffix-tracking-set}
  A set of states $S \incl Q$ is suffix-tracking if it contains the
  initial and accepting states, and
  \begin{enumerate}
  \item every transition of $M$ is suffix-compatible w.r.t. $S$,
  \item and $S$ is well-formed.
  \end{enumerate}
\end{definition}

All these notions lead to the main theorem of this section.

\begin{restatable}{theorem}{suffixTrackingCorrect}
  \label{thm:suffix-tracking-is-correct}
  Let $S$ be a suffix-tracking set of complete DFA $M$, and let
  $\Aa_S$ be the DSA induced using $S$. Then: $\Ll(\Aa_S) = \Ll(M)$
\end{restatable}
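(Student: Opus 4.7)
The plan is to show that on any input word $w$, the unique DFA run of $M$ and the unique DSA run of $\Aa_S$ agree on acceptance. Fix $w$ and its DFA run $r_0 \xra{a_1} r_1 \cdots \xra{a_n} r_n$; since $q^{init} \in S$, identify indices $0 = i_0 < i_1 < \cdots < i_k \le n$ at which the run visits $S$, with all other positions outside $S$, and set $w_j = a_{i_{j-1}+1} \cdots a_{i_j}$ so that $w_j \in \spath{r_{i_{j-1}}}{r_{i_j}}{S}$. The argument amounts to showing that $\Aa_S$ performs exactly the moves $r_{i_{j-1}} \xra[w_j]{w_j} r_{i_j}$ for $j = 1, \ldots, k$, with any tail beyond position $i_k$ leaving the DSA dangling at $r_{i_k}$.

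For the inclusion $\Ll(M) \subseteq \Ll(\Aa_S)$, if $w \in \Ll(M)$ then $r_n \in F \subseteq S$, so $i_k = n$ and there is no tail. Applying Lemma~\ref{lem:sfx-compatibility-dfa-paths-to-words} to each segment with $p = r_{i_{j-1}}$ yields (a) $w_j$ is the longest suffix of itself among $\spaths{r_{i_{j-1}}}{S}$, and (b) no proper prefix of $w_j$ has any label of $\out(r_{i_{j-1}}, S)$ as suffix. Since $w_j \in \out(r_{i_{j-1}}, S) \subseteq \spaths{r_{i_{j-1}}}{S}$, fact (a) upgrades to the statement that $w_j$ is the longest suffix in $\out(r_{i_{j-1}})$; combined with (b), this matches exactly the DSA move condition of Definition~\ref{def:DSA-moves}. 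Hence $\Aa_S$ executes the chain of moves and accepts $w$. For the reverse inclusion, an accepting DSA run $r_0 \xra[\alpha_1]{w_1} r_1 \cdots \xra[\alpha_k]{w_k} r_k$ with $r_k \in F$ reveals, via Lemma~\ref{lem:sfx-compatibility-words-to-paths} applied to each $w_j$, a DFA run of $w_j$ from $r_{j-1}$ through $Q \setminus S$ whose endpoint tracks the longest suffix of $w_j$ among $\spaths{r_{j-1}}{S}$; since $\alpha_j \in \spath{r_{j-1}}{r_j}{S}$ is that longest suffix, determinism of $M$ pins the endpoint to $r_j$, and concatenation yields an accepting DFA run on $w$.

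The main obstacle is fact (b), the ``no earlier match'' guarantee, which is precisely where well-formedness is essential --- the counterexample in Figure~\ref{fig:well-formed-set} already shows that suffix-compatibility alone cannot prevent a shorter suffix in $\out(p, S)$ from pre-empting the intended move. If some proper prefix $w_j'$ of $w_j$ had a suffix $\gamma \in \out(r_{i_{j-1}}, S)$, then Lemma~\ref{lem:sfx-compatibility-dfa-paths-to-words} applied to $w_j'$ would produce a longest suffix $\sigma' \in \spath{r_{i_{j-1}}}{r_m}{S}$ with $r_m \notin S$ and $|\sigma'| \ge |\gamma|$, forcing $\gamma$ to be a suffix of $\sigma'$; but $\gamma$ reaches an $S$-state while $\sigma'$ reaches a non-$S$-state, directly contradicting Definition~\ref{def:well-formed-set}. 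With (b) thus secured, the remaining steps are routine inductive bookkeeping over the number of segments.
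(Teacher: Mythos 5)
Your overall architecture matches the paper's proof --- segment the DFA run at its visits to $S$, push each segment through Lemma~\ref{lem:sfx-compatibility-dfa-paths-to-words} for $\Ll(M) \subseteq \Ll(\Aa_S)$, and use the move semantics plus Lemma~\ref{lem:sfx-compatibility-words-to-paths} for the converse --- but there is one concretely false step. You assert that each inter-visit segment satisfies $w_j \in \spath{r_{i_{j-1}}}{r_{i_j}}{S}$, so that the DSA performs the move $r_{i_{j-1}} \xra[w_j]{~w_j~} r_{i_j}$, and your ``upgrade'' of fact (a) rests on $w_j$ itself being a member of $\out(r_{i_{j-1}}, S)$. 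This fails: a segment between consecutive visits to $S$ may cycle inside $Q \setminus S$, which Definition~\ref{def:simple-words} forbids for simple words. In Figure~\ref{fig:induced-eqv} with $S = \{q_0, q_2\}$ and $w = aab$, the DFA run $q_0 \xra{a} q_1 \xra{a} q_1 \xra{b} q_2$ repeats $q_1$, so $aab \notin \spath{q_0}{q_2}{S}$ (indeed $\spath{q_0}{q_2}{S} = \{ab\}$); the induced DSA has no transition labeled $aab$, and the actual move is $q_0 \xra[ab]{~aab~} q_2$. The repair is short and already contained in the lemma you cite: its second conclusion supplies some $\alpha_j \in \spath{r_{i_{j-1}}}{r_{i_j}}{S}$ that is the longest suffix of $w_j$ among $\spaths{r_{i_{j-1}}}{S}$, hence the longest among the outgoing labels $\out(r_{i_{j-1}}, S) \subseteq \spaths{r_{i_{j-1}}}{S}$, so the move $r_{i_{j-1}} \xra[\alpha_j]{~w_j~} r_{i_j}$ is valid with the \emph{same} source and target; your chain-of-moves argument then goes through verbatim with $\alpha_j$ in place of $w_j$. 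This is exactly how the paper phrases it (``the transition $q_0 \xra{\alpha} q_i$ is triggered, where $\alpha$ is the longest suffix of $w_1 \dots w_i$'').

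Two further points. Your third paragraph is actually more careful than the paper: the ``no earlier match'' property appears in the paper as the first bullet of Lemma~\ref{lem:sfx-compatibility-dfa-paths-to-words}, stated under suffix-compatibility alone, but as your own appeal to Figure~\ref{fig:well-formed-set} shows, that bullet genuinely needs well-formedness (there $S = \{0,2,4\}$ is suffix-compatible, yet $b \in \out(0,S)$ is a suffix of the proper prefix $ab$ of $aba$); your derivation of (b) from the lemma's second bullet plus Definition~\ref{def:well-formed-set} is correct and fills in this reasoning. On the other hand, in the reverse inclusion you claim $\alpha_j$ ``is that longest suffix'' among all of $\spaths{r_{j-1}}{S}$, but the move semantics only makes $\alpha_j$ longest within $\out(r_{j-1}, S)$; ruling out a longer suffix that reaches a state \emph{outside} $S$ requires the same well-formedness argument as in (b), after which the longest-match rule and disjointness of the $\spath{\cdot}{\cdot}{\cdot}$ sets pin the endpoint to $r_j$. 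The paper is equally terse at that point, so this is a small omission rather than a failure --- but since you spelled out (b), the symmetric step deserves the same treatment.
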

\begin{proof}
  Pick $w \in \Ll(M)$. There is an accepting run
  $q_0 \xra{w_1} q_1 \xra{w_2} \dots \xra{w_n} q_n$ of $M$ on $w$. By
  Definition~\ref{def:suffix-tracking-set}, we have $q_0, q_n \in
  S$. Let $1 \le i \le n$ be the smallest index greater than $0$, such
  that $q_i \in S$. Consider the run segment
  $q_0 \xra{w_1} q_1 \xra{w_2} \dots \xra{w_i} q_i$. By
  Lemma~\ref{lem:sfx-compatibility-dfa-paths-to-words}, and by the
  definition of induced DSA~\ref{def:induced-dsa}, no transition of
  $\Aa_S$ out of $q_0$ is triggered until $w_1 \dots w_{i-1}$, and
  then on reading $w_i$, the transition $q_0 \xra{\a} q_i$ is
  triggered, where $\a \in \spath{p}{q}{S}$, and $\a$ is also the
  longest suffix of $w_1 \dots w_i$ among $\spaths{p}{S}$. In
  particular, it is the longest suffix among outgoing labels from
  $q_0$ in $\Aa_S$. This shows there is a move
  $q_0 \xra[\a]{w_1 \dots w_i} q_i$ in $\Aa_S$. Repeat this argument
  on rest of the run
  $q_i \xra{w_{i+1}} q_{i+1} \xra{w_{i+1}} \dots \xra{w_n} q_n$ to
  extend the run of $\Aa_S$ on the rest of the word. This shows
  $w \in \Ll(\Aa_S)$.
  
  Pick $w \in \Ll(\Aa_S)$. There is an accepting run $\rho$ of $\Aa_S$
  starting at the initial state $q_0$. Consider the first move
  $q_0 \xra[\a]{w_1 \dots w_i} q_i$ of $\Aa_S$ on the word. By the
  semantics of a move (Definition~\ref{def:DSA-moves}) and
  Lemma~\ref{lem:sfx-compatibility-words-to-paths}, we obtain a run
  $q_0 \xra{w_1} q_1 \xra{w_2} \dots q_{i-1} \xra{w_i} q_i$ of $M$
  where the intermediate states $q_1, \dots, q_{i-1}$ lie in
  $Q \setminus S$. We apply this argument for each move $\rho$ in the
  accepting run of $\Aa_S$ to get an accepting run of $M$.
\end{proof}

We extend the idea of well-formed set of states to a corresponding notion on DSAs. We will employ this notion when we remove some unnecessary transitions from the induced DSAs that are obtained.

\begin{definition}[Well-formed DSA]\label{def:well-formed-dsa}
  A DSA $\Aa$ is \emph{well-formed} if for every state $q$, no
  outgoing label $\a \in \out(q)$ is a suffix of some proper prefix
  $\beta'$ of another outgoing label $\beta \in \out(q)$.
\end{definition}

\subsection{Removing some redundant transitions.}
\label{sec:useless}

Let us now get back to Figure~\ref{fig:dsa-to-dfa-eg} to see if we can
derive the DSA on the left from the DFA on the right (assuming $q$ is
the initial state). As seen earlier, the set $S = \{q, q'\}$ is
suffix tracking. It is
also well formed since $baaa$ is not a suffix of any prefix of $abaa$
and vice-versa. The DSA $\Aa_S$ induced using $q$ and $q'$ will have
the set of words in $\spath{q}{q'}{S}$ as transitions between $q$ and
$q'$. Both $abaa$ and $baaa$ belong to $\spath{q}{q'}{S}$. However,
there are some additional simple words: for instance, $abbaaa$. Notice
that $baaa$ is a suffix of $abbaaa$, and therefore even if we remove the
transition on $abbaaa$, there will be a move to $q'$ via
$q \xra{baaa} q'$. This tempts us to use only the suffix-minimal words
in the transitions of the induced DSA. This is not always safe, as we
explain below. We show how to carefully remove
``bigger-suffix-transitions''.

Consider the DSA on the left in
Figure~\ref{fig:bigger-suffix-transitions}. If $caba$ is removed, the
moves which were using $caba$ can now be replaced by $ba$ and we still
have the same pair of source and target states. Consider the picture
on the right of the same figure. There is an outgoing edge to a
different state on $aba$. Suppose we remove $caba$. The word $caba$
would then be matched by the longer suffix $aba$ and move to a
different state.
Another kind of redundant transitions are some of the self-loops on
DSAs. In Figure~\ref{fig:induced-eqv}, the self-loop on $b$ at $q_0$
can be removed, without changing the language. This can be generalized
to loops over longer words, under some conditions.

\begin{figure}[t]
  \centering
  \begin{tikzpicture}[state/.style={circle, draw, thick, inner sep =
      2pt}]
    \begin{scope}[every node/.style={state}]
      \node (0) at (0,0) {\tiny $q_0$}; \node (1) at (2,0) {\tiny
        $q_1$};
    \end{scope}
    \begin{scope}[->, >=stealth, thick, auto]
      \draw (0) to node {\tiny $caba, ba$} (1);
    \end{scope}

    \begin{scope}[xshift=4cm]
      \begin{scope}[every node/.style={state}]
        \node (0) at (0,0) {\tiny $q_0$}; \node (1) at (2,0) {\tiny
          $q_1$}; \node (2) at (2,-1) {\tiny $q_2$};
      \end{scope}
      \begin{scope}[->, >=stealth, thick, auto]
        \draw (0) to node {\tiny $caba, ba$} (1); \draw (0) to node
        [below] {\tiny $aba$} (2);
      \end{scope}
    \end{scope}
  \end{tikzpicture}
  \caption{Illustrating bigger-suffix transitions and when they are
    redundant}
  \label{fig:bigger-suffix-transitions}
\end{figure}

\begin{definition}\label{def:redundant-transitions}
  Let $\Aa$ be a DSA, $q, q'$ be states of $\Aa$ and
  $t:= q \xra{\a} q'$ be a transition.

  We call $t$ a \emph{bigger-suffix-transition} if there exists
  another transition $(q, \beta, q')$ with $\beta$ a suffix of $\alpha$. 

  If there is a transition $t' := q \xra{\gamma} q'' ~ (q'' \neq q')$,
  such that $\beta$ is a suffix of $\gamma$, and $\gamma$ is a suffix
  of $\alpha$, we call $t$ \emph{useful}. A
  bigger-suffix-transition is called \emph{redundant} if it is not
  useful.

  We will say that $t$ is a \emph{redundant self-loop} if $q = q'$,
  $q$ is not an accepting state, and no suffix of $\a$ is a prefix of
  some outgoing label in $\out(q)$.
\end{definition}

In Figure~\ref{fig:bigger-suffix-transitions}, for the automaton on
the left, the transition on $caba$ is redundant. Whereas for the DSA on
the right, $caba$ is a bigger-suffix-transition, but it is
useful. The
self-loop on $q_0$ in Figure~\ref{fig:induced-eqv} is
redundant, but the loop on $q_0$ in Figure~\ref{fig:example-ab-bb}
is useful. Lemmas~\ref{lem:bigger-suffix-transitions-redundant}
and~\ref{lem:removable-self-loop-redundant} 
prove correctness of
removing redundant transitions.

\begin{lemma}
  \label{lem:bigger-suffix-transitions-redundant}
  Let $\Aa$ be a DSA, and let $t:= q \xra{\a} q'$ be a redundant
  bigger-suffix-transition. Let $\Aa'$ be the DSA obtained by removing
  $t$ from $\Aa$. Then, $L(\Aa) = L(\Aa')$.
\end{lemma}
\begin{proof} 
  
 \emph{To show $L(\Aa) \incl L(\Aa')$.} Let $w \in L(\Aa)$ and let
  $q_0 \xra[\a_0]{w_0} q_1 \xra[\a_1]{w_1} \cdots
  \xra[\a_{m-1}]{w_{m-1}} q_m$ be an accepting run. If no
  $(q_i, \a_i, q_{i+1})$ equals $(q, \a, q')$, then the same run is
  present in $S'$, and hence $w \in L(S')$. Suppose
  $(q_j, \a_j, q_{j+1}) = (q, \a, q')$ for some $j$. So, the word
  $w_{j}$ ends with $\a$. As $(q, \a, q')$ is a
  bigger-suffix-transition, there is another $(q, \beta, q')$ such that
  $\beta \sfx \a$. Therefore, the word $w_j$ also ends with $\beta$. Since
  there was no transition matching a proper prefix of $w_j$, the same
  will be true at $\Aa'$ as well, since it has fewer transitions. It
  remains to show that $q_j \xra[\beta]{w_j} q_{j+1}$ is a move. The only
  way this cannot happen is if there is a $q \xra{\gamma} q''$ with
  $\beta \sfx \gamma \sfx \a$. But this is not possible since
  $q \xra{\a} q'$ is a redundant bigger-suffix transition. Therefore,
  every move using $(q, \a, q')$ in $\Aa$ will now be replaced by
  $(q, \beta, q')$ in $\Aa'$. Hence we get an accepting run in $\Aa'$,
  implying $w \in L(\Aa')$.

  \emph{To show $L(\Aa') \incl L(\Aa)$}. Consider $w \in L(\Aa')$ and
  an accepting run
  $q_0 \xra[\a_0]{w_0} q_1 \xra[\a_1]{w_1} \cdots
  \xra[\a_{m-1}]{w_{m-1}} w_m$ in $\Aa'$. Notice that if
  $q \xra[\beta]{w_j} q'$ is a move in $\Aa'$, the same is a move in
  $\Aa$ when $\a \not\sfx w_j$. When $\a \sfx w_j$, then the
  bigger-suffix-transition $q \xra{\a} q'$ will match and the move $q
  \xra[\beta]{w_j} q'$ 
  gets replaced by $q \xra[\a]{w_j} q'$. Hence we will get the same
  run, except that some of the moves using $q \xra{\beta} q'$ may get
  replaced with $q \xra{\a} q'$.
\end{proof}

For the correctness of removing redundant self-loops, we assume that the
DFA that we obtain is well-formed
(Definition~\ref{def:well-formed-dsa}) and has no redundant
bigger-suffix-transitions. The induced DSA that we obtain from
suffix-tracking sets is indeed well-formed. Starting from this induced
DSA, we can first remove all redundant bigger-suffix-transitions, and
then remove the redundant self-loops.

\begin{lemma}
  \label{lem:removable-self-loop-redundant}
  Let $\Aa$ be a well-formed DSA that has no removable
  bigger-suffix-transitions. Let $t := (q, \a, q)$ be a removable
  self-loop. Then the DSA $\Aa'$ obtained by removing $t$ from $\Aa$
  satisfies $\Ll(\Aa) = \Ll(\Aa')$.
\end{lemma}
\begin{proof}
 \emph{To show $\Ll(\Aa) \incl \Ll(\Aa')$}. Let $w \in \Ll(\Aa)$ and
  let $\rho:= q_0 \xra{w_0} q_1 \xra{w_1} \cdots \xra{w_{m-1}} q_m$ be
  an accepting run. Suppose $t$ matches the segment
  $q_j \xra{w_j} q_{j+1}$. Hence $q_j = q_{j+1} = q$.  Observe that as
  $q$ is not accepting, we have $j+1 \neq m$. Therefore there is a
  segment $q_{j+1} \xra{w_{j+1}} q_{j+2}$ in the run. We claim that if
  $t$ is removed, then no transition out of $q$ can match any prefix
  of $w_j w_{j+1}$.

  First we see that no prefix of $w_j$ can be
  matched, including $w_j$ itself: if at all there is a match, it
  should be at $w_j$, and a $\beta$ that is smaller than $\a$. By
  assumption, $\a$ is not a removable
  bigger-suffix-transition. Therefore, there is a transition $q
  \xra{\gamma} q'$, with $\beta \sfx \gamma \sfx \alpha$. This
  contradicts the assumption that $\alpha$ is a removable
  self-loop. Therefore there is no match upto $w_j$.

  Suppose some $(q, \beta, q')$
  matches a prefix $w_j u$ such that $\beta = v u$, that is, $\beta$
  overlaps both $w_j$ and $w_{j+1}$. If $\a \sfx v$, then it violates
  well-formedness of $S$ since it would be a suffix of a proper prefix
  ($v$) of $\beta$. This shows $v \sfx \a$ (since both are suffixes of
  $w_j$) and $v \prfx \beta$, contradicting the assumption that $t$ is
  removable. Therefore, $\beta$ does not overlap $w_j$. But then, if $\beta$
  is a suffix of a proper prefix of $w_{j+1}$, we would not have the
  segment $q_{j+1} \xra{w_{j+1}} q_{j+2}$ in the run
  $\rho$. Therefore, the only possibility is that we have a segment
  $q_j \xra{w_j w_{j+1}} q_{j+2}$. We have fewer occurrences of the
  removable loop $(q, \a, q)$ in the modified run. Repeating this
  argument for every match of $(q, \a, q)$ gives an accepting run of
  $\Aa'$. Hence $w \in L(\Aa')$.

  \emph{To show $L(\Aa') \incl L(\Aa)$.} Let $w \in L(\Aa')$ and
  $\rho' := q_0 \xra{w_0} q_1 \xra{w_1} \cdots \xra{w_{m-1}} q_m$ be
  an accepting run in $\Aa'$. Suppose $q_j \xra{w_j} q_{j+1}$ is
  matched by $(q, \beta, q')$. Let $w_j = v u $ with $\a \sfx v$. Then
  the removable-self-loop $(q, \a, q)$ will match the prefix
  $v$. Suppose $\beta$ overlaps with both $v$ and $u$, that is $\beta
  = \beta' u$. We cannot have $\alpha \sfx \beta'$ due to
  well-formedness of $\Aa$. We cannot have $\beta' \sfx \alpha$ since
  this would mean there is a suffix of $\alpha$ which is a prefix of $\beta$,
  violating the removable-self-loop condition.  Therefore,
  $\beta$ is entirely inside $u$, that is, $\beta \sfx
  u$. Hence in $\Aa$ the run will first start with $q \xra{v}
  q$. Applying the same
  argument, prefixes of the remaining word where 
  $t$ matches will be matched until there is a part of the word where
  $(q, \beta, q')$ matches. This applies to every segment, thereby giving
  us a run in $\Aa$.
\end{proof}

We now get to the core definition of this section, which tells how to
derive a DSA from a DFA, using the methods developed so far.

\begin{definition}[DFA-to-DSA derivation] \label{def:derived-dsa}
  A DSA is said to be \emph{derived from} DFA $M$ using
  $S \subseteq Q$, if it is identical to an induced DSA of $M$
  (using $S$) with all redundant transitions removed.
\end{definition}

By Theorem~\ref{thm:suffix-tracking-is-correct} and Lemma
\ref{lem:bigger-suffix-transitions-redundant}, we get the following
result.

\begin{theorem}
  Every DSA that is derived from a complete DFA using a suffix-tracking set, is language equivalent
  to it.
\end{theorem}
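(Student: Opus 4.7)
The plan is straightforward composition of the results already developed in Section~\ref{sec:suffix-tracking-sets}. By Definition~\ref{def:derived-dsa}, a derived DSA $\Aa'$ comes from a complete DFA $M$ via two stages: first build the induced DSA $\Aa_S$ for some suffix-tracking $S$ (Definition~\ref{def:induced-dsa}), then strip off all useless transitions. So I would prove $\Ll(\Aa') = \Ll(M)$ by showing each of these two stages preserves the language.

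For the first stage, Theorem~\ref{thm:suffix-tracking-is-correct} directly gives $\Ll(\Aa_S) = \Ll(M)$, since $S$ is by assumption suffix-tracking (contains initial and accepting states, every transition is suffix-compatible w.r.t.\ $S$, and $S$ is well-formed). For the second stage, I would argue by induction on the number of useless transitions removed, splitting into two phases matching the remark in the text: first remove all useless bigger-suffix-transitions one by one (each step preserving the language by Lemma~\ref{lem:bigger-suffix-transitions-useless}), and then remove useless self-loops one by one (each step preserving the language by Lemma~\ref{lem:removable-self-loop-useless}). Each inductive step yields an automaton with strictly fewer transitions, so the process terminates in a DSA that matches the description of ``derived from $M$ using $S$''.

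The one subtlety, and the step I expect to be the main obstacle, is discharging the \emph{hypotheses} of Lemma~\ref{lem:removable-self-loop-useless}: it applies only to a well-formed DSA that has no useless bigger-suffix-transitions. I would therefore check two invariants. First, well-formedness of $\Aa_S$ as a DSA follows from well-formedness of the set $S$ together with Lemma~\ref{lem:sfx-compatibility-dfa-paths-to-words}, and it is preserved by the removal of any transition (removing edges cannot introduce a new conflict between outgoing labels). Second, after the first phase exhausts all useless bigger-suffix-transitions, no new such transition is created by the subsequent self-loop removals: removing a self-loop $(q,\alpha,q)$ does not add any suffix relation between existing transitions out of $q$, so every remaining bigger-suffix-transition in the intermediate automata either was already useful or was removed earlier.

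Combining these observations, the derived DSA $\Aa'$ obtained at the end of the process satisfies $\Ll(\Aa') = \Ll(\Aa_S) = \Ll(M)$, which is the desired statement. The proof itself can be written as a short paragraph citing Theorem~\ref{thm:suffix-tracking-is-correct}, Lemma~\ref{lem:bigger-suffix-transitions-useless}, and Lemma~\ref{lem:removable-self-loop-useless}, with a brief remark on the ordering of removals and the preservation of well-formedness.
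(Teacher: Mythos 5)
Your proposal is correct and follows essentially the same route as the paper, whose proof is exactly the composition of Theorem~\ref{thm:suffix-tracking-is-correct} with Lemmas~\ref{lem:bigger-suffix-transitions-useless} and~\ref{lem:removable-self-loop-useless}, with the surrounding text already prescribing your two-phase ordering (first all useless bigger-suffix-transitions, then useless self-loops) and noting that induced DSAs are well-formed. Your explicit induction and the check that removals preserve well-formedness merely spell out details the paper leaves implicit.
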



\section{Minimality, some observations and some challenges}
\label{sec:minim-some-observ}

Theorem~\ref{thm:comparing-dsa-with-dfa-dga} shows that we cannot
expect DSAs to be smaller than (trim) DFAs or DGAs in
general. However, Lemma~\ref{lem:dsa-small} and
Figure~\ref{fig:if-else} show that there are cases where DSAs are
smaller and more readable. This motivates us to ask the question of
how we can find a minimal DSA, that is, a DSA of the smallest (total)
size. The first observation is that minimal DSAs need not be unique
--- see Figure~\ref{fig:no-canonical}.

\begin{lemma}
   \label{lem:no-canonical}
   The minimal DSA need not be unique.
 \end{lemma}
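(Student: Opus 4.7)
The plan is by explicit witness. I would exhibit a regular language $L$ together with two DSAs $\Aa_1$ and $\Aa_2$ --- presumably the ones displayed in Figure~\ref{fig:no-canonical} --- that both accept $L$, share the same total size $k$, and are structurally non-isomorphic. With this in hand, the lemma reduces to three verifications: language equivalence of both DSAs with $L$, equality of the total sizes, and minimality of that common total size $k$ over all DSAs for $L$.

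The first two checks are routine. Language equivalence can be confirmed either by simulating both automata on a few representative inputs, or more systematically by constructing the tracking DFA of each (Definition~\ref{def:tracking-dfa}) and comparing with the canonical DFA of $L$ via Lemma~\ref{lem:tracking-dfa-language-equivalent}. Equality of the sizes $|\Aa_1|$ and $|\Aa_2|$ is a direct count of states, transitions, and total label length. The non-isomorphism of $\Aa_1$ and $\Aa_2$ is then immediate from their presentations, since they must differ either in the outgoing label sets at a corresponding state or in the underlying transition graph.

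The main obstacle is establishing minimality, i.e.\ that no DSA for $L$ has total size strictly less than $k$. My plan here is a two-step argument. First, derive a lower bound on the number of states: any DSA for $L$ unfolds via its tracking DFA (Definition~\ref{def:tracking-dfa}) into a DFA for $L$ of bounded blow-up, so combining with the inequality of Theorem~\ref{thm:comparing-dsa-with-dfa-dga}(1) and the Myhill--Nerode class count for $L$ pins down a minimum state count. Second, for DSAs meeting this state bound, I would enumerate, up to renaming, the possible multisets of outgoing labels consistent with the move semantics (Definition~\ref{def:DSA-moves}) that could produce exactly $L$, and verify that each such configuration has total size at least $k$. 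The enumeration is the delicate step: it is essentially a case analysis over which suffixes can drive each transition, constrained by the longest-match rule and by the requirement that no proper prefix of a triggering word carries a match. Small examples should keep the enumeration tractable, but it is where most of the technical work sits.
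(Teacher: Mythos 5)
Your overall witness strategy matches the paper's --- both proofs exhibit the two equal-size DSAs of Figure~\ref{fig:no-canonical} and reduce the lemma to showing their common size is optimal --- but your minimality argument takes a genuinely different route. The paper anchors its case analysis on the shortest word of $L = b^*a^*abb^*a$, namely $aba$: in any candidate minimal DSA, the first move of the accepting run on $aba$ must fire after the prefix $a$, after $ab$, or after the whole word $aba$; the last case is eliminated because $abba \in L$ then forces a second transition whose cost pushes the total size strictly past that of the witnesses, and the first two cases reconstruct exactly $\Aa_2$ and $\Aa_1$. So one short pivot on a single word yields both minimality and the fact that these are essentially the only minimal DSAs. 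Your plan replaces this with a generic lower bound plus exhaustive enumeration, and two caveats apply. First, your step~(1) is weaker than you suggest: Theorem~\ref{thm:comparing-dsa-with-dfa-dga}(1) bounds the total size $n_S$, not the state count, and over the two-letter alphabet the factor $2(1+2|\Sigma|) = 10$ makes the resulting bound land far below the target size; the state bound you actually need is the elementary observation the paper itself uses, that $\e \notin L$ forces a non-accepting initial state distinct from some accepting state. Second, your enumeration in step~(2) is finite only because the size budget itself caps everything --- a DSA of total size below that of the witnesses has a bounded number of states, edges, and total label length simultaneously --- and you should say so explicitly, since ``multisets of outgoing labels consistent with the move semantics'' is otherwise an infinite search space. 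With those repairs your route is sound, just heavier than the paper's; what the paper's shortest-word pivot buys is that your delicate enumeration collapses to three cases, each dispatched in a line.
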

 \begin{proof}
  Figure~\ref{fig:no-canonical} illustrates two DSAs, both of size $8$ ($3$ states,
  $2$ edges, total label length $3$), accepting
  the same language. From DSA $\Aa_1$, we can see that the
  language accepted is $b^*a^*abb^*a$. From DSA $\Aa_2$, we can derive
  the expression $b^*aa^*b^*ba$. It is easy to see that both the
  expressions are equivalent.

  It remains to show that there is no DSA of smaller
  size for this language. Let $L = b^*a^*abb^*a = b^*aa^*b^*ba$. Let
  $\Aa$ be a minimal automaton for $L$. Any
  DSA for $L$ has an initial state $q_0$ which is non-accepting (since
  $\e \notin L$, and an accepting state $q_1$ $\neq q_0$). So, in
  particular $\Aa$ has at least two states $q_0, q_1$. The
  smallest word in $L$ is $aba$. Suppose the accepting run of $\Aa$ on
  $aba$ is due to a transition $t:= q_0 \xra{aba} q_1$. Consider the word
  $abba \in L$. Transition $t$ does not match $abba$. Therefore, the
  first transition in the accepting run of $\Aa$ on $abba$ is some
  transition $t' \neq t$. This transition $t'$ will have a label of
  size at least $1$. Hence, $\Aa$ has at least two states, and at
  least two transitions: $t$ which contributes to size $4$ (includes
  $1$ edge and label of length $3$) and a transition $t'$ of size at
  least $2$. Therefore $|\Aa| \ge 2 + 4 + 2 = 8$, which is at least the
  size  of $|\Aa_1|$ and $|\Aa_2|$, and hence any DSA that accepts $aba$ through a single transition has total-size at least $8$.
  Now, suppose the accepting run of $\Aa$ on $aba$ has a
  first transition on a prefix of $aba$, either $a$ or $ab$,
  corresponding to an intermediate transition $q_0 \xra{a} q'$ or $q_0
  \xra{ab} q'$. In the former case, there can be a transition $q'
  \xra{ba} q_1$ to accept $aba$ (if the transition is on just $b$, it only makes the automaton larger since $a$ remains to be read), and in the latter case, a transition
  $q' \xra{a} q_1$. 
  This discussion shows that the two automata $\Aa_2$ and $\Aa_1$,
  are indeed minimal.
\end{proof}

The next  observation
is that a minimal DSAs will be well-formed: if there are two transitions
$q \xra{\a} q_1$ and $q \xra{\beta_1 \a \beta_2} q_2$, then we can remove
the second transition since it will never get fired.

\begin{figure}[t]
  \centering
  \begin{tikzpicture}[state/.style={circle, draw, thick, inner sep =
      2pt}]
    \begin{scope}[every node/.style={state}]
      \node (0) at (0,0) {}; \node (1) at (1, 0) {}; \node [double]
      (2) at (2, 0) {};
    \end{scope}
    \begin{scope}[->,>=stealth, thick, auto]
      \draw (-0.75, 0) to (0); \draw (0) to node {\scriptsize $ab$}
      (1); \draw (1) to node {\scriptsize $a$} (2);
    \end{scope}
    \begin{scope}[xshift=4cm]
      \begin{scope}[every node/.style={state}]
        \node (0) at (0,0) {}; \node (1) at (1, 0) {}; \node [double]
        (2) at (2, 0) {};
      \end{scope}
      \begin{scope}[->,>=stealth, thick, auto]
        \draw (-0.75, 0) to (0); \draw (0) to node {\scriptsize $a$}
        (1); \draw (1) to node {\scriptsize $ba$} (2);
      \end{scope}
    \end{scope}
  \end{tikzpicture}
  \caption{Minimal DSA is not unique}
  \label{fig:no-canonical}
\end{figure}
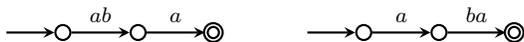

Due to the ``well-formedness'' property in
suffix-tracking sets, the DSAs induced by suffix-tracking sets are
naturally well-formed. Furthermore, since removing redundant transitions preserves
this property, the DSAs that are derived using our DFA-to-DSA
procedure (Definition~\ref{def:derived-dsa}) are well-formed. The next
proposition shows that every DSA that is well-formed and has no
redundant transitions (and in particular, the minimal
DSAs) can be derived from the corresponding tracking DFAs.

\begin{restatable}{proposition}{dsaDerivableFromTracking}
  \label{prop:dsa-derivable-from-tracking-dfa}
  Every well-formed DSA with no redundant transitions can
  be derived from its tracking DFA.
\end{restatable}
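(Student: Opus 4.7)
The plan is to exhibit a suffix-tracking set $S$ of the tracking DFA $M_\Aa$ whose induced DSA, after stripping useless transitions, coincides with $\Aa$. The natural candidate is $S = \{(q,\e) \mid q \in Q^\Aa\} \cup \{q_{copy}\}$, identifying each $(q,\e)$ with the state $q$ of $\Aa$ and placing $q_{copy}$ in $S$ so that the only intermediate states along simple paths modulo $S$ are of the form $(q,\gamma)$ with $\gamma \in \outp(q)\setminus\out(q)$ and $\gamma \neq \e$.

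First I would verify that $S$ is suffix-tracking. The fact that $S$ contains the initial state and all accepting states is immediate from the definition of $M_\Aa$. For suffix-compatibility, observe that along any simple path modulo $S$ the intermediate states all share a single first coordinate $q$, so any simple path from $p \in S$ into a state $(q,\beta) \notin S$ must actually begin at $p = (q,\e)$. The construction of $M_\Aa$ tracks, at each $(q,\gamma)$ reached after reading $\sigma$, exactly the longest suffix of $\sigma$ in $\outp(q)$. Consequently, if $\sigma \in \spath{(q,\e)}{(q,\beta)}{S}$ and $(q,\beta)\xra{a} s$ is any transition, then $\sigma a$ satisfies the required longest-suffix property and belongs to the appropriate simple-path set to $s$, which is exactly suffix-compatibility.

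Well-formedness of $S$ should reduce to well-formedness of $\Aa$. Suppose $\a \in \spath{(q,\e)}{(r,\e)}{S}$ and $\beta \in \spath{(q,\e)}{(q,\beta')}{S}$ with $\a \sfx \beta$ and $\beta' \in \outp(q)\setminus\out(q)$. The final edge on the $\a$-path uses a label $\a'' \in \out(q)$ with $(q,\a'',r)\in\Delta^\Aa$, so $\a'' \sfx \a \sfx \beta$. Since $\beta'$ is the longest element of $\outp(q)$ that is a suffix of $\beta$, $\a''$ must itself be a suffix of $\beta'$. But $\beta'$ is a proper prefix of some $\delta \in \out(q)$, making $\a''$ a suffix of a proper prefix of $\delta$ --- contradicting well-formedness of $\Aa$.

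Finally I would check that the DSA derived from $M_\Aa$ using $S$ is exactly $\Aa$. Each label $\a \in \out(q)$ with $(q,\a,q')\in\Delta^\Aa$ already lies in $\spath{(q,\e)}{(q',\e)}{S}$ (using well-formedness of $\Aa$ to rule out premature matches along proper prefixes of $\a$), so every $\Aa$-transition is present in the induced DSA. Any additional $\sigma$ in the same simple-path set has some $\a \in \out(q)$ as a proper suffix, making it a bigger-suffix-transition; ruling out the ``useful'' case of Definition~\ref{def:useless-transitions} reduces to observing that a longer intermediate $\gamma \in \spath{(q,\e)}{(q'',\e)}{S}$ with $\a \sfx \gamma \sfx \sigma$ would force a second $\Aa$-label to appear as a strictly longer suffix of $\sigma$, contradicting that $\sigma$'s tracking DFA run ends at $(q',\e)$. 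An analogous argument keeps the self-loops of $\Aa$ intact --- a surviving self-loop must correspond to an actual self-loop of $\Aa$, which is not useless by hypothesis. The main obstacle I anticipate is the bookkeeping around $q_{copy}$ together with a clean formulation of the bigger-suffix case analysis; everything else follows directly from the tracking DFA construction and the two hypotheses on $\Aa$.
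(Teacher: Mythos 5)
Your overall architecture matches the paper's proof almost exactly --- take the set of ``DSA states'' of the tracking DFA as the candidate set $S$, verify suffix-compatibility via the invariant that $(q,\gamma)$ tracks the longest $\outp(q)$-suffix, reduce well-formedness of $S$ to well-formedness of $\Aa$, and show the surplus simple words are useless bigger-suffix-transitions. But there is one genuine defect: you put $q_{copy}$ \emph{into} $S$, whereas the paper takes $S = \bigcup_{q \in Q^\Aa} \{(q,\e)\}$ only. This is not a cosmetic choice. By Definition~\ref{def:induced-dsa} the states of the induced DSA are exactly $S$, and the derivation procedure (Definition~\ref{def:derived-dsa}) only deletes useless \emph{transitions}, never states. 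With $q_{copy} \in S$, the derived automaton therefore has at least one state beyond those of $\Aa$, and it also acquires transitions into $q_{copy}$ that survive the cleanup: whenever some letter $a$ begins no label of $\out(q)$, the tracking DFA has $(q,\e) \xra{a} q_{copy}$, giving a length-one transition $(q,\e) \xra{a} q_{copy}$ in the induced DSA; a single-letter label has no non-empty proper suffix, so this is not a bigger-suffix-transition, and it is not a self-loop, so Definition~\ref{def:useless-transitions} never removes it. Concretely, for $\Aa_2$ of Figure~\ref{fig:example-aab} your derived automaton contains the extra state $q_{copy}$ with $q_0 \xra{b} q_{copy}$ (and further transitions out of $q_{copy}$). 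It may well be language-equivalent, but it is not \emph{identical} to $\Aa_2$, so the proposition --- that $\Aa$ itself is derived --- is not established. The identity matters downstream: the paper later uses this proposition in minimality arguments (Section~\ref{sec:strong}) and in the $\NP$-hardness reduction, where deriving a strictly larger equivalent machine would not suffice.

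The fix is to exclude $q_{copy}$ from $S$, which costs you exactly the bookkeeping you hoped to sidestep: once $q_{copy}$ is a suppressed state, simple words from $(q,\e)$ may detour through $q_{copy}$ and wander across several branches of the $\out(q)$-matching structure before hitting some $(q',\e)$, so your claim that every simple path into a $(q,\beta)$-component starts at $(q,\e)$ and stays on one branch no longer holds as stated. The paper handles this by first characterizing arbitrary $\a \in \spath{(q,\e)}{(q,\beta)}{S}$: any such $\a$ satisfies $\beta \sfx \a$ and $\beta$ is the longest $\outp(q)$-suffix of $\a$, from which both suffix-compatibility of transitions within the $q$-component and of transitions $(q,\beta) \xra{a} q_{copy}$ follow, and from which all through-$q_{copy}$ simple words into $(q',\e)$ are shown to be useless bigger-suffix-transitions. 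Your remaining steps (the well-formedness reduction and the uselessness of extra words, including the preservation of the self-loops of $\Aa$) are sound in spirit and align with the paper's, but they need to be re-proved against this larger class of simple words rather than the single-branch paths your choice of $S$ afforded you.
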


\begin{proof}
  We use notation as in Definition~\ref{def:notation}.
  Consider a DSA $\Aa$ that is well-formed and has no redundant
  bigger-suffix-transitions. Let $M_\Aa$ be its tracking DFA as in
  Definition~\ref{def:tracking-dfa}. Let $S = \bigcup_{q\in\Aa}
  {(q,\e)}$. Here is the schema of the proof:
  \begin{align*}
  M_\Aa \xra{~~S~~} \text{ induced DSA } \Aa' \xra{~~\text{remove redundant transitions~~}} \Aa
  \end{align*}
  \begin{enumerate}
  \item We first show that $S$ is suffix-tracking and well-formed.
  \item For each state $q$ of $\Aa$, and for each $\alpha \in \spaths{(q,\epsilon)}{S}$, either $\alpha \in \out(q)$ (a transition out of $q$ in $\Aa$) or $\alpha$ is a redundant bigger-suffix-transition that will be removed in the second step. 
  \end{enumerate}
  All together, this shows that $\Aa$ is derived from $M_\Aa$ using our procedure.

  \paragraph*{$S$ is suffix-tracking.} 
  Pick a state $(q, \beta)$ with $\beta \neq \e$. What is the set
  $\spath{(q, \e)}{(q, \beta)}{S}$? Clearly, $\beta \in \spath{(q,
    \e)}{(q, \beta)}{S}$. Let $\a \in \spath{(q,
    \e)}{(q, \beta)}{S}$, with $\a \neq \beta$. Then there is a path
  $(q, \e) \xra{\a_1} (q, \beta_1) \xra{\a_2} (q, \beta_2) \cdots (q, \beta_{k-1})
  \xra{\a_k} (q, \beta_k) = (q, \beta)$. Let $j \in \{1, \dots, k\}$
  be the largest index such that $\beta_j, \beta_{j+1}, \dots,
  \beta_k$ are all prefixes of $\beta$. Therefore, $\a$ starts by visiting a
  branch different from the prefixes of $\beta$, moves to potentially
  different ``branches'' (prefixes of other words in $\out(q)$, and on
  $\a_1 \dots 
  \a_j$ hits the $\beta$ branch, after which it stays in the same
  branch. By definition, $\beta_j$ is the longest prefix among
  $\outp(q)$ such that $\beta_j \sfx \a_1 \dots \a_j$. We can infer
  the following: (1) $\beta \sfx \a$, and (2) among $\outp(q)$,
  $\beta$ is the longest suffix of $\a$: otherwise, for $\a_1 \dots
  \a_j$, there would be a $\beta' \neq \beta_j$ which is a longer
  suffix, contradicting the definition of the transition $(q,
  \beta_{j-1}) \xra{\a_j} (q, \beta_j)$. 

  These observations are helpful to show that $S$ is
  suffix-tracking. Consider a transition $(q, \beta) \xra{a} (q,
  \beta')$ with $\beta' \neq \e$. We require to show that for every
  $\a \in \spath{(q, \e)}{(q, \beta)}{S}$, the longest suffix of $\a
  a$ among $\spaths{(q, \e)}{S}$ lies in $\spath{(q, \e)}{(q,
    \beta')}{S}$. Pick $\a \in \spath{(q, \e)}{(q, \beta)}{S}$, and
  consider the extension $\a a$. By (1) above, we have $\beta \sfx
  \a$. Since $\beta' \sfx \beta a$, we have $\beta' \sfx \a
  a$. Suppose there is an $\a'' \in \spath{(q, \e)}{(q, \beta'')}{S}$
  such that $|\a''| > |\beta'|$ and $\a'' \sfx \a a$. By definition of
  the transition $(q, \beta) \xra{a} (q,
  \beta')$, $\beta'$ is the longest suffix of $\beta a$, and hence
  $|\beta'| > |\beta''|$. From $|\a''| > |\beta'|$, $\a'' \sfx \a a$
  and $\beta' \sfx \beta a$, we have $\beta' \sfx \a''$. By point (2)
  of the above paragraph,  we have $|\beta''| > |\beta'|$. A
  contradiction. Therefore, there is no such $\a''$. Hence $\beta'$ is
  indeed the longest suffix of $\a a$ for all $\a \in \spath{(q,
    \e)}{(q, \beta)}{S}$.

  Now, consider a transition $(q, \beta) \xra{a} (q,\overline{\epsilon})$. This
  happens when no non-empty word in $\outp(q)$ is a suffix of $\beta
  a$. In that case, there is a transition $(q, \e) \xra{a} (q,\overline{\epsilon})$,
  and $a$ is indeed the longest suffix of $\beta a$ among $\spaths{(q,
    \e)}{S}$. This shows that every transition is suffix-compatible.

  \paragraph*{$S$ is well-formed.} $S$ is well-formed naturally since $\Aa$ is well-formed. Suppose it
  was not. That means there exist states
  $p \in S, q' \notin S, q$, and words $\alpha \in \spath{p}{q}{S}, \beta'
  \in \spath{p}{q'}{S}$, such that $\alpha \sfx \beta'$. Then there is a state $q'' \in S$, and a word $\beta \in \spath{p}{q''}{S}$ such that
  $\beta' \pprfx \beta$, and we have $\a \sfx \beta'$. Since
  $\a,\beta \in \out(p)$, this means $\Aa$ is not well-formed, which is a
  contradiction.

  \paragraph*{Extra strings in the induced DSA are redundant.} In the induced DSA, we will have $\spaths{(q, \e)}{S}$ to have more
  words than $\out(q)$. We need to show that all the other words are
  redundant bigger-suffix-transitions, and hence will be removed by the
  derivation procedure. Consider a transition of the form $(q, \beta)
  \xra{a} (q', \e)$. For every word $\a \in \spath{(q, \e)}{(q,
    \beta)}{S}$, with $|\a| > |\beta|$, we have $\beta$ as the longest
  suffix of $\a$, among $\outp(q)$. Therefore, there is no $\beta'$
  with $|\a| \ge |\beta'| > |\beta|$ with $\beta' \sfx \a$. This is
  sufficient to see that there is no $\beta' a \in \out(q)$ such that
  $\beta a \sfx \beta' a \sfx \a a$. Hence $\a a$ is a redundant
  bigger-suffix-transition in the induced DSA.

  By assumption, we have no redundant bigger-suffix-transitions in
  $\Aa$. Hence, in the derivation procedure, we do not remove any
  transition already present in $\Aa$. This shows that the finally derived DSA is exactly $\Aa$. 
\end{proof}

Proposition~\ref{prop:dsa-derivable-from-tracking-dfa} says that if we
somehow had access to the tracking DFA of a minimal DSA, we will be
able to derive it using our procedure. The challenge however is that
this tracking DFA may not necessarily be the canonical DFA for the
language. In fact, we now show that a smallest DSA that can be derived
from the canonical DFA need not be a minimal DSA.

\begin{figure}
  \begin{tikzpicture}[state/.style={circle, draw, thick, inner sep =
      2pt}]
    \begin{scope}[every node/.style={state}]
      \node (0) at (0,0) {\tiny $q_0$}; \node (1) at (2,1) {\tiny
        $q_1$}; \node (2) at (4,1) {\tiny $q_2$}; \node [double] (3)
      at (4, 0) {\tiny $q_4$}; \node (45) at (3,-1) {\tiny $p$};
    \end{scope}

    \begin{scope}[->,>=stealth, thick, auto]
      \draw (-0.75, 0) to (0); \draw (0) to [loop above] node [left]
      {\tiny $\Sigma \setminus \{a, b\}$} (0); \draw (0) to node
      [below] {\tiny $a$} (1); \draw (1) to [bend right=30] node
      [left] {\tiny $\Sigma \setminus \{a, b\}$} (0); \draw (1) to
      [loop above] node {\tiny $a$} (1); \draw (1) to node {\tiny $b$}
      (2); \draw (2) to [bend left=20] node [below, near start] {\tiny
        $\Sigma \setminus a$} (0); \draw (2) to node {\tiny $a$} (3);
      \draw (0) to node {\tiny $b$} (45);
      
      \draw (45) to node {\tiny $b$} (3);
 
      \draw (45) to 
      [bend left] node {\tiny $\Sigma \setminus \{a,b\}$} (0); \draw
      (45) to [loop below] node {\tiny $a$} (45); \draw (3) to [loop
      right] node {\tiny $\Sigma$} (3);
    \end{scope}

    \begin{scope}[xshift=8cm]
      \begin{scope}[every node/.style={state}]
        \node (0) at (0,0) {\tiny $q_0$}; \node (2) at (4,1) {\tiny
          $q_2$}; \node [double] (3) at (4, 0) {\tiny $q_4$}; \node
        (45) at (3,-1) {\tiny $p$};
      \end{scope}
      
      \begin{scope}[->, >=stealth, thick, auto]
        \draw (-0.75, 0) to (0);
        
        \draw (0) to [bend left] node {\tiny $ab$} (2);
        
        \draw (2) to [bend left=20] node [below, near start] {\tiny
          $\Sigma \setminus a$} (0); \draw (2) to node {\tiny $a$}
        (3); \draw (0) to node {\tiny $b$} (45);
        
        \draw (45) to node {\tiny $b$} (3);
 
        \draw (45) to 
        [bend left] node {\tiny $\Sigma \setminus \{a, b\}$}
        (0); 
        \draw (3) to [loop right] node {\tiny $\Sigma$} (3);
      \end{scope}
    \end{scope}
  \end{tikzpicture}
  \caption{DFA $M^*$ on the left and a derived DSA $\Aa^*_S$ with
    $S = \{q_0, q_2, q_4, p\}$ on the right.}
  \label{fig:dfa-m-star-dsa}
\end{figure}

\begin{figure}
  \centering
  \begin{tikzpicture}[state/.style={circle, draw, thick, inner sep =
      2pt}]
    \begin{scope}[every node/.style={state}]
      \node (0) at (0,0) {\tiny $q_0$}; \node (1) at (2,1) {\tiny
        $q_1$}; \node (2) at (4,1) {\tiny $q_2$}; \node [double] (3)
      at (4, 0) {\tiny $q_4$}; \node [fill=gray!30] (4) at (3,-1)
      {\tiny $p'$}; \node (5) at (3,-2) {\tiny $p$};
    \end{scope}
    \begin{scope}[->,>=stealth, thick, auto]
      \draw (-0.75, 0) to (0); \draw (0) to [loop above] node [left]
      {\tiny $\Sigma \setminus \{a, b\}$} (0); \draw (0) to node
      [below] {\tiny $a$} (1); \draw (1) to [bend right=30] node
      [left] {\tiny $\Sigma \setminus \{a, b\}$} (0); \draw (1) to
      [loop above] node {\tiny $a$} (1); \draw (1) to node {\tiny $b$}
      (2); \draw (2) to [bend left=20] node [below, near start] {\tiny
        $\Sigma \setminus a$} (0); \draw (2) to node {\tiny $a$} (3);
      \draw (0) to node {\tiny $b$} (4); \draw (4) to [bend left] node
      {\tiny $a$} (5); \draw (5) to [bend left] node {\tiny $a$} (4);
      \draw (4) to node {\tiny $b$} (3); \draw (5) to [bend
      right=60]node [right] {\tiny $b$} (3); \draw (4)
      to 
      [bend left] node {\tiny $\Sigma \setminus \{a,b\}$} (0); \draw
      (5) to [bend left=60] node {\tiny $\Sigma \setminus \{a,b\}$}
      (0); \draw (3) to [loop right] node {\tiny $\Sigma$} (3);
    \end{scope}

    \begin{scope}[xshift=8cm]
      \begin{scope}[every node/.style={state}]
        \node (0) at (0,0) {\tiny $q_0$}; \node [double] (3) at (4, 0)
        {\tiny $q_4$}; \node (5) at (3,-2) {\tiny $p$};
      \end{scope}
      \begin{scope}[->, >=stealth, thick, auto]
        \draw (-0.75, 0) to (0); \draw (0) to [bend left] node {\tiny
          $aba$} (3); \draw (0) to [bend right] node {\tiny $bb$} (3);
        \draw (0) to node [right] {\tiny $ba$} (5); \draw (5) to node
        [right] {\tiny $
          b$} (3); \draw (5) to [bend left] node {\tiny
          $\Sigma\setminus \{ a, b\}$} (0); \draw (0) to [loop above]
        node {\tiny $abb$} (0); \draw (3) to [loop right] node {\tiny
          $\Sigma$} (3);
      \end{scope}
    \end{scope}
  \end{tikzpicture}
  \caption{DFA $M^{**}$ on the left and a derived DSA $\Aa^{**}_S$
    with $S = \{q_0, q_2, q_4, p\}$ on the right.}
  \label{fig:m-star-star}
\end{figure}

Figure~\ref{fig:dfa-m-star-dsa} shows a DFA $M^*$. Observe that $M^*$
is minimal: every pair of states has a distinguishing suffix. Let us
now look at DSAs that can be derived from $M^*$. Firstly, any
suffix-tracking set on $M^*$ would contain $q_0, q_4$ (since they are
initial and accepting states). If $p$ is not picked, the transition
$p \xra{a} p$ is not suffix-compatible. Therefore, $p$ should belong
to the selected set. If $p$ is picked, and $q_2$ not picked, then the
set is not well-formed (see Definition~\ref{def:well-formed-set}): the
simple word $b$ from $q_0$ to $p$ is a suffix of the simple word $ab$
to $q_2$. Therefore, any suffix-tracking set should contain the $4$
states $q_0, p, q_2, q_4$. This set $S = \{q_0, p, q_2, q_4\}$ is
indeed suffix-tracking, and the DSA derived using $S$ is shown in the
right of Figure~\ref{fig:dfa-m-star-dsa}. The only other
suffix-tracking set is the set $S'$ of all states. The DSA derived
using $S'$ will have state $q_1$ in addition, and the transitions
$\Sigma \setminus \{a, b\}$. If $\Sigma$ is sufficiently large, this
DSA would have total size bigger than $\Aa^*_S$ (Note: $\Sigma$ can be any alphabet containing $a$ and $b$, so we can increase its size arbitrarily to blow up the DSA). We deduce $\Aa^*_S$
to be the smallest DSA that can be derived from $M^*$.

Figure~\ref{fig:m-star-star} shows DFA $M^{**}$ which is obtained from
$M^*$ by duplicating state $p$ to create a new state $p'$, which is
equivalent to $p$. So $M^{**}$ is language equivalent to $M^*$, but it
is not minimal. Here, if we choose $p$ in a suffix-tracking set, the
simple word to $p$ is $ba$, which is not a suffix of $ab$ (the simple
word to $q_2$). Hence, we are not required to add $q_2$ into the
set. Notice that $S = \{q_0, p, q_4\}$ is indeed a suffix-tracking set
in $M^{**}$. The derived DSA $\Aa^{**}_S$ is shown in the right of the
figure. The ``heavy'' transition on $\Sigma \setminus a$
disappears. There are some extra transitions, like $q_0 \xra{bb} q_4$,
but if $\Sigma$ is large enough, the size of $\Aa^{**}_S$ will be
smaller than $\Aa^*_S$. This shows that starting from a big DFA helps
deriving a smaller DSA, and in particular, the canonical DFA of a
regular language may not derive a minimal DSA for the language.

\section{Strongly Deterministic Suffix-reading Automata (sDSA)}
\label{sec:strong}

As we saw in Section~\ref{sec:minim-some-observ}, the minimal DFA of a
regular language may not derive a minimal DSA for the language (Figures~\ref{fig:dfa-m-star-dsa} and Figure~\ref{fig:m-star-star}). While this is true for the general case, under a restricted definition of DSA we can show the minimal DFA to derive a minimal (restricted) DSA. The plan for this section is as follows:

\begin{description}
\item[Section~\ref{sec:sDSA-syntax}] We first define a class of DSAs with a restricted syntax and call them \emph{strongly deterministic} suffix-reading automata or \emph{strong DSAs} (Definition~\ref{def:str-suff-reading-aut}) and give examples of DSAs that fall under this stronger class.
\item[Section~\ref{sec:sDSA-minimality}]  In the second part, we prove that every minimal strong DSA can be derived from the canonical DFA using the derivation procedure of Section~\ref{sec:suffix-tracking-sets}.
\end{description}

This section is inspired by a question that was left open in the earlier version of this work~\cite{DBLP:journals/corr/abs-2410-22761}: when does the smallest DSA derived from the canonical DFA correspond to a minimal DSA? Although we do not provide an answer to this question, we now understand that when we suitably restrict the DSA syntax, the derivation procedure indeed is able to generate a minimal automaton in the restricted class, starting from the canonical DFA. This provides new insights about the derivation procedure. Moreover, as we will see, many DSAs discussed in this paper are already strong. We also provide a real-life-inspired example of a strong DSA. So, overall, this section sends the message that there are specifications that can be encoded naturally as strong DSAs, and we have a method to generate minimal strong DSAs. 

\subsection{Syntax of strong DSAs}
\label{sec:sDSA-syntax}

We start with a formal description of the syntax of strong DSAs. The examples that follow aim to give an explanation of this definition. In this definition, we say $\alpha'$ is a non-empty prefix of a word $\alpha$ when $\alpha' \neq \epsilon$ and $\alpha' \prfx \alpha$. 

\begin{definition}[sDSA]\label{def:str-suff-reading-aut}
(Strongly Deterministic Suffix-reading Automata). A DSA $\Aa = (Q, \Sigma, q_{init}, \Delta, F)$ is said to be strongly deterministic if for every state $q \in Q$, for every $\a, \beta \in \out (q)$, and for all non-empty prefixes $\a' \prfx \a$ and all non-empty proper prefixes $\beta' \pprfx \beta$, we have $\a' \not \sfx \beta'$ : no prefix of $\a$ is a factor of $\beta$.
\end{definition}

For instance, the DSA in Figure~\ref{fig:if-else} is not an sDSA: at state $s_1$ we have $\out(s_1) = \{ \text{\texttt{if}}, \text{\texttt{endif}}\}$, and the letter \texttt{i} (which is a prefix of \texttt{if}) appears in \texttt{endif} as a suffix  of \texttt{endi}. The DSAs in Figures~\ref{fig:example-aab} and \ref{fig:example-ab-bb} are strong DSAs: it is easy to verify this for $\Aa_2$ and $\Aa_3$; for $\Aa_4$, we provide an explanation. We have $\out(q_0) = \{ab , ba\}$; let $\alpha = ab$ and $\beta= ba$. The non-empty prefixes of $\alpha$ are $a$ and $ab$. The only non-empty proper prefix of $\beta$ is $b$. Notice that neither $a$ nor $ab$ is a suffix of $b$. The same exercise can be repeated with $\alpha = ba$ and $\beta = ab$. We conclude that $\Aa_4$ is an sDSA. Note that any DFA is an sDSA, as the labels of each outgoing transition at a state are distinct letters (and a DFA is a valid DSA). A slightly modified version of the DSA in Figure~\ref{fig:if-else} gives us an example of an sDSA. In Figure~\ref{fig:if-else-no-nest} we have a DSA for out-of-context \texttt{else} statements, where we allow nested \texttt{if}, but a single \texttt{endif} appearing later corresponds to all the open \texttt{if} so far. A context would therefore be the part between the first \texttt{if} and the last \texttt{endif}. If the specification additionally requires to disallow nested \texttt{if}, then it can be modeled by the intersection of the sDSA of Figure~\ref{fig:if-else-no-nest} and another automaton that rejects words with two \texttt{if} with no \texttt{endif} in between.

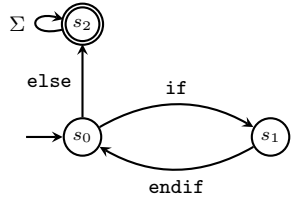
\begin{figure}[t]
  \centering
  \begin{tikzpicture}[state/.style={circle, draw, thick, inner sep =
      2pt}]
    \begin{scope}[every node/.style={state}]
      \node (0) at (0,0) {\tiny $s_0$}; \node (1) at (2.5, 0) {\tiny
        $s_1$}; \node [double] (2) at (0, 1.5) {\tiny $s_2$}; 
    \end{scope}
    \begin{scope}[->, >=stealth, thick, auto]
      \draw (-0.75, 0) to (0); \draw (0) to node {\scriptsize
        $\mathtt{else}$ } (2); \draw (0) to [bend left=30] node
      {\scriptsize $\mathtt{if}$} (1); \draw (1) to [bend left=30]
      node {\scriptsize $\mathtt{endif}$} (0); \draw (2) to [loop
      left] node {\scriptsize $\Sigma$} (2); 
    \end{scope}
  \end{tikzpicture}
  \caption{sDSA for out-of-context \texttt{else}, without nested if statements}
  \label{fig:if-else-no-nest}
\end{figure}

We describe another example of a strong DSA, motivated by a real-life situation. It is a simplified specification of an automotive
application: ``when the alarm is off, and the panic switch is pressed
twice within one clock cycle, go to an error state''. The DFA in
Figure~\ref{fig:automotive-eg} models this. States $q_0$ and $q_1$
represent the alarm being off and on respectively. State $q_3$ is the
error state. The DFA toggles between off and on states on receiving
Signal $s$.  Signal $t$ denotes a ``tick'' marking the separation of
clock cycles, and $p$ denotes the pressing of the panic switch. The
sDSA for this specification is given on the right of
Figure~\ref{fig:automotive-eg}. At state $q_0$, the automaton keeps
receiving signals until either an $s$ or a sequence $pp$ is seen. If
$s$ is seen first, the automaton moves to $q_1$, otherwise it moves to
$q_3$. The move on $q_0 \xra{pp} q_3$ happens on words over
$\{t, p, s\}$ that end with $pp$, and contain neither an $s$ nor a
previous occurrence of $pp$.  This expresses that the panic switch was
pressed twice, within one clock cycle (as no $t$ occurred), while the
alarm was still off. Here, the sDSA provides a smaller, and arguably, a
more readable representation than the DFA.

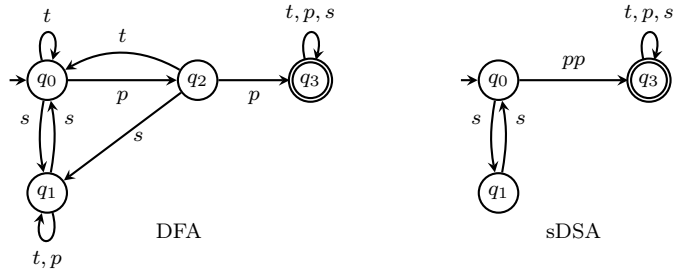
\begin{figure}[t]
  \centering
  \begin{tikzpicture}[state/.style={draw, thick, circle, inner
      sep=2pt}]
    \begin{scope}[every node/.style={state}]
      \node (0) at (0,0) {\scriptsize $q_0$}; \node (1) at (0,-1.5)
      {\scriptsize $q_1$}; \node (2) at (2,0) {\scriptsize $q_2$};
      \node [double] (3) at (3.5,0) {\scriptsize $q_3$};
    \end{scope}
    \begin{scope}[->, thick, >=stealth, auto]
      \draw (-0.5, 0) to (0); \draw (0) to [loop above] node
      {\scriptsize $t$} (0); \draw (0) to [bend right=10] node [left,
      near start] {\scriptsize $s$} (1); \draw (1) to [bend right=10]
      node [right, near end] {\scriptsize $s$} (0); \draw (0) to node
      [below] {\scriptsize $p$} (2); \draw (2) to [bend right=30] node
      [above] {\scriptsize $t$} (0); \draw (2) to node [right]
      {\scriptsize $s$} (1); \draw (2) to node [below] {\scriptsize
        $p$} (3); \draw (3) to [loop above] node {\scriptsize
        $t, p, s$} (3); \draw (1) to [loop below] node {\scriptsize
        $t, p$} (1);
    \end{scope}
    \node at (1.75, -2) {\scriptsize DFA};

    \begin{scope}[xshift=6cm]
      \begin{scope}[every node/.style={state}]
        \node (0) at (0,0) {\scriptsize $q_0$}; \node (1) at (0,-1.5)
        {\scriptsize $q_1$}; \node [double] (3) at (2,0) {\scriptsize
          $q_3$};
      \end{scope}
      \begin{scope}[->, thick, >=stealth, auto]
        \draw (-0.5, 0) to (0); \draw (0) to node {\scriptsize $pp$}
        (3); \draw (0) to [bend right=10] node [left, near start]
        {\scriptsize $s$} (1); \draw (1) to [bend right=10] node
        [right, near end] {\scriptsize $s$} (0); \draw (3) to [loop
        above] node {\scriptsize $t, p, s$} (3);
      \end{scope}
      \node at (1, -2) {\scriptsize sDSA};
    \end{scope}
  \end{tikzpicture}
  \caption{A simplified specification of an automotive application}
  \label{fig:automotive-eg}
\end{figure}

However sDSA may not always provide as succinct a representation as DSA. We see in Figure \ref{fig:dsa-smaller-sdsa-eg}, an example of a DSA on the left and an equivalent sDSA on the right. It is the DSA from Figure \ref{fig:dsa-to-dfa-eg} again, with the corresponding sDSA shown. The DSA itself is not a valid sDSA, since $ba$ is a prefix of $baaa$ and a factor of $abaa$ (also $a$ is prefix of $abaa$ and a factor of $baaa$). Intuitively, to construct an sDSA, we need to break the transitions after $ab$ and $ba$ and introduce new states. This continues to happen on the next transitions from these new states resulting in the sDSA shown.

\begin{figure}
  \centering
  \begin{tikzpicture}[state/.style={circle, draw, thick, inner sep =
      2pt}]
    \begin{scope}[every node/.style={state}]
      \node (0) at (0,0) {\tiny $q$}; \node [double] (1) at (2,0)
      {\tiny $q'$};
    \end{scope}
    \begin{scope}[->, >=stealth, thick, auto]
      \draw (0) to [bend left=30] node {\scriptsize $abaa$} (1); \draw
      (0) to [bend right=30] node [below] {\scriptsize $baaa$} (1);
    \end{scope}

    \begin{scope}[xshift=3.5cm]
      \begin{scope}[every node/.style={state}]
        \node (0) at (0,0) {\tiny $q$}; 
        \node (ab) at (1.5, 1) {\tiny $ab$}; 
        \node (ba) at (1.5,-1) {\tiny $ba$}; 
        \node (aba) at (3, 0) {\tiny $aba$}; 
        \node [double] (1) at (4.5, 0) {\tiny $q'$};
      
      \end{scope}
      \begin{scope}[->,>=stealth, thick, auto]
        \draw (0) to node {\tiny $ab$} (ab); 
        \draw (0) to node [below] {\tiny $ba$} (ba);  
        \draw (ab) to node {\tiny $a$} (aba); 
        \draw (ba) to node [below] {\tiny $a$} (aba); 
        \draw (aba) to node {\tiny $a$} (1); 

        \draw (ba) to node {\tiny $b$} (ab); 
      
        \draw (ab) to [bend right=30] node [left] {\tiny $ba$} (ba); 
        \draw (aba) to [bend right=60] node [above] {\tiny $b$} (ab);
      \end{scope}

    \end{scope}

  \end{tikzpicture}
  \caption{A DSA on the left, and the corresponding (larger) sDSA}
  \label{fig:dsa-smaller-sdsa-eg}
\end{figure}

\subsection{Minimality for strong DSAs}
\label{sec:sDSA-minimality}

 We state some preliminary lemmas, leading up to the main result that any minimal sDSA is derived from a minimal DFA. The next lemma is generic and holds for all DSAs, which are not necessarily strong. We start with some notation. 

 For a state $q$ of a DFA $M$, we define $L^M(q)$ to be the words accepted by $M$ starting from $q$ and call it the \emph{residual language} of the state. Similarly, for a DSA $\Aa$ and a state $q$ of $\Aa$, we can define the residual language $L^\Aa(q)$.
 We say that two states $p, q$ of a DFA/DSA are \emph{equivalent} if their residual languages are equal. In the setting of DFAs, we know that in the canonical (minimal-state) automaton, no two states are equivalent. The next lemma establishes this property even in the DSA setting.

\begin{lemma}\label{lem:minimal-dsa-no-equivalent-states}
No two states of a minimal DSA can be equivalent.
\end{lemma}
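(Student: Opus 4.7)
The plan is to argue by contradiction. Suppose $\Aa = (Q, \Sigma, q^{init}, \Delta, F)$ is a minimal DSA containing two distinct states $p \neq q$ with $L^\Aa(p) = L^\Aa(q)$. I will construct a DSA $\Aa'$ that is language-equivalent to $\Aa$ but has strictly smaller total size, contradicting minimality.

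\emph{Construction of $\Aa'$.} I build $\Aa'$ by (i) deleting the state $q$ together with every outgoing transition of $q$, (ii) redirecting each incoming transition $r \xra{\gamma} q$ of $\Aa$ to $r \xra{\gamma} p$, and (iii) if $q$ was the initial state of $\Aa$, declaring $p$ to be the initial state of $\Aa'$. The accepting set is $F \setminus \{q\}$. Observe that $\Aa'$ is a valid DSA: for each remaining source $s$, the set $\out(s)$ is unchanged between $\Aa$ and $\Aa'$, since only the targets of transitions were modified, so the requirement that no state has two outgoing transitions with the same label is preserved. Furthermore, $p$ and $q$ have the same accepting status: $\epsilon \in L^\Aa(p)$ iff $p \in F$, and the equivalence $L^\Aa(p) = L^\Aa(q)$ forces the analogous statement for $q$.

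\emph{Language preservation.} Define the shadowing map $\phi : Q \to Q \setminus \{q\}$ by $\phi(s) = s$ for $s \neq q$, and $\phi(q) = p$. I will prove by induction on $|w|$ the statement
\[
\text{for every } s \in Q \text{ and every } w \in \Sigma^*,\quad w \in L^\Aa(s)\ \iff\ w \in L^{\Aa'}(\phi(s)).
\]
The base case $w = \epsilon$ follows from the accepting-status observation above. For the inductive step, consider first any $s \neq q$. Since $\out(s)$ is the same in both DSAs, the first-move conditions of Definition~\ref{def:DSA-moves} select the same prefix $w_1$ of $w$ and the same label $\alpha$; the only change is that the target shifts from $s'$ in $\Aa$ to $\phi(s')$ in $\Aa'$. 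The inductive hypothesis applied to the shorter suffix $w_2$, where $w = w_1 w_2$, closes this case, and the subcase where no first move is enabled is immediate because $\out(s)$ is identical on both sides. For $s = q$, we chain the equivalence $L^\Aa(q) = L^\Aa(p)$ with the already-established statement for the case $s = p$ (which does not itself appeal to the $s = q$ instance), obtaining $w \in L^\Aa(q)$ iff $w \in L^{\Aa'}(p) = L^{\Aa'}(\phi(q))$.

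\emph{Strict decrease in size.} Removing $q$ reduces the state count by one, and any outgoing transition of $q$ further reduces the edge count and total label length; the redirection of incoming transitions of $q$ to $p$ leaves edge count and label lengths unchanged. Hence $|\Aa'| \le |\Aa| - 1 < |\Aa|$, contradicting the minimality of $\Aa$. The main subtlety to flag in writing out the details is that after redirection a transition $p \xra{\gamma} q$ (if present in $\Aa$) becomes a self-loop $p \xra{\gamma} p$ in $\Aa'$, and one might worry that this perturbs the longest-suffix-match semantics; however this is harmless because the set of enabled moves at a state depends only on its outgoing-label set $\out(\cdot)$, not on the identities of the targets, so the redirection is semantically safe.
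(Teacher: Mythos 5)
Your proof is correct and takes essentially the same route as the paper: the paper likewise redirects every incoming transition $(r,\gamma,q)$ to point at $p$, deletes $q$ (and any states made unreachable), observes that $\out(r)$ is unchanged for all surviving states so the move semantics are unaffected, and concludes $|\Aa'| < |\Aa|$, contradicting minimality. Your added details --- the induction on $|w|$ via the shadowing map $\phi$, the check that $p$ and $q$ share accepting status, and the handling of the case where $q$ is initial --- are careful elaborations of steps the paper leaves implicit.
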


\begin{proof}
  Let $\Aa$ be a minimal DSA. Suppose 
  $L^\Aa(p) = L^\Aa(q), p \ne q$. We will
  now construct a DSA $\Aa'$ with smaller size
  than $\Aa$. This will be a contradiction. To get $\Aa'$, we will
  re-orient all transitions going to $q$ to now point towards $p$:
  remove each 
  transition $(r, \a, q)$ and add the transition $(r, \a, p)$; then
  remove $q$ and other unreachable states after this
  transformation. Since $L^\Aa(p) = L^\Aa(q)$, this construction
  preserves the language. For all the states $r$ that remain in $\Aa'$ we
  still have the same $\out(r)$ as in $\Aa$. Finally we need to argue that $|\Aa'| <
  |\Aa|$. This is easy to see since $q$ has been removed from $\Aa$,
  and there are no new additions to $\Aa'$.
\end{proof}

Now we come to properties specific to sDSAs. The key idea is to consider the tracking DFA (Definition~\ref{def:tracking-dfa} and Figure~\ref{fig:dsa-to-dfa-eg}) and investigate equivalence between states in this tracking DFA. 

\begin{lemma}\label{lem:sdsa-equivalent-states}
For any minimal sDSA $\Aa$, its tracking DFA $M_\Aa$ cannot have a `DSA state' equivalent to any other state, i.e. if $(p,\a),(q,\beta)$ are states of $M_\Aa$ and also equivalent, then we have both $\a\ne\e$ and $\beta\ne\e$.
\end{lemma}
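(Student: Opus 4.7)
The plan is to argue by contradiction. Suppose $\Aa$ is a minimal sDSA and yet $M_\Aa$ contains a pair of distinct equivalent states $(p,\a)\sim(q,\beta)$; by symmetry I may assume $\a=\e$. If $\beta=\e$ too, then $L^\Aa(p)=L^\Aa(q)$ with $p\neq q$, contradicting Lemma~\ref{lem:minimal-dsa-no-equivalent-states}. Hence assume $\beta\neq\e$, and the goal is to build a smaller sDSA $\Aa'$ with $\Ll(\Aa')=\Ll(\Aa)$, which would contradict the minimality of $\Aa$.

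Since $\beta\in\outp(q)\setminus\{\e\}$, some $\gamma\in\out(q)$ has $\beta\pprfx\gamma$. A quick check of the sDSA definition shows that $\out(q)$ is prefix-free, so $\beta\notin\out(q)$. Let $T:=\{\gamma\in\out(q)\mid \beta\pprfx\gamma\}$; this is non-empty and every $\gamma\in T$ factors as $\gamma=\beta\gamma'$ with $\gamma'\in\Sigma^+$. I would construct $\Aa'$ from $\Aa$ by deleting every transition $(q,\gamma,q''_\gamma)$ with $\gamma\in T$ and adding a single new transition $(q,\beta,p)$. The size computation gives
\[ |\Aa'|-|\Aa| \;=\; (1+|\beta|) - \Big(|T|+\sum_{\gamma\in T}|\gamma|\Big) \;\le\; 1+|\beta| - |T|(|\beta|+2) \;\le\; -1, \]
using $|T|\ge 1$ and $|\gamma|\ge|\beta|+1$. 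To check that $\Aa'$ is still an sDSA, only the new pairings at $q$ involving $\beta$ with some $\delta\in\out(q)\setminus T$ need attention. These inherit the required non-containment from the sDSA property of $\Aa$ applied to the pair $(\gamma,\delta)$ for any fixed $\gamma\in T$, since every non-trivial (proper) prefix of $\beta$ is also a non-trivial (proper) prefix of $\gamma$. Prefix-freeness at $q$ in $\Aa'$ follows likewise.

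The remaining, and most delicate, step is proving $\Ll(\Aa')=\Ll(\Aa)$. I would establish this by relating the tracking DFAs $M_{\Aa'}$ and $M_\Aa$ via a simulation: any run of $M_{\Aa'}$ that avoids the new $(q,\beta,p)$ transition mirrors a run of $M_\Aa$ that avoids $T$-transitions, and when $(q,\beta,p)$ fires in $\Aa'$ and lands at $(p,\e)$, the simulated run in $M_\Aa$ is simultaneously arriving at the tracking state $(q,\beta)$; from there, the assumed equivalence $(p,\e)\sim(q,\beta)$ guarantees acceptance of exactly the same remaining suffixes. The hard part, which I expect to be the main obstacle, is ensuring that the new label $\beta$ in $\Aa'$ fires precisely when $M_\Aa$ first reaches the tracking state $(q,\beta)$, and neither earlier nor later. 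This is where the sDSA hypothesis is essential: since by sDSA no non-trivial prefix of any $\delta\in\out(q)\setminus T$ is a suffix of a non-trivial proper prefix of $\beta$, and vice versa, the $\beta$-branch of the tracking at $q$ in $\Aa'$ stays isolated from the other outgoing branches and fires at exactly the same input positions as $M_\Aa$'s tracking first hits $(q,\beta)$. With the language equivalence established, the strict size inequality $|\Aa'|<|\Aa|$ contradicts the minimality of $\Aa$, completing the proof.
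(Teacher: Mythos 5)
Your proposal is correct and takes essentially the same route as the paper: dispose of the $\beta=\e$ case via Lemma~\ref{lem:minimal-dsa-no-equivalent-states}, and for $\beta\ne\e$ replace the outgoing label(s) extending $\beta$ at $q$ by the single transition $q \xra{\beta} p$, verify that the result is a strictly smaller sDSA (your size bound and your prefix-inheritance check match the paper's), and contradict minimality, with the sDSA hypothesis doing exactly the work you identify in the language-equivalence step. The only cosmetic difference is your set $T$: the strong-determinism condition already forces $|T|=1$, since two labels $\beta\alpha,\beta\alpha'\in\out(q)$ sharing the non-trivial prefix $\beta$ would have $\beta$ (a non-trivial prefix of one) as a suffix of $\beta$ (a non-trivial proper prefix of the other) --- precisely the situation the paper's remark after the lemma highlights --- so your surgery coincides with the paper's single-transition replacement.
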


\begin{proof}
From Lemma~\ref{lem:minimal-dsa-no-equivalent-states}, we know that no two states of any minimal DSA can be equivalent. So it is not possible to have any distinct $(p,\e),(q,\e) \in M_\Aa$ to be equivalent. Suppose $L^{M_\Aa}(p, \e) = L^{M_\Aa}(q, \beta)$ for some $\beta \ne \e$ (and $q$ could be $p$ as well). We will construct a smaller sDSA $\Aa'$. 

Consider state $q$ of $\Aa$. Due to the presence of state $(q, \beta)$ in $M_\Aa$, there exists some outgoing label of the form $\beta\a$ from $q$ in $\Aa$: that is, $\beta\a \in \out(q)$ for some $\a$ with $|\a| \ge 1$. Let the corresponding transition be $q \xra{\beta\a} r$. To get $\Aa'$ : remove this transition $q\xra{\beta\a}r$ and add the transition $q\xra{\beta}p$. Clearly $\Aa'$ is smaller than $\Aa$, since total size is reduced by $|\a|$. Language accepted is the same. $\Aa'$ is also an sDSA since the only change is replacing $\beta\a$ with $\beta$ in $\out(q)$: any string in $\outp(q)$ that is a factor of $\beta$ would also be a factor of $\beta\a$, and any prefix of $\beta$ is also a prefix of $\beta\a$. This contradicts the assumption that $\Aa$ was a minimal sDSA, thus proving the lemma.
\end{proof}

We remark that the above proof does not work for general DSAs. Consider the transition $q \xra{\beta \alpha} r$ that was removed, and modified to $q \xra{\beta} p$. There could be another transition $q \xra{\beta \alpha'} r'$ in the DSA --- this violates the strong DSA property since $\beta \alpha$ and $\beta \alpha'$ are outgoing labels, and the prefix $\beta$ of the former appears as a suffix of the latter. On seeing the word $\beta \alpha'$ from $q$, the original DSA moves to $r'$. However, in the new DSA, as soon as $\alpha$ is seen the state changes to $p$. This can modify the language. Such a situation does not happen in an sDSA due to the restriction on the outgoing labels. We now prove the main result of this section.

\begin{theorem}
Every minimal sDSA for a language $L$ can be derived from the canonical DFA for $L$.
\end{theorem}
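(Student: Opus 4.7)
The starting point is Proposition~\ref{prop:dsa-derivable-from-tracking-dfa}: any minimal sDSA $\Aa$ is well-formed and free of useless transitions (exactly as argued for general DSAs in Section~\ref{sec:minim-some-observ}), so $\Aa$ is already derivable from its own tracking DFA $M_\Aa$ using the set $S = \{(q,\epsilon)\mid q\in Q^\Aa\}$. The plan, therefore, is to transport this derivation from $M_\Aa$ to the canonical DFA $M_L$ via the Myhill--Nerode quotient map $\pi: M_\Aa \to M_L$, and show that the suffix-tracking set $S$ descends to a suffix-tracking set $S' := \pi(S) \subseteq M_L$ that reproduces $\Aa$.

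The first key step is to verify that $\pi$ behaves well on $S$. By Lemma~\ref{lem:minimal-dsa-no-equivalent-states} no two DSA states of $\Aa$ share a residual language, and by Lemma~\ref{lem:sdsa-equivalent-states} no DSA state of $M_\Aa$ is equivalent to any non-DSA state. Hence $\pi\restriction_S$ is injective and $\pi(S)\cap \pi(M_\Aa\setminus S)=\emptyset$. So $S'$ has exactly $|S|$ states, each containing at most one DSA state of $M_\Aa$, while all non-DSA states either stay distinct or collapse with one another in $M_L\setminus S'$.

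Next I would prove that $S'$ is suffix-tracking in $M_L$ (well-formedness plus suffix-compatibility of every transition w.r.t.\ $S'$) and that the DSA obtained by (a) inducing from $M_L$ using $S'$ and (b) removing useless bigger-suffix transitions is exactly $\Aa$. The strong DSA condition is crucial here: it forces the ``branches'' $\{(q,\beta):\beta\in\outp(q)\}$ in $M_\Aa$ to be structurally isolated, since no prefix of one outgoing label appears as a suffix of a proper prefix of another. This means simple paths in $M_\Aa$ from $(q,\epsilon)$ to the next DSA state track a single outgoing label of $q$ without ever switching branches. Using this, I would show: (i) the outgoing labels $\out(q)$ of $\Aa$ map bijectively to a distinguished subset of simple paths in $\spath{\pi(q,\epsilon)}{\pi(q',\epsilon)}{S'}$ inside $M_L$; (ii) any extra simple paths created by collapses in $M_L$ are strictly longer than one of the labels in $\out(q)$ and have it as a suffix, so they are useless bigger-suffix transitions in the sense of Definition~\ref{def:useless-transitions} and are discarded in the derivation; (iii) suffix-compatibility of each transition of $M_L$ w.r.t.\ $S'$ follows by pushing down the corresponding property in $M_\Aa$, and well-formedness of $S'$ is inherited from well-formedness of $\Aa$ together with injectivity of $\pi$ on $S$.

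The main obstacle I anticipate is step (ii): when $\pi$ merges two non-DSA states in $M_\Aa$, a path that was simple in $M_\Aa$ can traverse the merged state twice and therefore fail to be simple in $M_L$, and conversely new simple paths can appear in $M_L$ that have no lift in $M_\Aa$. I would handle this by exploiting the strong DSA property to argue that any such ``new'' simple path in $M_L$ between two elements of $S'$ lifts in $M_\Aa$ to a walk that crosses some branch boundary inside the $(q,\beta)$-structure, and therefore necessarily carries, as a proper suffix, a shorter label already present in $\out(q)$. Thus the extra paths are precisely the useless bigger-suffix transitions removed by the derivation procedure, and the DSA derived from $M_L$ coincides with $\Aa$, as required.
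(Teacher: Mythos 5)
Your overall route is the same as the paper's: invoke Proposition~\ref{prop:dsa-derivable-from-tracking-dfa} to derive the minimal sDSA $\Aa$ from its tracking DFA $M_\Aa$, pass to the canonical DFA via the Nerode quotient, show that the image $S'$ of the DSA-states $\{(q,\e)\}$ is a suffix-tracking set, and argue that any extra simple words created by the quotient are useless bigger-suffix-transitions discarded by the derivation. Your uses of Lemma~\ref{lem:minimal-dsa-no-equivalent-states} and Lemma~\ref{lem:sdsa-equivalent-states} (injectivity of $\pi$ on the DSA-states, and the fact that DSA-states never merge with branch states) match the paper exactly, as does your treatment of spurious new simple paths.

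However, there is a genuine gap at precisely the point where the paper spends its Step~1. You correctly flag the danger that ``a path that was simple in $M_\Aa$ can traverse the merged state twice and therefore fail to be simple in $M_L$,'' but your proposed resolution addresses only the converse phenomenon (new simple paths appearing in $M_L$). What is missing is a proof that two distinct states \emph{on the same branch} --- $(q,\a)$ and $(q,\b)$ with $\a \pprfx \b$, both prefixes of a single outgoing label of $q$ --- can never be Nerode-equivalent. If such a merge occurred, the path in $M_L$ spelling that label would revisit the merged class, the label would cease to be a simple word modulo $S'$, and the induced DSA would lack a transition of $\Aa$; so your step~(i), the bijection between $\out(q)$ and simple paths in $M_L$, is exactly what remains unproven. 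Note that the two separation facts you do establish (injectivity of $\pi$ on $S$, and $\pi(S) \cap \pi(M_\Aa \setminus S) = \emptyset$) do not help here, since the problematic merge is between two \emph{non}-DSA states. The paper closes this hole with a short but essential argument combining both of your cited lemmas with strong determinism: if $(q,\a) \equiv (q,\b)$ with $\a \prfx \b$, pick $\gamma$ with $\b\gamma \in \out(q)$; reading $\gamma$ from $(q,\b)$ reaches a DSA state $(q',\e)$, and since by Lemma~\ref{lem:sdsa-equivalent-states} a DSA state is equivalent to nothing but itself, reading $\gamma$ from $(q,\a)$ must reach $(q',\e)$ as well; this yields two outgoing labels of $q$ with a prefix of one contained in the other, contradicting Definition~\ref{def:str-suff-reading-aut}. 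Your write-up contains all the ingredients for this argument but never assembles it, and without it the transport of the suffix-tracking set to the canonical DFA does not go through.
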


\begin{proof}

  Let $\Aa = (Q^\Aa, \Sigma, \Delta^\Aa, F^\Aa)$ be an arbitrary minimal sDSA.
Any minimal sDSA is well-formed and has no redundant transitions. Hence $\Aa$ can be derived from its tracking DFA $M_\Aa$, thanks to Proposition~\ref{prop:dsa-derivable-from-tracking-dfa}. If the tracking DFA $M_\Aa$ is canonical, we are done. Otherwise, there are distinct states of $M_\Aa$ that are equivalent.  From Lemma~\ref{lem:sdsa-equivalent-states}, we know that a state $(q, \epsilon)$ of $M_\Aa$ cannot be equivalent to any other state. Therefore,  two equivalent states are of the form $(p, \alpha)$ and $(q, \beta)$, with $\alpha$ and $\beta$ both non-empty. We will merge such equivalent states together to build the minimal automaton and make use of this specific structure to show that $\Aa$ can be derived from the minimized DFA. For the proof, we will show the following steps.
\begin{enumerate}
\item Suppose a state $(q, \alpha)$ is equivalent to $(q, \beta)$ (with the same $q$), then $\alpha \not \prfx \beta$ (i.e. the two states cannot be `tracking' the same $\Aa$-transition out of $q$).
\item Build a DFA $M'_\Aa$ by quotienting equivalent states of $M_\Aa$, with $[q]$ representing the equivalence class of $q$. The resulting automaton $M'_\Aa$ is the canonical DFA. We use (1) to show that every simple path from a state $(q, \epsilon)$ to $(q, \beta)$ is preserved in the quotiented automaton $M'_\Aa$, from $[(q, \epsilon)]$ to $[(q, \beta)]$.
\item From (2), we show that the set of states $S' := \{ [(q, \epsilon)] \mid q \in Q^\Aa\}$ forms a suffix-tracking set. The DSA derived using $S'$ turns out to be $\Aa$, proving that $\Aa$ can be derived from the canonical DFA.
\end{enumerate}

\paragraph*{Step 1.} If two states $(q,\a),(q,\beta) \in M_\Aa$ are equivalent, then $\a\nprfx\beta$ (i.e. the two states are not `tracking' the same $\Aa$-transition). In other words, two states in the tracking DFA that lie on the same path corresponding to a given DSA transition, cannot be equivalent. Suppose $\a\prfx\beta$. We know that $\beta\in\outp(q)$, so there is a string $\gamma$ such that $\beta\gamma\in\out(q)$ i.e. reading $\gamma$ from $(q,\beta)$ leads to a `DSA state', say $(q',\e)$. Let $s$ be the state reached on reading $\gamma$ from $(q, \alpha)$. Since $(q, \alpha)$ and $(q, \beta)$ are equivalent, the state $s$ will be equivalent to $(q', \e)$. From Lemma~\ref{lem:sdsa-equivalent-states}, this means $s = (q', \e)$. This gives transition labels $\a\gamma$ (induced by the simple path just discussed) and $\beta\gamma$ from $q$ in $\Aa$, with $\a\prfx\beta$, contradicting the fact that $\Aa$ was an sDSA. Hence we must have $\a\nprfx\beta$.

\paragraph*{Step 2.} For two states $s, s'$ of $M_\Aa$, define $s \equiv_{M_\Aa} s'$ if $L^{M_\Aa}(s) = L^{M_\Aa}(s')$. We denote the equivalence class of $s$ by $[s]$. From Lemma~\ref{lem:sdsa-equivalent-states}, $[(q, \epsilon)] = \{ (q, \epsilon)\}$. Consider a quotient DFA $M'_\Aa$ based on this equivalence. States are the equivalence classes of $\equiv_{M_\Aa}, \{ [s] \mid s \in M_\Aa \}$. 
The initial state is $[(q^\Aa_{in}, \epsilon)]$ and final states are $\{ [(q, \epsilon)] \mid q \in F^\Aa\}$. Transitions are given by $\{ [s] \xra{a} [s'] \mid s \xra{a} s' \in M_\Aa \}$. This is deterministic because whenever we have $[s_1] = [s'_1]$ and $s_1 \xra{a} s_2, s'_1 \xra{a} s'_2 \in M_\Aa$, we also have $[s_2]=[s'_2]$. 
Note that $M_\Aa'$ is minimal, by definition. 

Consider a transition $q \xra{\alpha} q'$ of the sDSA $\Aa$, with $\alpha = a_1 a_2 \dots a_n$. This transitions gives states $(q, \epsilon)$, $(q, a_1)$, $(q, a_1 a_2)$, $\dots$, $(q, a_1 \dots a_{n-1})$, $(q', \epsilon)$ in the tracking DFA $M_\Aa$. From (1), we have $[(q,a_1)],[(q,a_1a_2)],\dots,[(q,a_1a_2\dots a_{n-1})]$ to be distinct states in $M'_\Aa$. equivalent. Hence $a_1 a_2 \dots a_i$ is a simple path from $[(q, \epsilon)]$ to $[(q, a_1 \dots a_i)]$ that does not visit any state of the form $[(p, \epsilon)]$ in between.

\paragraph*{Step 3.} Let $S' := \bigcup_{q\in\Aa}\{[(q,\e)]\}$. We will show that $S'$ is suffix-tracking. Consider a simple path $\s$ from $[(q, \e)]$ to $[(q, \beta)]$ (note that $\s$ is also a simple path from $(q, \e)$ to $(q, \beta)$ in $M_\Aa$). 
In the tracking DFA $M_\Aa$, every transition moves to a state tracking the longest possible suffix: $(q, \beta) \xra{a} (q, \beta')$ in $M_\Aa$ means $\beta'$ is the longest word in $\outp(q)$ s.t $\beta' \sfx \s a$. From (2), $\beta'$ is a simple path from $[(q, \epsilon)]$ to $[(q, \beta')]$, in $M'_\Aa$. Moreover, we have $[(q, \beta)] \xra{a} [(q, \beta')]$ by definition. 

We claim 
that the longest suffix of $\s a$ among simple paths from $[(q,\e)]$, is $\beta'$, which goes from $[(q,\e)]$ to $[(q,\beta')]$. This would show suffix-compatibility of the transition. Suppose instead, the longest suffix (say $\s'$) went from $[(q,\e)]$ to $[(q,\a')]$ in $M'_\Aa$; we have $\s'$ to also be a simple path from $(q, \e)$ to $(q, \a')$ in $M_\Aa$, and $\alpha'$ to be longer than $\beta'$. This is a contradiction. Hence $\beta'$ is the longest simple-path suffix of $\s a$ in $M'_\Aa$, making  $[(q, \beta)] \xra{a} [(q, \beta')]$ suffix-compatible.

The set $S'$ is also well-formed since $\Aa$ is well-formed. Suppose it was not. That means there exist states $p \in S', q' \notin S'$, and words $\alpha \in \spath{p}{q}{S'}, \beta' \in \spath{p}{q'}{S'}$, such that $\alpha \sfx \beta'$. Then there is a state $q'' \in S'$, and a word $\beta \in \spath{p}{q''}{S'}$ such that $\beta' \pprfx \beta$,  and we have $\a \sfx \beta'$. Since $\a,\beta \in \out(p)$ are paths corresponding to $\Aa$-transitions, this means $\Aa$ is not an sDSA, which is a contradiction. Hence we have $S'$ to be suffix-tracking. In the induced DSA $\Aa_{S'}$, every simple path from $[(q, \epsilon)]$ to $[(q', \epsilon)]$ will be present. Hence every transition $q \xra{\alpha} q'$ in $\Aa$ is present in $\Aa_{S'}$. Any transition out of $[(q, \epsilon)]$ which is not in $\out(q)$ in $\Aa$, will be a redundant bigger-suffix-transition, analogous to the argument in last part of the proof of Proposition~\ref{prop:dsa-derivable-from-tracking-dfa}.
 Thus $\Aa$ can be derived from $M'_\Aa$, which is a minimal DFA.
\end{proof}

\section{Complexity of minimization}
\label{sec:complexity}

The goal of this section is to prove the following theorem.

\begin{theorem}
  \label{thm:np-complete}
  Given a DFA $M$ and positive integer $k$, deciding whether there
  exists a DSA of total size $\le k$ language equivalent to $M$ is
  NP-complete.
\end{theorem}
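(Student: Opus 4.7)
The plan is to establish membership in $\NP$ and then prove $\NP$-hardness by reduction from \vertex.

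For membership, I would non-deterministically guess a DSA $\Aa$ of total size at most $k$; since the total size bounds the number of states, number of transitions, and total label length, the entire description of $\Aa$ is polynomial in $k$ (and hence in the input). To certify language equivalence with $M$, I would build the tracking DFA $M_\Aa$ from $\Aa$ using Definition~\ref{def:tracking-dfa}: by the bound used in the proof of Theorem~\ref{thm:comparing-dsa-with-dfa-dga}, $M_\Aa$ has at most $2k$ states and size at most $2k(1+2|\Sigma|)$, so it can be constructed in polynomial time. Lemma~\ref{lem:tracking-dfa-language-equivalent} gives $\Ll(\Aa)=\Ll(M_\Aa)$, so equivalence of $\Aa$ with $M$ reduces to DFA-equivalence between $M_\Aa$ and $M$, which is decidable in polynomial time by the usual product construction.

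For $\NP$-hardness, I would reduce from \vertex. Given $(G=(V,E),k)$, the idea is to build a DFA $M_G$ together with a target size $K=K(|V|,|E|,k)$ such that $M_G$ admits a language-equivalent DSA of total size $\le K$ iff $G$ has a vertex cover of size $\le k$. Intuitively, the derivation framework of Section~\ref{sec:suffix-tracking-sets} suggests the following correspondence: for each vertex $v\in V$ introduce a distinguished DFA state $q_v$ (plus some inert backbone), and for each edge $e=\{u,v\}\in E$ insert an edge-gadget that attaches to $q_u$ and $q_v$. The gadget is to be engineered so that (i) if neither $q_u$ nor $q_v$ is retained in the state set $S$ used to derive a DSA, then some transition of the gadget fails to be suffix-compatible (or $S$ fails well-formedness), forcing either the set $S$ to grow or the induced DSA to contain an infeasibly large transition label; whereas (ii) retaining either $q_u$ or $q_v$ in $S$ makes all the gadget constraints satisfiable with bounded additional cost. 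The size bound $K$ would count one unit of savings per vertex kept out of $S$ and one unit of penalty per uncovered edge, calibrated so that covers translate exactly to DSAs of size $\le K$, and so that without loss of generality the optimal DSA is one derived from $M_G$ (by Proposition~\ref{prop:dsa-derivable-from-tracking-dfa} this loses no generality since $M_G$ can be taken as the tracking DFA of any well-formed, useless-transition-free candidate).

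The main obstacle is designing the edge-gadget so that both directions of the reduction are tight. On the ``only if'' side, the gadget must rule out clever DSAs that do not correspond to any vertex cover, which requires that any language-equivalent DSA of size $\le K$ be (essentially) derivable from $M_G$ via a suffix-tracking set whose intersection with $\{q_v:v\in V\}$ is a vertex cover; this is delicate because Section~\ref{sec:minim-some-observ} shows that minimal DSAs are not always derived from the given DFA. I would circumvent this by padding $M_G$ with distinguishing suffixes attached to each $q_v$ so that the Nerode classes are forced to separate the $q_v$'s, making $M_G$ itself the canonical DFA and leaving no room for a ``bigger-DFA'' trick of Section~\ref{sec:minim-some-observ}. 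On the ``if'' side, given a vertex cover $C$, I would exhibit the set $S=\{q^{init},q^{acc}\}\cup\{q_v:v\in C\}\cup(\text{gadget anchors})$, verify suffix-compatibility and well-formedness by a local check in each gadget, and bound the total label length of the resulting induced DSA (after removing useless transitions, per Definition~\ref{def:useless-transitions}) by a linear function of $|C|$, $|V|$, and $|E|$. Together these two directions, combined with the $\NP$ upper bound, will yield Theorem~\ref{thm:np-complete}.
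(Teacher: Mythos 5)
Your $\NP$ membership argument is correct and matches the paper's (guess a DSA of size $\le k$, build its tracking DFA, check DFA equivalence in polynomial time), and the skeleton of your reduction --- vertex states, edge transitions forcing suffix-compatibility failures, and padding to calibrate the size bound --- is also the paper's skeleton. But there is a genuine gap in your ``only if'' direction, and it sits exactly at the point you flagged as delicate. You propose to rule out DSAs that do not come from $M_G$ by padding $M_G$ with distinguishing suffixes so that it becomes the canonical DFA, ``leaving no room for the bigger-DFA trick of Section~\ref{sec:minim-some-observ}.'' This does not work: canonicity of the starting DFA is precisely what fails to help. In the paper's own counterexample, the DFA $M^*$ of Figure~\ref{fig:dfa-m-star-dsa} \emph{is} minimal (canonical), and yet a strictly smaller DSA is derived from the non-minimal refinement $M^{**}$ obtained by duplicating a state. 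Duplicating states in a refinement is always available regardless of whether $M_G$ is canonical, so your padding ensures nothing about which DSAs of size $\le K$ can exist. Relatedly, your parenthetical claim that ``$M_G$ can be taken as the tracking DFA of any well-formed, useless-transition-free candidate'' misreads Proposition~\ref{prop:dsa-derivable-from-tracking-dfa}: it says every such DSA is derivable from \emph{its own} tracking DFA, which is generally a proper refinement of $M_G$, not $M_G$ itself.

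The paper closes this gap differently: given an arbitrary equivalent DSA $\Aa$ of size $\le k$, it invokes Proposition~\ref{prop:dsa-derivable-from-tracking-dfa} to view $\Aa$ as derived from its tracking DFA $M$, notes that $M$ refines $M_G$, and then carries out the edge argument \emph{for arbitrary refinements}: for an edge $e=(u,v)$, the copies $u_1,\dots,u_i$ and $v_1,\dots,v_j$ of $u$ and $v$ in $M$ are examined via the run on $ve^{i+j+1}$, where a pigeonhole argument produces a revisited state whose incoming $e$-transition cannot be suffix-compatible unless some copy of $u$ or $v$ lies in the chosen set $S$; the $\theta$-padding (each retained state carries $\theta$ digit-transitions costing $2\theta$) then bounds the number of retained states by $k'$. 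Your reduction would need this refinement-robust argument in place of the canonicity claim; without it, the ``DSA $\le K$ implies cover $\le k'$'' direction is unproven, since a small DSA might arise from a suffix-tracking set of some duplicated DFA whose selected states avoid both endpoints of an edge of $G$ as you have set things up. The ``if'' direction of your plan (exhibiting $S$ from a cover and checking suffix-compatibility and well-formedness locally) is sound and matches the paper, modulo making the calibration of $K$ concrete, e.g.\ the paper's $\theta=(|V|+|E|)^4$ and $K=(k'+2)\cdot 2\theta+(2\theta-1)$.
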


If $k$ is bigger than the size of the DFA $M$, then the answer is
trivial. Therefore, let us assume that $k$ is smaller than the DFA
size. For the $\NP$ upper bound, we guess a DSA of total size $k$,
compute its tracking DFA in time $\mathcal{O}(k \cdot |\Sigma|)$ and
check for its language equivalence with the given DFA $M$. This can be
done in polynomial-time by minimizing both the DFA and checking for
isomorphism.

The rest of the section is devoted to proving the lower bound.  We
provide a reduction from the minimum vertex cover problem which is a
well-known $\NP$-complete problem~\cite{DBLP:conf/coco/Karp72}. A
vertex cover of an undirected graph $G = (V, E)$ is a subset
$S \subseteq V$ of vertices, such that for every edge $e \in E$, at
least one of its end points is in $S$. The decision problem takes a
graph $G$ and a number $k' \ge 1$ as input and asks whether there is a
vertex cover of $G$ with size at most $k'$. Using the graph $G$, we
will construct a DFA $M_G$ over an alphabet $\Sigma_G$. We then show
that $G$ has a vertex cover of size $\le k'$ iff $M_G$ has an
equivalent DSA with total size $\le k$ where
$k = (k'+2)\times 2\theta + (2 \theta - 1)$. Here, $\theta$ is a
sufficiently large polynomial in $|V|, |E|$ which we will explain
later. 

  \begin{figure}
    \centering
   \scalebox{0.95} {\begin{tikzpicture}
      \begin{scope}
        \node (u) at (0,0) {\scriptsize $u$}; \node (v) at (-1.5, -1)
        {\scriptsize $v$}; \node (init) at (1, 1) {\scriptsize
          $q^{init}$}; \node (acc) at (1, -1) {\scriptsize $q^{acc}$};
        \node (sink) at (2.5, 0) {\scriptsize $q^{sink}$}; \node (v')
        at (-1.5, 1) {\scriptsize $v'$};
      \end{scope}
      \begin{scope}[->,>=stealth, thick, auto]
        \draw (v) to [bend left = 20] node {\scriptsize $e$} (u);
        \draw (u) to [bend left = 20] node {\scriptsize $e$} (v);
        \draw (init) to node [left, near start] {\scriptsize $u$} (u);
        \draw (u) to node {\scriptsize $e'', V, \theta$} (sink); \draw
        (u) to node [below] {\scriptsize $\$$} (acc);

        \draw (v') to [bend left=20] node {\scriptsize
          $e'$} (u); \draw (u) to [bend left=20] node [near end]
        {\scriptsize $e'$} (v');
      \end{scope}

      \begin{scope}[xshift=5.2cm]
        \begin{scope}
          \node (v) at (-1.5, -1) {\scriptsize $v$}; \node (init) at
          (1, 1) {\scriptsize $q^{init}$}; \node (acc) at (1, -1)
          {\scriptsize $q^{acc}$}; \node (sink) at (2.5, 0)
          {\scriptsize $q^{sink}$}; \node (v') at (-1.5, 1)
          {\scriptsize $v'$};
        \end{scope}
       
      \begin{scope}[->,>=stealth, thick, auto]
        \draw (init) to node [above] {\scriptsize $ue$} (v); \draw
        (init) to node [above] {\scriptsize $ue'$} (v'); \draw (init)
        to node {\scriptsize $u \$$} (acc); \draw (init) to node
        {\scriptsize $ue'', uV, u\theta$} (sink);
      \end{scope}
    \end{scope}

    \begin{scope}[xshift=11cm]
      \node (v) at (-1.5, -1) {\scriptsize $v$};
      \node (acc) at (1, -1) {\scriptsize
        $q^{acc}$}; \node (sink) at (2.5, 0) {\scriptsize
        $q^{sink}$}; \node (v') at (-1.5, 1) {\scriptsize $v'$};
    \end{scope}

      \begin{scope}[->,>=stealth, thick, auto]
        \draw (v) to [bend left = 20] node {\scriptsize $ee'$} (v');
        \draw (v') to [bend left=20] node {\scriptsize $e'e$} (v);
        \draw (v) to node [below] {\scriptsize $e\$$} (acc); \draw (v)
        to [bend left=20] node [near end] {\scriptsize $ee'', eV,
          e\theta$} (sink); \draw (v) to [loop left] node {\scriptsize
          $ee$} (v);
      \end{scope}
    \end{tikzpicture}}
    \caption{Left: Illustration of the neighbourhood of state
      $u$ in the DFA
      $M_G$. Middle, Right: Transitions induced from
      $q^{init}$ and $v$, on removing $u$.}
    \label{fig:complexity-mg}
  \end{figure}
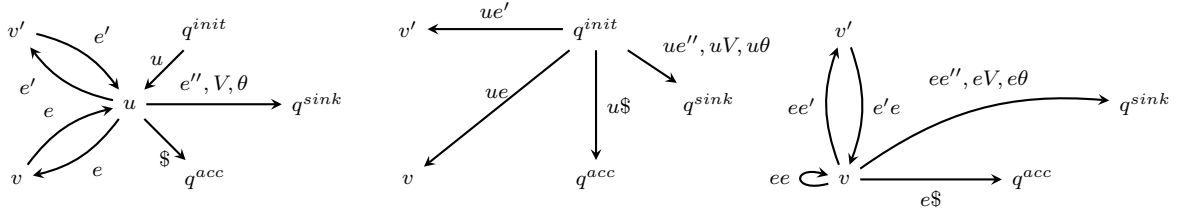

  The alphabet $\Sigma_G$ is given by $V \cup E \cup \{ \$\} \cup
  D$ where $D = \{1,2, \dots, \theta\}$. States of $M_G$ are $V \cup
  \{q^{init}, q^{sink},
  q^{acc}\}$. For simplicity, we use the same notation for
  $v$ as a vertex in $G$, $v$ as a letter in $\Sigma_G$ and
  $v$ as a state of $M_G$. The actual role of
  $v$ will be clear from the context. For every edge $e = (u,
  v)$, there are two transitions in the automaton: $u \xra{e}
  v$ and $v \xra{e} u$. For every $v \in
  V$, there are transitions $q^{init} \xra{v} v$ and $v \xra{\$}
  q^{acc}$. This automaton can be completed by adding all missing
  transitions to the sink state
  $q^{sink}$. Figure~\ref{fig:complexity-mg} (left) illustrates the
  neighbourhood of a state $u$. The notation
  $e''$ stands for any edge that is not incident on
  $u$; there is one transition for every such
  $e''$. Initial and accepting states are respectively
  $q^{init}$ and $q^{acc}$.
  Let
  $L_G(u)$ be the set of words that have an accepting run in
  $M_G$ starting from
  $u$ as the initial state. 
  If $u \neq v$, $L_G(u) = L_G(v)$ implies
  $(u,v)$ is an edge and there are no other edges outgoing either from
  $u$ or
  $v$. To avoid this corner case, we restrict the vertex cover problem
  to connected graphs of 3 or more vertices. Then we have
  $M_G$ to be a minimal DFA, with no two states equivalent. Here are
  two main ideas.

  \emph{Suppressing a state.} Suppose state $u$ of $M_G$
  is suppressed (i.e. $u$ is not in a suffix-tracking
  set). In Figure~\ref{fig:complexity-mg}, we show
  the induced transitions from $q^{init}$ and a vertex $v$. However,
  some of them will be redundant transitions: most importantly, the set
  of transitions $q^{init} \xra{u1, u2, \dots, u\theta} q^{sink}$ will
  be redundant bigger-suffix-transitions due to
  $q^{init} \xra{1, 2, \dots, \theta} q^{sink}$. Similarly,
  $v \xra{e1, e2, \dots, e\theta} q^{sink}$ will be removed. There are
  some more redundant bigger-suffix-transitions, like
  $v \xra{e e''} q^{sink}$ for some $e''$ that is not incident on $v$
  and $u$. So from each $v$, at most $2 |E|$ transitions are
  added. But crucially, after removing
  redundant transitions, the $\theta$ transitions from $u$
  no longer appear. If we choose $\theta$ large enough to compensate
  for the other transitions, we get an overall reduction in size by
  suppressing states.

  \emph{Two states connected by an edge cannot both be suppressed.}  Suppose $e = (u, v)$ is an edge. If $S$ is a set where
  $u, v \notin S$, then the transition $v \xra{e} u$ is not
  suffix-compatible: the simple word $ue$ from $q^{init}$ to $v$, when
  extended with $e$ gives the word $uee$; no suffix of $uee$ is a
  simple word from $q^{init}$ to $u$. We deduce that
  suffix-tracking sets in $M_G$ correspond to a vertex cover in $G$,
  and vice-versa.

  These two observations lead to a translation from minimum vertex
  cover to suffix-tracking sets with least number of states. Due to
  our choice of $\theta$, DSAs with smallest (total) size are indeed
  obtained from suffix-tracking sets with the least number of states. 
Let $k = (k' + 2) \times 2 \theta + (2 \theta - 1)$. Below, we elaborate these ideas in more detail and present the proof of the reduction.

\subparagraph*{Vertex cover $\le k'$ implies DSA $\le k$.}
Assume there is a vertex cover $\{v_1,
  \dots, v_p \}$ in $G$ with $p \le k'$. Let $S$ be the set of states in $M_G$ corresponding to $\{v_1,
  \dots, v_p \}$. Observe that $S \cup \{q^{init},q^{sink},q^{acc}\}$ is a suffix-tracking set; every transition is trivially suffix-compatible ($\forall q\xra{a}u, q\in S \text{ or } u\in S $). Well-formedness holds because $\forall p,q \in S, \a \in \spath{p}{q}{S}$ we have $|\a| \le 2$; this means $\forall q' \notin S, \beta \in \spath{p}{q'}{S}$, we have
  $\a \not \sfx \beta$ (since $|\beta|=1$). Hence the derived DSA will be equivalent to $M$.

  The derived DSA has $p + 3$ states, and transitions $q \xra{1, 2, \dots,
\theta} q^{sink}$ from each except for the $q^{sink}$ state. The
transitions on $q^{sink}$ are removable, and hence will be absent. All of this adds $(p+2) \times 2 \theta$ to the total size (edges +
label lengths). Apart from these, there are transitions with
labels of length at most $2$, over the alphabet $V \cup E \cup \$$. From each vertex, $v$, there are $|V|$ transitions to $q^{sink}$, one transition to $q^{acc}$ and at most $2|E|$ transitions to other states or $q^{sink}$. We
can choose a large enough $\theta$ (say $(|V| + |E|)^4$), so that the size of
these extra transitions is at most $2 \theta - 1$. 
  Hence, total size is $\le (p + 2) \times 2 \theta + (2 \theta - 1)$.

By assumption, we have $p \le k'$. Therefore, the size of the DSA 
is $\le (k' + 2) \times 2 \theta + (2 \theta - 1)
=k$.

  \subparagraph*{DSA $\le k$ implies vertex cover $\le k'$.}

  Let $\Aa$ be a DSA with size $\le k$. It may not be derived from $M_G$. However, by Proposition~\ref{prop:dsa-derivable-from-tracking-dfa} we know $\Aa$ is derived from a DFA $M$, the tracking DFA for $\Aa$. Moreover since $M_G$ is the minimal DFA, we know that $M$ will be a \emph{refinement} of $M_G$ (see Section~\ref{sec:preliminaries} for definition).

  Let us consider a pair of states $u$ and $v$ from $M_G$, such that the vertices $u,v \in G$ have an edge between them labeled $e$. The DFA $M$ will have two sets of states $u_1,u_2,\dots,u_i$ and $v_1,v_2,\dots,v_j$ that are language-equivalent to $u$ and $v$ respectively. Its initial state must have a transition on $v$ to one of $v_1,v_2,\dots,v_j$. Without loss of generality, let it be to $v_1$. Each of $v_1,v_2,\dots,v_j$ must have a transition on $e$ to one of $u_1,u_2,\dots,u_i$ (for equivalence with $M_G$) and vice-versa. Consider the run from the initial state on $ve^{i+j+1}$. At least one of the states among $u_1,u_2,\dots,u_i, v_1,v_2,\dots,v_j$ must be visited twice; consider the first such instance. The transition on $e$ that re-visits a state cannot be suffix-compatible w.r.t a set $S$, if none of these states are in $S$. For it to be suffix-compatible, the string $ve^k.e$ (from initial state to the first repeated state) must have its longest simple-word suffix go to the same state. Since $ve^k.e$ is not simple by itself, its longest suffix must consist entirely of $e$'s. But on any string of $e$'s, the initial state moves only to the sink state(s) and not to any of $u_1,u_2,\dots,u_i, v_1,v_2,\dots,v_j$. Hence any suffix-tracking set must contain at least one of these states, which maps to at least one of $u$ or $v$ in $G$. Every suffix-tracking set of $M$ therefore maps to a vertex cover $\{v_1, v_2, \dots, v_p\}$.

  Now we show that the size of this vertex cover is $\le k'$.
Each of the states picked in the suffix-tracking set will contribute to atleast $2 \theta$ in the total size, due to the $\theta$ transitions. We will also have these $\theta$ transitions from the initial and accepting states. Therefore, the total size is $(p + 2) \times 2 \theta + y$ for some $y > 0$. Hence $(p + 2) \times 2 \theta \le k$. This implies $p \le k'$: otherwise we will have $p \ge k'+1$, and hence $(p + 2) \times 2 \theta \ge (k' + 1 + 2) \times 2 \theta = (k' + 2) \times 2 \theta + 2 \theta > k$, a contradiction.


\section{Conclusion}
\label{sec:conclusion}

We have introduced the model of deterministic suffix-reading automata,
compared its size with DFAs and DGAs, proposed a method to derive DSAs
from DFAs, and presented the complexity of minimization. 
The work on DGAs~\cite{giammarresi1999deterministic} inspired us to look for methods to derive DSAs from
DFAs, and investigate whether they lead to minimal DSAs for a
language. This led to our technique of suffix-tracking sets, which
derives DSAs from DFAs. The technique imposes some natural conditions
on subsets of states, for them to be tracking patterns at each
state.
However, surprisingly, the smallest DSA that we can derive from the
canonical DFA need not correspond to the minimal DSA of a
language. We have shown that when restricting the syntax, our derivation method is able to generate a minimal DSA in the restricted class, starting from the canonical DFA. 

In this introductory work on DSAs, our goal has been to present the model, its motivations and establish ingredients for a deeper study of the model, both from a practical and a theoretical perspective. Recently, we have enhanced the DSA syntax to include a parallel composition operator $\parallel$ in the transitions~\cite{netys}. This helps succinct representation of the patterns when the alphabet is distributed across multiple components in a concurrent system. Using the enhanced model, we have provided a formal semantics and test generation algorithm (with guarantees) for an industrial formalism called Expressive Decision Tables (EDT)~\cite{DBLP:conf/date/VenkateshSKA14} developed by the industry partners in this work. This work shows a direct impact of the DSA model in an industrial setting.

From a theoretical perspective, there are plenty of problems to ponder about. We do not yet have an algorithm that can start with the canonical DFA, perform some operations on it and get a minimal DSA (in the general case, and not the strong DSA case as discussed in Section~\ref{sec:strong}). As we saw in Section~\ref{sec:minim-some-observ}, one might need to expand the canonical DFA to get the minimal DSA. It would be interesting to have a clear algorithm to identify this expansion on which the suffix-tracking techniques can be applied. 
Can we use our techniques to study minimality in terms of number of
states?  Closure properties of DSAs - can we perform the union,
intersection and complementation operations on DSAs without computing
the entire equivalent DFAs? What about Myhill-Nerode style congruences for DSAs? Recently a Myhill-Nerode theorem has been established for deterministic generalized automata~\cite{DBLP:conf/stacs/Cotumaccio24}.  To sum up, we believe the DSA model offers advantages in the
specification of systems and in also studying regular languages
from a different angle. The results that we have presented throw
light on some of the different aspects in this model, and lead to
many questions both from theoretical and practical
perspectives.


\bibliographystyle{alphaurl}
\bibliography{dsa}

\end{document}